\documentclass[envcountsect,a4paper]{llncs}

\spnewtheorem{notation}[note]{Notation}{\itshape}{}

\usepackage{amsmath}
\usepackage{amssymb}
\usepackage{stmaryrd}
\usepackage{proof}
\usepackage{graphicx}
\usepackage{xypic}
\usepackage{tikz}
\usetikzlibrary{calc}
\usepackage{enumerate}
\usepackage{mathrsfs}
\usepackage{mathtools}
\usepackage{ifthen}
\usepackage[
]{todonotes}
\usepackage{cmll}
\usepackage{wrapfig}
\usepackage{xspace}
\usepackage{adjustbox}
\adjustboxset{max width = \linewidth, center = \linewidth}
\usepackage{ebproof} 
\ebproofset{
	template={$\vdash \inserttext$},
	right label template={$\scriptstyle(\inserttext)$},
	left label template={$\scriptstyle\inserttext$},
	center=false,
}
\usepackage{color}
\usepackage{url}
\usepackage{hyperref}
\hypersetup{colorlinks,linkcolor={red},citecolor={blue},urlcolor={red}}
\usepackage[capitalize]{cleveref}

\def\eg{{\it e.g.}\xspace}
\def\ie{{\it i.e.}\xspace}
\newcommand*{\resp}{resp.}

\newcommand{\at}{\mathsf{at}}
\newcommand{\Dag}[1]{\mathtt{dag}(#1)}

\newcommand\todoc[2][]{\todo[color=pink!20,#1]{#2}} 
\newcommand\todocf[1]{\todoc[inline]{#1}}
\newcommand\todor[2][]{\todo[color=orange!20,#1]{#2}} 

\newcommand{\pink}[1]{{\color{magenta}{#1}}}

\newcommand{\RED}[1]{{\color{red}{#1}}}
\newcommand{\blue}[1]{{\color{blue}{#1}}}

\Crefname{equation}{Eq.}{Eqs.}
\Crefname{figure}{Fig.}{Figs.}
\Crefname{theorem}{Thm.}{Thm.}
\Crefname{thm}{Thm.}{Thm.}
\Crefname{proposition}{Prop.}{Prop.}
\Crefname{prop}{Prop.}{Prop.}
\Crefname{remark}{Rem.}{Rem.}
\Crefname{section}{Section}{Sections}
\Crefname{corollary}{Cor.}{Cor.}
\Crefname{cor}{Cor.}{Cor.}
\Crefname{definition}{Def.}{Def.}
\Crefname{property}{Property}{Properties}

\newcommand{\textdef}[1]{\textbf{#1}} 

\newcommand{\BigO}[1]{\mathbf{O}\left(#1\right)}
\newcommand{\Exp}[1]{\mathsf{exp}\left(#1\right)}

\newcommand{\rules}{\mathtt{R}}
\newcommand{\rr}{\mathtt{r}}
\newcommand{\Red}[1][]{\xrightarrow{#1}}	
\newcommand{\Reds}[1][]{\mathrel{\mathop{\xrightarrow{#1}}\!{}^\ast}} 
\newcommand{\FromReds}[1][]{\mathrel{\mathop{{}^\ast\xleftarrow{#1}}}} 

\newcommand{\MLL}{\ensuremath{\mathsf{MLL}}\xspace} 

\newcommand{\One}{\mathsf 1} 
\newcommand{\Bottom}{\mathsf \perp} 
\newcommand*{\orth}{^\perp} 

\newcommand{\FormA}{F} 
\newcommand{\FormB}{G} 

\newcommand{\Nm}[1]{\mathsf{Nm}(#1)}
\newcommand{\Names}{\textsf{Names}\xspace}

\newcommand{\Proj}[3][]{#2\vert_{#3}^{#1}} 

\newcommand{\Val}[1]{\mathsf{Val}(#1)} 

\newcommand{\Fprod}{\odot} 
\newcommand{\FProd}{\odot} 
\newcommand{\BigFProd}{\bigodot} 

\newcommand{\true}{\mathtt{t}}
\newcommand{\false}{\mathtt{f}}

\newcommand{\tand }{\texttt{ and }}

\newcommand{\sem}[1]{\llparenthesis {#1} \rrparenthesis}

\newcommand{\BN}{Bayesian Network\xspace}
\newcommand{\BNs}{Bayesian Networks\xspace}

\newcommand{\bnet}[1]{\mathscr{B}(#1)}

\newcommand{\netN}{\mathcal N}
\newcommand{\N}{\netN}
\newcommand{\netR}{\mathcal R}
\newcommand{\R}{\netR}
\newcommand{\netM}{\mathcal M}
\newcommand{\M}{\mathcal M}
\newcommand{\netD}{\mathcal D}
\newcommand{\D}{\netD}
\newcommand{\netG}{\mathcal G}
\newcommand{\netC}{\mathcal C}
\newcommand{\netS}{\mathcal S}

\newcommand{\node}{\mathtt{n}}

\newcommand{\tree}{\mathcal T}

\newcommand{\CutS}[1]{\mathrm{Cut}(#1)}
\newcommand{\Cuts}[1]{\mathrm{Cut}(#1)}

\newcommand{\size}[1]{\#(#1)}

\newcommand{\bX}{\mathbf X}
\newcommand{\bY}{\mathbf Y}
\newcommand{\bZ}{\mathbf Z}

\newcommand{\bW}{\mathbf W}

\newcommand{\bx}{{\mathbf x}}
\newcommand{\by}{{\mathbf y}}
\newcommand{\bz}{{\mathbf z}}

\newcommand{\pax}{box\xspace}
\newcommand{\paxs}{boxes\xspace}

\newcommand{\cut}{\mathsf{cut}}
\newcommand{\Ax}{\mathsf{Box}}

\newcommand{\Boxes}[1]{\mathsf{Boxes}(#1)}
\newcommand{\boxn}{\mathsf{box}}
\newcommand{\ax}{\mathsf{ax}}
\newcommand{\weak}{\mathsf{w}}
\newcommand{\cn}{\mathtt{c}}

\newcommand*{\defeq}{\stackrel{\text{def}}{=}}

\newcommand{\rv}{r.v.\xspace}
\newcommand{\rvs}{r.v.s\xspace}

\newcommand{\MLLB}{\ensuremath{\mathsf{MLL^{\mathtt B}}}\xspace}
\newcommand{\MLLAx}{\MLLB}

\newcommand{\At}[1]{\Nm{#1}}

\newcommand{\cutnet}{cut-net\xspace}

\newcommand{\Out}[2]{\mathsf{Split}_{#2}({#1})}

\newcommand*{\pulledout}[2]{\ensuremath{\Out{#1}{#2}}}

\newcommand{\Width}[1]{\mathrm{Width}(#1)}
\newcommand{\pol}[1]{\mathit{Pol}({#1})}

\newcommand{\Xn}{X^-}
\newcommand{\Yn}{Y^-}
\newcommand{\Zn}{Z^-}
\newcommand{\X}{X^+}
\newcommand{\Y}{Y^+}
\newcommand{\Z}{Z^+}

\newcommand{\bn}{\mathcal B}

\newcommand{\cpts}{CPT's\xspace}
\newcommand{\cpt}{CPT\xspace}

\newcommand{\Bpn}{Bayesian proof-net\xspace}
\newcommand{\Bpns}{Bayesian proof-nets\xspace}
\newcommand{\pn}{proof-net\xspace}
\newcommand{\pns}{proof-nets\xspace}
\newcommand{\BPN}{\textsf{bpn}\xspace}
\newcommand{\BPNs}{\textsf{bpn's}\xspace}
\newcommand{\phisem}[1]{\sem {#1}}
\newcommand{\pt}{proof-tree\xspace}
\newcommand{\pts}{proof-trees\xspace}

\newcommand{\bbox}{\mathtt{b}}

\newcommand{\Pa}[1]{\mathsf{Pa}(#1)}

\newcommand{\ft}[2]{  \overset{#2}{#1}  }

\newcommand{\ftone}{ \mathbf{1} }

\newcommand{\x}{ \mathtt{x}}
\newcommand{\y}{ \mathtt{y}}

\newcommand{\w}{\mathtt {w}}

\newcommand{\xs}{\bx}
\newcommand{\ys}{\by}

\tikzset{
    named edge/.style={draw=none,inner sep=1pt,outer sep=2pt,rectangle,minimum size=0pt,auto},
    named path/.style={named edge,sloped},
}

\definecolor{lasallegreen}{rgb}{0.03, 0.47, 0.19}

\newcommand{\Main}[1]{\mathsf{Main}(#1)}

\newcommand{\CPT}{CPT\xspace}

\newcommand{\Net}{\N}

\newcommand{\ProofN}{\R}

\newcommand{\arule}{\mathtt{a}}

\newcommand{\aphi}{\Phi}

\newcommand{\commenti}{0}
\newcommand{\condinc}[2]{\ifthenelse{\equal{\commenti}{0}}{#1}{\blue{#2}} }
\newcommand{\version}{0}
\newcommand{\SLV}[2]{\ifthenelse{\equal{\version}{0}}{#1}{ \RED{#2}}}

\begin{document}

\title{Bayesian Networks and Proof-Nets:\texorpdfstring{\\}{ }the proof-theory of Bayesian Inference}

\author{%
	R{\'e}mi Di Guardia%
	\and
	Thomas Ehrhard%
	\and
	J{\'e}r{\^o}me Evrard%
	\and
	Claudia Faggian%
}

\institute{IRIF, Université Paris Cité - CNRS}

\maketitle

\begin{abstract}

We study the correspondence between Bayesian Networks and graphical representation of proofs in linear logic.
The goal of this paper is threefold:
to develop a proof-theoretical account of Bayesian inference (in the spirit of the Curry-Howard correspondence between proofs and programs),
to provide compositional graphical methods,
and to take into account computational efficiency.

We exploit the fact that the decomposition of a graph is  more flexible than that of a \pt, or of a type-derivation, even if 
 compositionality becomes  more challenging. 


\keywords{Proof-Nets \and \BNs \and Linear Logic}
\end{abstract} 

\section{Introduction}



\BNs~\cite{Pearl88} are a prominent  tool for probabilistic reasoning, as they are able to  express 
large probability distributions  in a way which is  compact  (\emph{factorized representation}) and which allows for efficient inference (\emph{factorized computation}).


In this paper we propose a proof-theoretical account of \BNs and of  factorized inference, via    Linear Logic~\cite{ll,synsem}, namely its 
graph syntax---proof-nets---which embeds \emph{cut-elimination} (and its dual, cut-expansion) as a graph-rewriting process.
The benefits of an approach  bringing together \BNs and  the rich toolbox of a proof-theory which is intrinsically resources-aware,  is to   accommodate both compositional, graphical reasoning, and efficient computations. 

In  the setting of categorical probability theory, a rich body of  work pioneered by~\cite{CoeckeS12,JacobsZ16,JacobsZ} has already  disclosed  a deep connection between \BNs and logic, bringing  a  compositional  and structured approach into probabilistic reasoning (an in-depth treatment is in~\cite{JacobsBook}). A key ingredient  here is  the   graphical calculus
of string diagrams, which  provides a 
uniform  and expressive language for formal  diagrammatic reasoning,  interpreted into monoidal categories.

%
%

One aspect which is generally missing in  the categorical approach is  the space and time consumption of  probabilistic reasoning, which however is a strong  motivation behind the introduction~\cite{Pearl86} and development of \BNs. 
Recent work rooted in  Linear Logic~\cite{EhrhardFP23,EhrhardFP25,popl24} is bringing   a cost-aware perspective into semantics.   In particular, \cite{popl24} introduces a semantical framework that integrates the efficiency of Bayesian networks with the  compositional nature of type systems, by equipping  (a linear logic variant of) $\lambda$-terms  with a semantics based on \emph{factors}, the very same mathematical structures underlying  BNs and their inference algorithms.  

With the  proof-nets formalism, we aim at unlocking  the best of both approaches: efficiency  and graphical reasoning. Along the way, we gain a 
foundational understanding, where the cornerstone of proof-theory, namely \emph{cut-elimination} (and its dual, cut-expansion), has a prominent role.



\paragraph{Factorized representation and factorized computation.}

A joint distribution is a \emph{global} function involving many variables. 
A common way to deal with a  complex function is to  factorize it   as a product of \emph{local} functions, each of  which depends on a subset of the variables. In the setting of \BNs, factorization   involves both the representation and the  algorithms for (exact) inference.

%
In proof-theory, the natural way to \emph{factorize  a proof} in smaller components is to factorize  it in sub-proofs which are then \emph{composed via cut}. We follow exactly this  way.
\SLV{}{
	We then show that---remarkably---there is a tight correspondence between the \emph{decomposition of  a proof-net},
	and the well-known decomposition of Bayesian Networks into \emph{clique trees}. The inductive \emph{interpretation} of a factorized proof-net closely corresponds to \emph{message passing over clique trees} (\Cref{sec:MP}).
	The correspondence turns out to be so tight, that the respective computational costs are similar.}
In section \ref{sec:factorization},  we \emph{factorize} a proof-net in the composition of smaller nets, whose interpretation has a smaller cost (\Cref{sec:turbo}).

\paragraph{Graphical reasoning.} In \Cref{sec:conditional_indepenence} we demonstrate graphical reasoning by  providing 
an extremely simple (diagrammatic) proof of the soundness of   d-separation, a well-known graphical criterion to establish conditional independence among random variables.

\paragraph{Graph Decomposition and Graph Compositionality} Both to achieve an efficient factorization  and in graphical reasoning, we exploit the fact that the decomposition  of a graph  is much more flexible than that of a \pt (or of a type-derivation). The challenge we face is that (efficient) compositionality is less immediate (\Cref{th:compositional}),  and  associating  a \pt  to the graph decomposition is also non trivial (\Cref{sec:typing_cut_net}).

\paragraph{Cost.} In general, the cost of \emph{actually computing} the semantics explodes when taking a categorical approach, because  the   product \emph{$\otimes$}  behaves like  the tensor product of matrices (see~\cite{popl24} for examples); 
computing the semantics of $n$ binary random variables easily leads to intermediate 
computations whose size is much larger than $2^n$, the size of the \emph{full} joint distribution. The efficiency of a factors-based semantics (as in \BNs, and as in~\cite{popl24}) lies in a  definition of product (the \emph{factors product}) which \emph{is not} behaving like the tensor product of matrices.

\SLV{}{
	\paragraph{Factors-based vs   Tensor-based semantics.}\label{sec:on_product} 
	The \emph{raison d'être} of    \BNs is making probabilistic reasoning feasible by avoiding to compute a  full joint probability distribution. 
	A key ingredient are the operations on \emph{factors}, the  mathematical tools on which the computations are based. 
	
	The efficiency of a factors-based semantics (as in~\cite{popl24}) lies in a  definition of product (the \emph{factors product}) which    \emph{is not}  behaving  like the tensor product  $\otimes$ of matrices,   as in a categorical 
	setting. Of course, the product of factors can be simulated via  $\otimes$, but at a higher cost, possibly producing large  intermediate computations, as described in \Cref{ex:large} (from~\cite{popl24}). Let us first observe  the behaviour of the two products.
	\begin{example}[Not a tensor product]\label{ex:tensor}
		Consider two CPT  $\phi^{X_1} = \Pr(X_1|Y_1, Y_2,Y_3)$ 
		and $\phi^{X_2} = \Pr(X_2|Y_1, Y_2,Y_3)$, where two distinct   \rvs ($X_1$ and $X_2$, respectively) are conditioned to the \emph{same} set of \rvs  $Y_1, Y_2,Y_3$. If all variables are binary,
		each  $\phi^{X_i}$ corresponds to a matrix of size $2^4$. Computing the \textbf{tensor product}  $\phi^{X_1} \otimes \phi^{X_2}$ of the two matrices, requires to \emph{compute and store}  $2^4\cdot 2^4=2^8$ entries. In contrast,  
		the \textbf{factor product} $\phi^{X_1} \FProd \phi^{X_2}$  computes $2^5$ entries. 
	\end{example}
	The following example consider a  \BN representing a joint distribution over 5 \rvs, hence  of size $2^5$.  	 A tensor-based semantics may  perform a computation  that  is \emph{larger than  \emph{full} joint distribution}.  This is striking.
	\begin{example}[Cost of computing the semantics]\label{ex:large}
		Consider a \BN $\bn$ \RED{FIG??}  over  5 binary \rvs $X_1, X_2,Y_1, Y_2,Y_3$,  where  both $X_1$ and $X_2$ have the same parents   $Y_1, Y_2,Y_3$. Let   $ \phi^{X_1}$ and $ \phi^{X_2}$ (as in \Cref{ex:tensor} )  be the corresponding conditional probabilities.
		The categorical interpretation of $\bn$ will  (in general) pass via  $\phi^{X_1} \otimes \phi^{X_2}$, which  requires to compute and store $2^8$ values.  
	\end{example}
}

%

\subsubsection{Related work.} We  have already mentioned above   the most relevant literature related to our work.
 Here we briefly  comment on the papers which are more technically  related.
 
The syntax of \pn with boxes is introduced in~\cite{EhrhardFP23}, however while the syntax allows the encoding of \BNs, it does not characterize them (a \pn does not necessarily correspond to a \BN). We are inspired by that paper,  which  advocates cost-awareness in the computation of the  semantics. The approach in~\cite{EhrhardFP23} is to use inference algorithms to efficiently compute the denotation of any proof.
In this paper, we follow a different directions: we  focus on the proof-nets, and on inference-as-interpretation where the data structure which supports the computations is the \pn itself.

 We adopt the factor-based semantics introduced in~\cite{popl24}; while we inherit most of their results, our  notions of ``component'' and ``compositionality'' are  stronger (how we explain in \Cref{sec:sem}), yielding to  \Cref{th:compositional}.


\section{Background}
\subsection{An informal example}\label{sec:informal}

A Bayesian Network consists of two parts: a qualitative component, given by a directed acyclic graph, and a quantitative component, given by conditional probabilities.
This bears a striking resemblance with proof-nets of Linear Logic (LL)~\cite{ll,synsem}: proof-nets are a graph representation of the syntax and   cut-elimination of LL proofs, to which can naturally be associated a quantitative interpretation (see, \eg,~\cite{EhrhardT19}).

\begin{figure}
	\centering
	\begin{minipage}[b]{0.45\linewidth}
		\includegraphics[page=2,width=1\linewidth]{FIGS/Fig_BN}
		\caption{Example of \BN}
		\label{fig:BNrain}\label{fig:BN}
	\end{minipage}
	\begin{minipage}[b]{0.45\linewidth}
		\includegraphics[page=9,width=1\linewidth]{FIGS/Fig_Example}
		\caption{Proof-net $\R_D$}
		\label{fig:ABCDE_D}\label{fig:BNtoPN}
	\end{minipage}
\end{figure}

\subsubsection{An example of Bayesian Network.}


Let us start with a classical example (from~\cite{DarwicheBook}). We want
to model  the  fact that the lawn being  Wet in the morning may depend on either Rain or the Sprinkler being on. In turn, both Rain and the regulation of the  Sprinkler  depend on the Season. Moreover, Traffic Jams are correlated with Rain.
The dependencies between these  five variables (shorten  into $ A,B,C,D,E $) are pictured in  \Cref{fig:BNrain},  where 
the strength of the dependencies is quantified by 
conditional probability tables.
We wonder: did it rain last night? 
Assuming we are in DrySeason, our \emph{prior} belief is that Rain happens with probability $ 0.2 $. However, if we observe that the lawn is Wet, our confidence will increase.  The updated belief is called \emph{posterior}.
The model in \Cref{fig:BNrain} allows us to infer the \emph{posterior probability} of Rain, given the evidence, formally $\Pr(C=\true\mid D=\true)$, or to infer how likely is it that the lawn is wet, \ie\ infer the \emph{marginal} $\Pr(D=\true)$.
Conditional probabilities and marginals are typical queries which can be answered by \emph{Bayesian inference}, whose core is Bayes conditioning.

%
So, to obtain the posterior $\Pr(C=\true\mid D=\true)$, we have to compute:
\begin{itemize}
	\item the marginal $\Pr(C=\true,D=\true)$, which can be  obtained by summing out the other variables from the joint probability (\emph{marginalization});
	\item normalize by the marginal probability $ \Pr(D=\true)$ of the evidence, which is computed in a similar way.
\end{itemize}



\condinc{}{
	\[	\begin{array}{|c|c|c|}
		\hline
		D&	R	&Pr(R|D)  \\
		\hline
		\true&	\true & 0.2 \\
		\true&	\false	& 0.8\\
		\hline
		\false	&	\true & 0.75 \\
		\false	&	\false	& 0.25\\
		\hline	
	\end{array}
	\quad
	\begin{array}{|c|c|c|}
		\hline
		D&	R	&Pr(R|D)  \\
		\hline
		\true&	\true & 0.8 \\
		\true&	\false	& 0.2\\
		\hline
		\false	&	\true & 0.1 \\
		\false	&	\false	& 0.9\\
		\hline	
	\end{array}
	\quad
	\begin{array}{|c|}
		\hline
		\dots\\
		\hline	
	\end{array}
	\begin{array}{|c|c|c|}
		\hline
		R&	T	&Pr(T|R)  \\
		\hline
		\true&	\true & 0.7 \\
		\true&	\false	& 0.3\\
		\hline
		\false	&	\true & 0.1 \\
		\false	&	\false	& 0.9\\
		\hline	
	\end{array}
	\]
} 

\condinc{}{
	{\ \[\begin{array}{c  c c}
			\begin{array}{|c|c|}
				\hline
				R	&Pr(R)  \\
				\hline
				\true & 0.2 \\
				\false	& 0.8\\
				\hline
			\end{array}
			\quad
			\begin{array}{|c|c|}
				\hline
				S	&  Pr(S) \\
				\hline
				\true & 0.1 \\
				\false	& 0.9\\
				\hline
			\end{array}
			\quad 
			\begin{array}{|c|c|c|c|}
				\hline
				R	& S  &  W& Pr(W|R,S)\\
				\hline 
				\true	& \true & \true & 0.99 \\
				\true	& \true & \false & 0.01 \\
				\hline
				\true	&  \false & \true  & 0.7\\
				\true	&  \false & \false  & 0.3\\
				\hline
				\false	& \true &  \true& 0.9  \\
				\false	& \true &  \true& 0.1  \\
				\hline
				\false	& \false & \true & 0.01\\
				\false	& \false & \false & 0.99\\
				\hline
			\end{array}
		\end{array}
		\]
	}
	
}


\subsubsection{An example of Proof-Net.}


	Proof-nets~\cite{ll,synsem} are a graphical representation of  Linear Logic sequent calculus proofs.
	The  \BN in \Cref{fig:BNrain} can be encoded in \emph{multiplicative}  linear logic (\MLL) as the  proof-net in  \Cref{fig:ABCDE_D}. 
	
	The nodes $\bbox^A, \bbox^B, \bbox^C, ...$ are  boxes, storing semantical information -- for example, the same conditional probability distributions as in \Cref{fig:BNrain}.
	The  edges of the \pn connect  together and transfers  such information. 
	Notice that edges in \Cref{fig:ABCDE_D} are labelled by atomic formulas, either positive or negative.
	The flow of information in a proof-net (its geometry of interaction~\cite{goi0,goi1,multiplicatives}) follows the polarity of atoms, going downwards on positive atoms (which carry ``output information'')
	and upwards on negative atoms  (``input information'').
	\MLL allows for \emph{duplication of the information carried by negative atoms}---via the   $\cn$-node (contraction). The $\w$-nodes (weakening) \emph{block} 
	the information.
	
	The \pn 
	$\R_D$ in \Cref{fig:ABCDE_D} has a single conclusion $D^+$.
	It represents a marginal probability $\Pr(D)$. 
	Let us informally see how we can draw a sample from $\R_D$.
	\begin{example}[Sampling from a \pn]
		The only  node which is \emph{initial} (w.r.t the flow)  is $\bbox^A$, which   outputs a sample from $\Pr(A)$. 	
		Assume this value is $\true$. This sample is \emph{propagated} via the $\cn$-node (\emph{contraction})  to both $\bbox^C$ and $\bbox^B$. When $\bbox^C$  receives $\true$,  it samples a value from the distribution 
		$\Pr(C|A=\true)$ (for example, $\true$ with probability $0.2$).   Assume the output of $\bbox^C$ is    $\true$, and that the output of   $\bbox^B$ (obtained with a similar procedure) is $\false$. When the box  $\bbox^D$
		receives these values, it  outputs a sample from $\Pr(D|C=\true,B=\false)$, which is a sample from the marginal $\Pr(D)$.
		
		\condinc{}{Notice  that all non initial boxes have  to wait for all their inputs (a form of synchronization \cite{popl17}).}
	\end{example}
	
	\condinc{}{
		We are now  able to informally see how we can draw a sample from a \pn---so clarifying the role of both \emph{weakening} and \emph{contraction}. 
		
		\begin{example}[Sampling from a \pn] Let us consider the \pn  of single conclusion $D$
			in \Cref{fig:ABCDE2}. Assume that each box stores a conditional probability  (take the same CPT's as in \Cref{fig:BNrain}).
			The way information flows in the graph is captured by the 
			polarized order  (as standard in  Linear Logic).
			Nodes which are not initial  need to wait for inputs.
			
			The only initial node  in this \pn is $\bbox^A$, which  outputs  a sample from $\Pr(A)$. 	
			Assume this value is $\true$. This sample is \emph{propagated} via the $\cn$-node (\emph{contraction})  to both $\bbox^C$ and $\bbox^B$. When $\bbox^C$  receives $\true$,  it samples from
			$\Pr(C|B=\true)$ (which according to the CPT in \Cref{fig:BNrain} is $\true$ with probability $0.2$).  Assume the output of $\bbox^C$ is    $\true$, and that the output of   $\bbox^B$ (obtained with a similar procedure) is $\false$. When the box  $\bbox^D$
			receives these values (notice it needs to wait for both inputs), it  outputs a sample from $\Pr(D|C=true,D=false)$, which is a sample from the marginal $\Pr(D)$.
		\end{example}
	}

\SLV{}{	Every \pn is a graph  image of a sequent calculus proof-tree. The following 
	is an example which corresponds to this \pn.  
	{\tiny \[\infer[\cut (A)] {\quad\vdash D^+}
		{\infer[\bbox^A ]{\vdash A^+}{}&
			\infer[\cut(C)/\cn]{\vdash A^-,D^+ }
			{\infer[\bbox^C]{\vdash A^-,C^+}{} & \infer[\cut (B)]{\vdash A^-,C^-,D^+}{\infer[\bbox^B]{\vdash A^-, B^+}{} & 
					{
						\infer[cut(E)]{\vdash B^-,C^-,D^+}{\infer[mix/\cn]{\vdash B^-,C^-,D^+,E^+}
							{\infer[\bbox^D]{\vdash B^-,C^-,D^+}{} & \infer[\bbox^E]{\vdash C^-,E^+}{} }  
							& \infer[\w]{\vdash E^-}{} }
					}
				}
			}
		}
		\]}
	}

	\subsection{The language of Bayesian Reasoning}\label{sec:rvs}
	Let us  briefly revise the language of Bayesian modeling that we use.
	For more details, we refer to~\cite{DarwicheHandbook} for a concise presentation, and to standard texts for an exhaustive treatment~\cite{Pearl88,DarwicheBook,NeapolitanBook}.
	
	Bayesian methods provide a formalism for reasoning about partial beliefs under
	conditions of uncertainty. Since we   cannot determine for
	certain  the state of some features of interest,  we settle for determining how \emph{likely} it is that a particular feature is in a particular state. 
	Random variables (\rvs) represent features  of the system being modeled.   
	A \rv  can be seen as a \emph{name} for an atomic proposition (\eg ``Wet'') which assumes values from a set of states (\eg $\{\true,\false\}$).  The system  is  modeled as \emph{a joint probability distribution} on all possible values of the variables of interest -- each instantiation  representing a possible state of the system.

	\SLV{}{
		\begin{example} The  canonical sample space sketched in \Cref{fig:marginalization}  consists of $2^4$ tuples (only some entries are displayed); to each tuple  $\bx$ 
			is associated a probability.
			The event $(R=\true) $  contains $2^3$ tuples, 
			the event $(R=\true,W=\true)$ contains $2^2$ tuples, and has probability $0.33$.
		\end{example}
	}

	\paragraph{Random Variables.}
	
	We adopt the standard convention of capital letters  (\eg $X,Y$) denoting random variables,   while lowercase letters (\eg $\x,\y$) are particular fixed \emph{values} of those variables, \ie $\x$ an instantiation of the \rv $X$. As standard,  $\Pr (x)$ stands for $\Pr(X=x)$.  
	%
	For simplicity,  random variables are here  taken to be binary, with 
	$\Val{X}=\{\true,\false\}$ -- the generalization to any discrete \rv is straightforward.

	A  finite set of \rvs $\bX=\{X_1, \dots, X_n\}$ defines a ``compound'' \rv whose value set $\Val {\bX}$   is the Cartesian  product 
	$  \Val {X_1} \times \dots \times \Val {X_n} $. A tuple   in the   cartesian product is an instantiation of $\bX$, denoted by  $\xs$.
	%
	%

	\paragraph{Names.}
	Given a countable set $\Names$, we associate to each name $X\in\Names$ 
	a \emph{finite} set of values, denoted by  $\Val {X}$ 
	(typically $\Val{X}=\{\true, \false\}$). From now on, we silently identify a name  $X$ with the pair $(X, \Val {X})$, 
	which effectively defines a  \emph{random variable}.

	\paragraph{Queries to a probabilistic model.}
	Given  a probabilistic model $\Pr(\bX)$, and some variable of interest  $\bY \subset\bX$, a typical query is  the marginal probability  $\Pr(\bY)$ or $\Pr(\bY|\bX=\bz)$, the probability of $\bY$, given  evidence $\bz$ for $\bZ$.
	The former  is obtained by \emph{summing out} the \rvs which are not relevant; the latter
	via  \textbf{Bayes's rule}, which leads us to compute two \textbf{marginals}, $\Pr(\bY, \bZ)$ and ${\Pr(\bZ)}$, because	
	$\Pr(\bY|\bZ)= 
	\frac{\Pr(\bY, \bZ)}{\Pr(\bZ)}  $.
	%

\subsection{Bayesian Networks, formally}
The challenge of Bayesian  reasoning is that 
a joint probability distribution is usually too large to be feasibly represented explicitly. For example, a joint probability distribution over $64$ binary  random variables, corresponds to $2^{64}$ entries.

	\BNs  represents a joint probability distribution in a compact way via a \emph{factorized representation}, obtained by associating with each  node in the DAG a
conditional probability table (\cpt). The semantics of BNs is revised  in \Cref{sec:BN}.

	A \emph{\BN} $\bn$ over the set of \rvs $\bX$ is a pair $(\mathcal G,\aphi)$ where:
	\begin{itemize}
		\item $\mathcal G$ is a directed acyclic graph (DAG) over the set of nodes $\bX$. 
		\item 	$\aphi $   assigns, to each variable $X\in \bX$ a \emph{conditional probability table (a \cpt)} for $X$ given its parents.
	
	\end{itemize}
%


\section{Bayesian Proof-Nets}
Bayesian proof-nets are built on the graph syntax of Multiplicative Linear Logic, extended with probabilistic boxes, which encode   conditional probabilities tables (\CPT's).
We revise the syntax of \MLL \pns,  to define Bayesian proof-nets in \Cref{sec:bpn} and their graph-rewriting rules.

Finally,  we show (\Cref{sec:bpn_BN}) that the correspondence between   Bayesian proof-nets and \BNs sketched in \Cref{fig:BN} and \ref{fig:BNtoPN}   sound and complete.

\subsection{Multiplicative Linear Logic (with probabilistic boxes)}\label{sect:mll}

We assume given a countable set 
of symbols, denoted by metavariables $X, Y, Z$.
The grammar of formulas is that of the multiplicative fragment of linear logic (\MLL):
\begin{equation*}
	\FormA,\FormB ::=
	X^+
	\mid X^-
	\mid\One
	\mid\Bottom
	\mid\FormA\otimes\FormB
	\mid\FormA\parr\FormB
\end{equation*}
We call $X^+$ (\resp\ $ X^-$) a \textdef{positive} (\resp\ \textdef{negative}) \textdef{atomic formula}.
\textdef{Negation} $(\cdot)\orth$ is defined inductively by
$(X^+)^\perp\defeq X^-$,
$( X^-)^\perp\defeq X^+$, 
$\One^\perp \defeq\Bottom$,
$\Bottom^\perp \defeq\One$,
$(\FormA\otimes\FormB)^\perp \defeq\FormA^\perp\parr\FormB^\perp$ and
$(\FormA\parr\FormB)^\perp \defeq\FormA^\perp\otimes\FormB^\perp$.

A \textdef{sequent} is a finite sequence $\FormA_1,\dots, \FormA_n$ of formulas.
Capital Greek letters $\Gamma,\Delta,\dots$ vary over sequents.

\paragraph{Calculus.}
The calculus $\MLLB$ is an extension of multiplicative Linear Logic $\MLL$, introduced in~\cite{EhrhardFP23} (and similar to~\cite{GuerriniM01}).
Beside the \MLL rules, there is a rule for a generalized axiom, called $\boxn$ (which we discuss  later):
\[\infer[\boxn]{\vdash \Yn_1, \dots, \Yn_k, \X }{}  \quad (k\geq 0)\]
All formulas in the conclusion of a $\boxn$-rule are atomic, with exactly one -- the \textdef{main} conclusion $\X$ -- being positive and such that $X \neq Y_i~(\forall i \leq k)$.

In Linear Logic, proofs admit both a sequent calculus syntax, in the form of \textdef{\pts} (\ie trees of sequent calculus rules), and a graph syntax, called \textdef{\pn}. 
Here we are interested in the latter, described in \Cref{fig:pn}.
The corresponding sequent calculus is standard -- for completeness' sake its rules are given in \Cref{app:sequent} together with their images as \pns.
The grammar	in \Cref{fig:pn} adds a new kind of nodes to standard \pns, namely $\boxn$, which corresponds to the new $\boxn$-rule.
We also call boxes the $\boxn$-nodes.

\begin{definition}[\MLLB Proof-Net]\label{def:pn}
An \MLLB proof-net is a \emph{typed graph} $\Net$ (point \ref{item:pn:1}) which satisfies conditions \ref{item:pn:2} (pending edges) and \ref{item:pn:3} (correctness).
\begin{enumerate}
\item\label{item:pn:1} \textbf{Typed graph}\footnote{A typed graph is often called a proof-structure or a module in the literature.}.
$\Net$ is a partial\footnote{We admit pending edges.} graph whose edges are labelled by \MLL formulas.
The alphabet of nodes is given in \Cref{fig:pn}: each node is labelled by a rule.
The edges incident to a node $\node$ need to respect the typing conditions given in the grammar, and are classified either as \emph{premises} of $\node$ (depicted above $\node$) or as its \emph{conclusions} (depicted below $\node$).
We further require that an edge is the conclusion of at most one node and the premise of at most one node.
	
\item\label{item:pn:2} \textbf{Pending edges}.
Every edge in $\Net$ is the conclusion of some node\SLV{}{ (but not necessarily a premise)}.
The edges which are premise of no node (\ie\ the pending edges) are called the \emph{conclusions} of $\Net$.
	
\item\label{item:pn:3} \textbf{Correctness}.
Every cycle in $\Net$ uses at least two premises of a same $\parr$-node or of a same $\cn$-node.
\end{enumerate}
\end{definition}

We write $\R:\Delta$ for a proof-net $\R$ of conclusions $\Delta$. 
A \textdef{sub-net} of a \pn $\Net$ is a sub-graph of $\Net$ which is itself a proof-net.

\begin{figure}
\begin{minipage}[t]{0.48\linewidth}
	\fbox{\includegraphics[page=2,width=1\linewidth]{FIGS/Fig_Nets}\vspace*{-8pt}}
	\caption{Grammar of nodes. For a $\boxn$-node, $k\geq 0$ and $X \neq Y_i~(\forall i \leq k)$}
	\label{fig:pn}
\end{minipage}
\hfill
\begin{minipage}[t]{0.48\linewidth}
	\fbox{\includegraphics[page=3,width=1\linewidth]{FIGS/Fig_Nets}\vspace*{-8pt}}
	\caption{Reduction Rules}
	\label{fig:rewriting}
\end{minipage}
\end{figure}

The correctness condition (point \ref{item:pn:3}) ensures the graph is the image of a \pt\ from sequent calculus.

\begin{theorem}[Sequentialization]\label{th:sequentialisation}
Every proof-net $\Net$ is the image of a \pt\ from sequent calculus, \ie\ it can be inductively generated by the rules in \Cref{fig:sequentialization}.
\end{theorem}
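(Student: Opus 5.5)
We prove the statement by induction on the number of nodes of $\Net$, following the classical splitting argument for \MLL proof-nets, adapted to the presence of mix, boxes, contractions and weakenings. The rules in \Cref{fig:sequentialization} presumably include the \MLL rules ($\otimes$, $\parr$, $\One$, $\Bottom$), the $\boxn$ axiom, $\cut$, the contraction and weakening on negative atoms, and a mix rule, since the correctness condition of \Cref{def:pn} is acyclicity without a connectedness requirement.

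\textbf{Disconnected case.} If $\Net$ is not connected, each connected component is itself a proof-net: typing, pending edges and correctness are all inherited by sub-graphs closed under adjacency. We apply the induction hypothesis to each component and combine the resulting \pts with mix. From now on we assume $\Net$ is connected.

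\textbf{Terminal nodes.} Call a node \emph{terminal} if all its conclusions are conclusions of $\Net$.
\begin{itemize}
\item If some terminal node is a $\parr$-node or a $\cn$-node, we remove it. Its premises become pending, so condition \ref{item:pn:2} still holds, and removing a node cannot create cycles. The induction hypothesis applies, and we conclude with the $\parr$ rule or the contraction rule respectively.
\item If $\Net$ consists of a single $\boxn$, $\One$, $\Bottom$ or $\w$ node, we conclude with the corresponding rule. A $\Bottom$- or $\w$-node with no premises is always removable as a disconnected piece after mix; in practice it already falls under the disconnected case whenever other nodes are present.
\end{itemize}

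\textbf{Splitting case.} Otherwise every terminal node is a $\otimes$-node or a $\cut$-node; here a $\cut$ is treated as a $\otimes$ with no conclusion, so we consider it as terminal. We must find a \emph{splitting} one, namely a node whose removal disconnects $\Net$ into exactly two components, each containing one of its premises. Once such a node is found, both components are proof-nets with fewer nodes, and the induction hypothesis together with the $\otimes$ or $\cut$ rule concludes.

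The existence of a splitting $\otimes$ or $\cut$ is the main obstacle. We adapt the standard argument (Danos' or Girard's splitting tensor lemma via \emph{empires}/kingdoms). We switch each $\parr$-node and each $\cn$-node by keeping only one of its premises. By correctness, every such switching graph is acyclic, since any cycle in it uses at most one premise per $\parr$/$\cn$-node. Hence in every switching, removing a terminal $\otimes$ separates its two premises.

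To get a uniform split across all switchings, we proceed as follows.
\begin{enumerate}
\item Define the empire of a premise $e$ as the intersection, over all switchings, of the component of $e$ after cutting at the $\otimes$.
\item Show that empires are sub-nets. The delicate point is that they are closed under contractions and $\parr$-nodes whose premises both lie inside.
\item Pick a terminal $\otimes$/$\cut$ whose empires are maximal with respect to inclusion.
\item Show that its two empires cover the whole net. Otherwise, a nonterminal $\parr$/$\cn$ node hanging below would have to lead, through the only remaining terminal nodes (all $\otimes$/$\cut$), to another $\otimes$ with strictly larger empires, contradicting maximality.
\end{enumerate}
The care needed concerns the $\cn$-nodes, which behave like $\parr$ for correctness but have atomic negative premises, and the $\boxn$-nodes, which behave like generalized axioms and so pose no extra difficulty.
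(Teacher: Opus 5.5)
\textbf{There is a genuine gap in your splitting step.} The empire argument you invoke is the one for \MLL\ without mix, where every switching graph is a \emph{tree}. Here correctness only requires acyclicity, so switchings are forests. Connectedness of $\Net$ as a graph does not make them connected. With your definition (intersection over switchings of the component of the premise), your steps~2 and~4 then fail.

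\emph{Counterexample.} Take three boxes $\bbox^A,\bbox^B,\bbox^C$ with no negative conclusions, a $\parr$-node $p$ on $A^+,B^+$, and a $\otimes$-node $t$ on $A^+\parr B^+$ and $C^+$.
\begin{itemize}
\item The underlying graph is a tree, so this is a proof-net (even a positive \BPN). It is connected and $t$ is its only terminal node, so you are in your splitting case.
\item The switching of $p$ keeping $A^+$ leaves $\bbox^B$ as an isolated component, and the component of the left premise of $t$ is $\{p,\bbox^A\}$. The other switching gives $\{p,\bbox^B\}$.
\item The empire is therefore $\{p\}$, which is not a sub-net, so step~2 fails.
\item The two empires together with $t$ miss $\bbox^A$ and $\bbox^B$, although $t$ is trivially maximal and does split, so step~4 fails. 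Your justification of step~4, producing another $\otimes$ with a larger empire, has nothing to work with.
\end{itemize}

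\textbf{Where your argument does work.} It is sound exactly for nets all of whose switchings are connected. Write $c(\cdot)$ for the number of components of a switching, i.e.\ nodes minus kept edges; it is the same for all switchings. It is preserved by removing a terminal $\parr$- or $\cn$-node, and splitting at a $\otimes$ or $\cut$ gives $c(\Net)=c(\Net_1)+c(\Net_2)-1$. So the class $c=1$ is stable under your induction and never needs mix.

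The nets of this paper (weakenings, mix, and nets such as the Bayesian example above) are not in this class. What is missing is a splitting lemma for graph-connected acyclic structures whose switchings are disconnected. This needs a genuinely mix-specific argument, as in the sequentialization theorem for \MLL\ with mix of Fleury and Retor\'e, or a Danos-style contractibility criterion adapted to mix. The intersection-of-components empire cannot simply be transplanted.

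The remaining parts of your induction are fine: mix on graph components, removal of terminal $\parr$/$\cn$ nodes, and the base cases, to which you should add the $\ax$-node.
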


\SLV{}{\begin{example}
The \pt\ in \Cref{fig:ABCDE_seq}\todor{lacking fig} is a sequentialization of the proof-net in \Cref{fig:ABCDE_D}.

\SLV{}{Notice that (as rather standard) we sequentialize in a proof where contractions are greedy, that is they are performed as soon as possible -- so in fact we embed contraction into the rules for $\cut$, $\otimes$, and mix.}
\end{example}}


\subsubsection{Probabilistic boxes: sampling and conditionals.}
At a first glance, the $\boxn$-nodes, to which we will associate a \CPT, seem unusual compared to the usual proof-nets of linear logic.
We sketch here how these boxes can be obtained using the additive connectives of linear logic, with as sole addition a \textit{sample-node} representing a coin toss.

Assume given sample-nodes of the shape
\resizebox*{!}{2em}{
\begin{tikzpicture}[baseline=(b)]
\coordinate (b) at (0,-.5);
\begin{scope}[every node/.style={circle,draw=black}, every path/.style={draw=black}]
	\node (s) at (0,0) {$p$};
	\path (s) -- node[named edge]{$\X$} ++(0,-1);
\end{scope}
\end{tikzpicture}
}
where $0 \leq p \leq 1$.
Such a node represents a probability distribution on the binary \rv $X$: it is true with probability $p$, and false with probability $1-p$ -- see \cref{fig:probabilistic_boxes} for an illustration.
We see a positive atomic formula $\X$ as an alias for a boolean, which is the formula $\One \oplus \One$ in linear logic.
By duality, this implies that $\Xn$ is an alias for $\bot \with \bot$.
The additive  encoding of booleans as $\One \oplus \One$ is standard; the  ``if then else'' function  (that takes a boolean and returns a boolean, so of type $\vdash \bot\with\bot, \One\oplus \One$) is encoded by a $\with$- rule (an additive box), see \cref{fig:probabilistic_boxes}.

Any   $\cpt$ is hence directly encoded by means of suitable  \emph{sample}-nodes  and (possibly nested)\emph{ if-then-else}, in the standard way.
A graphical representation of the internal working of a $\boxn$-node is depicted on \cref{fig:probabilistic_boxes}.

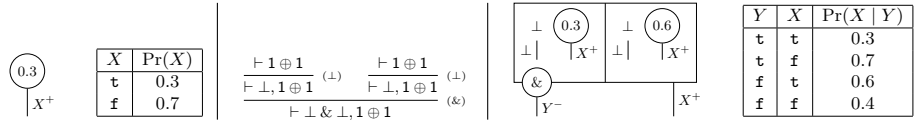
\begin{figure}[t]
\centering
\scalebox{0.6}{
\begin{tikzpicture}
\begin{scope}[every node/.style={circle,draw=black}, every path/.style={draw=black}]
	\node (s) at (0,0) {$0.3$};
	\path (s) -- node[named edge]{$\X$} ++(0,-1);
\end{scope}
\end{tikzpicture}
}
\hspace*{.5em}
\scalebox{0.75}{
\begin{tabular}[b]{|@{\hspace{.5em}}c@{\hspace{.5em}}|@{\hspace{.5em}}c@{\hspace{.5em}}|}
 \hline
 $X$ & $\Pr(X)$ \\
 \hline
 $\true$ & $0.3$ \\
 $\false$ & $0.7$ \\
 \hline
\end{tabular}
}
\vrule\hspace*{.5em}
\scalebox{0.65}{
\begin{prooftree}
\hypo{\One\oplus \One}
\infer1[\bot]{\bot, \One\oplus \One}
\hypo{\One\oplus \One}
\infer1[\bot]{\bot, \One\oplus \One}
\infer2[\with]{\bot\with\bot, \One\oplus \One}
\end{prooftree}}
\hspace*{.5em}\vrule\hspace*{.5em}
\scalebox{0.6}{
\begin{tikzpicture}
\begin{scope}[every node/.style={circle,draw=black}, every path/.style={draw=black}]
	\node (s) at (.25,-.25) {$0.3$};
	\path (s) -- node[named edge]{$\X$} ++(0,-.75);
	\node (s') at (2.25,-.25) {$0.6$};
	\path (s') -- node[named edge]{$\X$} ++(0,-.75);
	\draw[draw=black] (-1,.25) rectangle (3,-1.5);
	\path (1,.25) -- (1,-1.5);
	\node[fill=white] (w) at (-.5,-1.5) {$\with$};
	\node[draw=none] (b) at (-.5,-.25) {$\bot$};
	\path (b) -- node[named edge,left]{$\bot$} ++(0,-.75);
	\node[draw=none] (b') at (1.5,-.25) {$\bot$};
	\path (b') -- node[named edge,left]{$\bot$} ++(0,-.75);
	\path (w) -- node[named edge]{$\Yn$} ++(0,-.75);
	\path (2.5,-1.5) -- node[named edge]{$\X$} ++(0,-.75);
\end{scope}
\end{tikzpicture}
}
\hspace*{.5em}
\scalebox{0.75}{
\begin{tabular}[b]{|@{\hspace{.5em}}c@{\hspace{.5em}}|@{\hspace{.5em}}c@{\hspace{.5em}}|@{\hspace{.5em}}c@{\hspace{.5em}}|}
 \hline
 $Y$ & $X$ & $\Pr(X\mid Y)$ \\
 \hline
 $\true$ & $\true$ & $0.3$ \\
 $\true$ & $\false$ & $0.7$ \\
 $\false$ & $\true$ & $0.6$ \\
 $\false$ & $\false$ & $0.4$ \\
 \hline
\end{tabular}
}
\caption{A sample-node (left), ``if then else'' with a $\with$-rule (center), and a probabilistic box built from sample-nodes and additives (right)}
\label{fig:probabilistic_boxes}
\end{figure}
From now on, we do not explicitly display the internal content of each probabilistic box, but we simply associate to it the corresponding \cpt.

\begin{remark}[Weakening and Contraction]\label{rk:negatives}
The syntax allows structural rules (weakening and contraction) on \emph{negative} atomic formulas.
Recall that $\Xn$ stands for $\bot\with\bot$.
It is well-known in linear logic that formulas built from $\bot$ by means of $\with$ (and $\parr$), \ie formulas of negative polarity, admits weakening and contraction.
\end{remark}

\subsection{Bayesian Proof-Nets}\label{sec:bpn}

\begin{definition}[Bayesian proof-net (\BPN)]
We call Bayesian a proof-net $\R:\Delta$ with probabilistic boxes which satisfies the following two conditions:
\begin{enumerate}
\item 
\emph{the atoms labelling the positive conclusions of the boxes are \emph{pairwise distinct}}; and
\item all the atoms in the conclusions $\Delta$ are positive.
\end{enumerate}
We also call Bayesian a \pn that is a sub-net of a  Bayesian \pn. 
\end{definition}
We write \BPN\ for Bayesian proof-net.
Because of condition (1), we denote each box in a \BPN\ $\R$ by the name of its (unique) positive conclusion $\X$, writing $\bbox^X$.
We call \textdef{positive} a proof-net $\R:\Delta$ respecting  condition (2) -- all atomic (sub-)formulas in $\Delta$ are positive.
We denote by $\Boxes\R$ the \textdef{set of all boxes} of $\R$.
We call \textdef{main names} of $\R$ (noted $\Main{\R}$) the set of names labelling the positive conclusion of boxes in $\R$.
	
\subsection{Reduction, expansion, and  normal forms}\label{sec:pn_rewriting}
	

\Cref{fig:rewriting} sketches the $\cut$-rewriting steps plus the structural rules for contraction and weakening.
All these rules are standard, and preserve both the correctness and the conclusions of a proof-net.

\begin{definition}[Reduction, expansion, and normal forms]\label{def:rewriting}
The normalization rules for \MLLB proof-nets are given in \Cref{fig:rewriting}. 
Each of these rewriting rules $\rr$ defines a binary relation $\Red[\rr]$ on proof-nets, called a \textbf{reduction} step and written $\R\Red[\rr] \R'$ (read $\R$ $r$-reduces to $\R'$).
The inverse step is called an \textbf{expansion}.
So if $\R\Red[\rr] \R'$, then $\R$ is an $r$-expansion of $\R'$.
We write $\Red[\rules]$ for a reduction according to any rule in \Cref{fig:rewriting}. 
$\R$ is in normal form (or just normal) if there is no $\R'$ such that $\R\Red[\rules] \R'$.
\end{definition}

	\SLV{}{	\begin{example}The \pn  in \Cref{fig:ABCDE2} (where   $\R$ is as in \Cref{fig:ABCDE1}) reduces to the \pn of \Cref{fig:ABCDE_D}, which is normal.
		\end{example}
}

\begin{proposition}
The reduction $\Red[\rules]$ is terminating and confluent.
\end{proposition}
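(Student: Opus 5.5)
We need to plan a proof that the reduction $\Red[\rules]$ on \MLLB\ proof-nets is terminating and confluent. The rules of \Cref{fig:rewriting} are not displayed in the text, so I assume they are the standard ones: the multiplicative cut steps ($\otimes/\parr$, $\One/\Bottom$), the axiom step, and the structural steps. The structural steps would be: a cut between a box conclusion $\Xn$ and a weakening, or a cut between a box conclusion and a contraction, which duplicates or erases the box. Other possible structural steps are contraction/contraction associativity, contraction/weakening neutrality, and the commutation of a box's negative premise with a weakening or contraction.

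\textbf{Termination.} I plan to use a well-founded measure that strictly decreases at each step. Multiplicative steps strictly decrease the number of $\otimes,\parr,\One,\Bottom$ nodes and do not increase anything else. Structural steps are harder, because a cut of a box $\bbox^X$ (conclusion $\X$) against a contraction on $\Xn$ duplicates the box, and its negative conclusions $\Yn_i$ then have to be recontracted.

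The key observation is that box conclusions are atomic. Hence duplication only moves contractions "upwards" along the box dependency order, and correctness (acyclicity, \Cref{def:pn}) makes that order well-founded. I would therefore take as measure a multiset, ordered by the Dershowitz--Manna multiset ordering. Its elements are, for each cut, the pair (height of the cut formula, depth of the cut in the polarized order). Here depth is measured by the number of boxes above the cut along positive flow. Duplicating a box replaces one cut on $X$ by cuts on the $Y_i$, which are strictly higher in the polarized order and so lie at strictly smaller depth. The count of $\cn$-nodes may grow, so it is kept only as a secondary lexicographic component.

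Weakening steps erase a box and replace its negative inputs by weakenings. These only decrease the measure.

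\textbf{Confluence.} By Newman's lemma, termination reduces confluence to local confluence. I would check local confluence through critical pairs. Redexes of \MLL\ cut elimination are disjoint, so multiplicative pairs commute trivially. The interesting overlaps are of three kinds:
\begin{itemize}
\item two structural rules acting on the same contraction tree;
\item a box that is simultaneously the target of a duplication and has its premises involved in another redex;
\item the axiom step overlapping with structural steps.
\end{itemize}
Each of these should join modulo associativity and commutativity of contraction trees, so I would work up to the equivalence that identifies contraction trees with the same leaves. Equivalently, I would treat $n$-ary contractions as primitive, which makes those rules disappear.

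\textbf{Expected main obstacle.} I expect the main difficulty to be the duplication/erasure pairs. For example, one box is duplicated by a contraction cut while another redex sits on one of its premises; after duplication that redex has to be fired twice. To handle this I would show that the parallel step closes the diagram, adapting the standard $\lambda$-calculus parallel-moves argument. Confluence then holds up to the contraction-tree equivalence, and termination is preserved on equivalence classes.
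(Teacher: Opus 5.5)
Your plan targets a different rewriting system from the paper's $\Red[\rules]$. In this paper no step of $\Red[\rules]$ duplicates or erases a box. A cut between the main conclusion $\X$ of $\bbox^X$ and a $\cn$- or $\weak$-node on $\Xn$ is normal. The paper makes this explicit in several places:
\begin{itemize}
\item Semantic invariance under reduction is called ``immediate, because both the boxes and the conclusions are invariant''.
\item Normal nets ``can still contain cuts'' because of boxes. In \cref{fig:ABCDE_D} the sample of $\bbox^A$ is \emph{shared}, not copied, through a contraction, and $E$ is marginalized by a weakening cut against $\bbox^E$.
\item The box/$\weak$ step of \cref{fig:w_red} appears only in \Cref{sec:pruning}, as a new relation $\Red[\weak]$; its union with $\Red[\rules]$ is treated separately in \Cref{lem:weak_convergent}.
\end{itemize}
A duplication step could not be present anyway. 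It would create two boxes with main conclusion $\X$, violating the Bayesian condition, and it would turn one shared sample into two independent ones, breaking invariance of the semantics.

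So the part you single out as the main difficulty concerns rules that do not exist. Your result would also not transfer to the statement. Termination of a larger relation does imply termination of $\Red[\rules]$, but confluence is not inherited by sub-relations. Moreover, you only reach confluence modulo an equivalence on contraction trees, which is weaker than what is claimed.

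Even for the system you assumed, your termination measure fails. Duplicating $\bbox^X$ against $\cn(e_1,e_2)$ does not replace the cut on $X$ by cuts on the $Y_i$. It produces \emph{two} cuts on $X$, one copy against $e_1$ and one against $e_2$, with the same atomic cut formula and the same boxes above them. The old $\Yn_i$ edges now leave new contraction nodes, so the cuts on the $Y_i$ are unchanged. One multiset element is thus replaced by two copies of itself, so the Dershowitz--Manna measure increases. A secondary lexicographic component cannot compensate for an increase in the primary one, and the $\cn$-count can grow too.

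What the statement actually needs is elementary, which is why the paper states it without proof. The steps of $\Red[\rules]$ are the $\ax$, $\otimes/\parr$ and $\One/\Bottom$ cut steps, plus the structural steps on $\cn$/$\weak$, none of which involves a box. Every such step is local, copies and erases nothing, and strictly decreases a simple size measure such as the number of nodes; this gives termination. The only overlapping redexes are the standard ones around axioms: cuts on both sides of one $\ax$, or an $\ax$ cut against an $\ax$. These close in one step, so local confluence holds, and Newman's lemma gives confluence.
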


\subsubsection{Normal Forms and Atomic \pns.}
Please notice that, because of \paxs, a \pn $\Net$ in normal form can still contain cuts.
However, $\Net$ has a special shape, for it can be decomposed as a \pn whose edges are all labelled by atoms, on top of the formula trees of its conclusions.
		
	\SLV{}{	
		\begin{remark}\label{rk:normal_atomic}
			If 	  $\ProofN$ is a   normal \pn with atomic conclusions, then $\ProofN$ is  atomic.
		\end{remark}
		\begin{lemma}[normal proof-nets ]\label{lem:normal_net}
			Let  $\R:\Delta$ be a  \emph{normal} \pn. Then 
				(1.)  the premises of each $\cut$-node are atomic, 
				(2.) the  positive premise  of each cut-node is  conclusion of an $\Ax$-node. 
		\end{lemma}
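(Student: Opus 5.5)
The plan is to argue by contraposition: take an arbitrary $\cut$-node in a \pn $\R:\Delta$ and show that if it violates (1) or (2), then some rule of \Cref{fig:rewriting} applies to $\R$. That contradicts normality (\Cref{def:rewriting}). The whole argument is a case analysis on which node of the grammar in \Cref{fig:pn} can have a given edge as its conclusion. Condition \ref{item:pn:2} of \Cref{def:pn} guarantees that each premise of the cut is the conclusion of exactly one node, and that node's label is constrained by the typing discipline.

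\textbf{Point (1).} Let the cut have premises $\FormA$ and $\FormA\orth$ with $\FormA$ non-atomic. I will first record a typing observation. The $\boxn$-nodes have only atomic conclusions, and the $\cn$- and $\weak$-nodes only act on negative atoms. Hence an edge labelled by a compound formula can only be the conclusion of a logical node: a $\otimes$-, $\parr$-, $\One$- or $\Bottom$-node, according to its main connective. There are two cases.
\begin{itemize}
\item If $\FormA=\FormB_1\otimes\FormB_2$, then $\FormA\orth=\FormB_1\orth\parr\FormB_2\orth$. So the two premises of the cut are the conclusions of a $\otimes$-node and of a $\parr$-node, and this is exactly the multiplicative redex of \Cref{fig:rewriting}.
\item If $\FormA=\One$, then $\FormA\orth=\Bottom$, and we obtain the $\One/\Bottom$ redex.
\end{itemize}
In both cases $\R$ is not normal, so both premises of every cut are atomic.

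\textbf{Point (2).} Now the cut has premises $\X$ and $\Xn$. Consider the node $\node$ whose conclusion is the positive edge $\X$. Going through the grammar, $\otimes$, $\parr$, $\One$ and $\Bottom$ are excluded because $\X$ is atomic. $\cn$ and $\weak$ are excluded because their conclusions are negative atoms. In a $\boxn$-node, the only positive conclusion is the main one. So the only candidates are a $\boxn$-node, i.e.\ an $\Ax$-node, whose main conclusion is $\X$, or, if the grammar contains identity axioms, an axiom node with conclusions $\Xn,\X$. In the axiom case, the cut together with the axiom forms the standard axiom redex of \Cref{fig:rewriting}, which contradicts normality. Therefore $\node$ is an $\Ax$-node, which is (2).

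\textbf{Main obstacle.} The only delicate step is making sure the case analysis in (2) is exhaustive and matches the rules actually listed in \Cref{fig:rewriting}. Concretely, every situation in which a positive atom is concluded by something other than an $\Ax$-node must correspond to an available redex. I would check this directly against the two figures.

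Note that no case needs to inspect the negative premise $\Xn$: it may come from a $\cn$-node, a $\weak$-node or another box. This is precisely why normal nets can still contain cuts, which the lemma is meant to capture.
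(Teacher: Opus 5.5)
Your proof is correct. It is also the argument the paper leaves implicit: the statement appears only as an unproved auxiliary property, meant to be read off from the grammar of nodes and the redexes of \Cref{fig:rewriting}, and that is exactly your case analysis on the node concluding each premise of the $\cut$.

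One detail is worth making explicit in (2). The $\ax$-node cannot have its other conclusion $\Xn$ as the negative premise of the same $\cut$: that two-edge cycle passes through two premises of a $\cut$-node and so violates correctness. Hence the $\ax$-redex really is available. Also, in (1), $\ax$-nodes are atomic in this paper ($\vdash \X,\Xn$), so they do not break your typing observation.
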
}
		\condinc{}{	\begin{lemma}[normal proof-nets ]\label{lem:normal_net}
				A  proof-net $\R$  which is \emph{normal} (\ie $m$-normal)  satisfies the following properties: 
				\begin{enumerate}
					\item  the premises of each $\cut$-node are atomic
					\item the  positive premise  of each cut-node is  conclusion of an $\Ax$-node
					\item the negative premise of any  cut-node is conclusion of either an $\Ax$-node or a $@$-node or a $\weak$-node
					\item the conclusion of any $\weak$-node is either conclusion of $\R$, or premise of a cut.
				\end{enumerate}
				
		\end{lemma}}

		\condinc{}{\begin{lemma}
				If $\R$ is $\Red[m,s]$-normal,  it contains no $\weak$-node.
			\end{lemma}
		}

\paragraph{Atomic \pns.}
A \pn $\Net$ is said \textdef{atomic} if all formulas labelling its edges are atoms. 
We write that $e$ is labelled by $X$ if its label is either $X^+$ or $X^-$. 

\begin{property}[Normal forms]\label{fact:canonical_dec}
Let $\Net:F$ be a \emph{normal} \pn of conclusion $F$.
Then $\N$ can be decomposed in two subgraphs $\Net^\at$ and $\Net^F$, where $\Net^\at:\Gamma$ is an atomic sub-net whose conclusions $\Gamma$ is the sequence of the atomic subformula of $F$, while $\Net^F$ is a typed graph with premises $\Gamma$, consisting of the syntax tree of $F$.
A similar property holds if $\Net$ has conclusions $F_1, \dots, F_k$.
\end{property}

\SLV{}{\begin{example}
In \Cref{fig:ABCDE1}\todor{lacking fig}, the syntax tree of the (only) conclusion $C^+\otimes D^+$  is red-colored.	
\end{example}}
	



\subsection{Bayesian Networks and Bayesian Proof-Nets}\label{sec:bpn_BN}

We prove that every \BN is associated to a positive \Bpn and vice versa.
To formalize this result, we need the notion of polarized order.

\paragraph{The polarized order.}\label{sec:polarized}

Let $\R$ be an atomic \pn (which is a simple case of \emph{polarized} proof-nets~\cite{Laurent03}). 
We write $\pol{\R}$ the graph which has the same nodes and edges as $\R$, but where the orientation follows the polarity of the labels: downwards if positive, upwards if negative. 	
The following easy property is well known for all polarized proof-nets.

\begin{lemma}[Polarized correctness~\cite{phdlaurent}]\label{lem:pol_correct}
Assume that $\R$ is a typed graph (\Cref{def:pn}, point \ref{item:pn:1}) whose edges are all labelled by atoms.
Then $\R$ is correct if and only if $\pol{\R}$ is a DAG.
\end{lemma}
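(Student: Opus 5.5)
We have to show that an atomic typed graph $\R$ satisfies the correctness condition (every cycle uses two premises of a same $\parr$- or $\cn$-node) exactly when $\pol{\R}$ has no directed cycle. First note that an atomic graph has no $\otimes$- or $\parr$-nodes, since these need a compound formula on one edge. So the only nodes are boxes, $\ax$, $\cut$, $\cn$ (on negative atoms) and $\weak$ (on negative atoms), possibly with $\One$/$\Bottom$ if admitted, and those have no atomic edges. The plan is to compute, for each such node, which pairs of incident edges a cycle may traverse consecutively in the directed graph $\pol{\R}$. Then I compare the \emph{switching}-style condition with directedness.

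\textbf{Local analysis.} I orient every edge: positive edges point downward (from the node of which the edge is a conclusion to the node of which it is a premise), and negative edges point upward. Then I check node by node.
A box has negative premises-side conclusions $\Yn_i$ and one positive conclusion $\X$. In $\pol{\R}$ the $\Yn_i$ edges enter the box (they flow upward into it) and $\X$ leaves it. So the box has in-edges $\Yn_i$ and out-edge $\X$.
An $\ax$ node with conclusions $X^-,X^+$ has in-edge $X^-$ and out-edge $X^+$.
A $\cut$ node with premises $X^+,X^-$ has in-edge $X^+$ and out-edge $X^-$.
A $\cn$ node has two negative premises and one negative conclusion. Information flows upward, so the conclusion is the unique in-edge and the two premises are both out-edges.
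A $\weak$ node has only an in-edge and no out-edge.
The key observation is that the only node where two edges of the same direction class meet, such that an undirected cycle could pass through it \emph{non-directedly} without using two premises, is a box (two $\Yn_i$ in-edges). The only node where a directed path never passes between two given edges is a $\cn$ between its two premises (both out-edges).

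\textbf{Both directions.} ($\Leftarrow$) Suppose $\pol{\R}$ is acyclic and take any undirected cycle. It cannot be directed, so along it the orientation must switch somewhere. At a switch point the cycle enters a node through two in-edges or through two out-edges. By the analysis, two out-edges can only happen at a $\cn$-node, where they are its two premises. Two in-edges happen only at a box, between two $\Yn_i$. A cycle that switches direction must contain at least one "source" switch (two out-edges) as well as a "sink" switch. The source switch is then a $\cn$ using two premises, which is exactly correctness.
($\Rightarrow$) Conversely, a directed cycle in $\pol{\R}$ passes through each node via one in-edge and one out-edge. By the table, such a passage never uses two premises of a $\cn$-node: those are both out-edges. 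There are no $\parr$-nodes. Hence this cycle violates correctness.

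The main obstacle is the counting argument that any non-directed cycle has a source point. I would argue it by following orientations around the cycle: the orientation changes an even number $\geq 2$ of times, alternating between sinks and sources, so at least one source exists. The rest is the routine node-by-node check, together with care with the edge conventions of pending edges (conclusions of $\R$). Pending edges lie on no cycle, so they are harmless.
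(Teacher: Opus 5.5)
Your proof is correct and is essentially the paper's own argument. The paper also rests on the observation that, under the polarized orientation, only $\cn$-nodes (and $\parr$-nodes, absent here) can have more than one outgoing edge; it concludes that every switching graph is a DAG with out-degree at most one, hence acyclic, and declares the converse immediate. Your ``source vertex'' argument proves the same out-degree fact directly in terms of cycles rather than switching graphs, and your converse spells out the step the paper leaves implicit.
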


\paragraph{Polarized order.}
The polarized orientation induces a partial order $<_\text{pol}$ on the nodes of $\R$.
In particular, if $\bbox_1$ and $\bbox_2$ are \paxs, $\bbox_1<_\text{pol}\bbox_2$ if there is a path in $\pol{\R}$ from $\bbox_1$ to $\bbox_2$.
We define $\Dag\R$ to be the DAG associated to $(\Boxes{\R}, <_\text{pol})$. 

				\SLV{}{We refer to $ \bnet{\R} $ also as the \emph{core invariant} of $\R$. 
		The set $\Box(\R)$ as well as  the induced partial  order on it is an  \emph{invariant} of the proof-net under several 
		important transformations, and in particular under normalization.

		\begin{proposition}[Core invariant] Let $\bnet \R= (\Ax{(\R)}, <_\text{pol})$, where 
			$ \bbox_1<_\text{pol}\bbox_2 $ if there is a polarized path  from $ \bbox_1$ to $ \bbox_2$.
			
			$\bnet{\R}$ is invariant under $\arule$-reductions and $\arule$-expansions.
		\end{proposition}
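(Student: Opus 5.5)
The plan is to reduce the statement to a local check on each rewriting rule, using the fact that $<_\text{pol}$ on boxes is defined through paths in $\pol{\R}$. Since $\R\Red[\arule]\R'$ holds exactly when $\R'$ is an $\arule$-expansion of $\R$, the relation ``$\bnet\R=\bnet{\R'}$'' is symmetric. So it suffices to treat $\arule$-reductions; invariance under expansions then follows by reading each case backwards.

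\textbf{Setup.} Fix a step $\R\Red[\arule]\R'$, and write $\R=\netC[\netS]$ and $\R'=\netC[\netS']$, where $\netS$ is the redex and $\netS'$ its contractum. Both are typed graphs with the same boundary: the same pending edges, with the same atomic labels and hence the same orientation in $\pol{\cdot}$. By \Cref{lem:pol_correct}, both $\pol{\R}$ and $\pol{\R'}$ are DAGs, so the order $<_\text{pol}$ is well defined on each side.

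\textbf{Step 1: the set of boxes is unchanged.} I would first check, rule by rule, that an $\arule$-step neither creates, erases nor duplicates a $\boxn$-node. Any box occurring in $\netS$ must also occur in $\netS'$ with the same conclusions. This gives $\Boxes\R=\Boxes{\R'}$, with the same names. I expect this to be precisely what singles out the $\arule$-rules among those of \Cref{fig:rewriting}: they act only on cut, contraction and weakening nodes on atoms, and never on a box.

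\textbf{Step 2: a boundary lemma.} For each $\arule$-rule I would prove that, for any two boundary edges $e,e'$ of the redex, there is a directed path in $\pol{\netS}$ from $e$ to $e'$ if and only if there is one in $\pol{\netS'}$. The same must hold between boundary edges and any box lying inside the redex. This is a finite case analysis on the small pictures of \Cref{fig:rewriting}:
\begin{itemize}
\item a cut between a negative atom and a positive atom only shortcuts a polarized path;
\item rearranging contractions (associativity or commutation) preserves, from the single positive input, reachability of exactly the same set of negative outputs;
\item removing a weakening together with a contraction branch deletes only a dead end, since a $\w$-node has no outgoing polarized path to any box.
\end{itemize}

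\textbf{Step 3: transfer to the whole net.} Given a polarized path between boxes $\bbox_1,\bbox_2$ in $\pol\R$, I cut it into maximal segments lying outside or inside $\netS$. The segments outside are unchanged in $\R'$. Each segment inside enters and leaves through boundary edges, so Step 2 replaces it by a path in $\pol{\netS'}$ with the same endpoints. Concatenating gives a path in $\pol{\R'}$; the converse direction is identical. Hence $<_\text{pol}$ agrees on $\Boxes\R$, and $\bnet\R=\bnet{\R'}$.

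\textbf{Main obstacle.} The hard step is the weakening/contraction case of Step 2, together with confirming that no $\arule$-rule duplicates or erases a box (for instance, a contraction meeting a box's positive conclusion). For the weakening case I would argue that the erased part is reachable only towards a $\w$-node, so it carries no path between boxes. For the box question, if such a rule were among the $\arule$-rules the claim would fail, so I would check the definition of the $\arule$ fragment against \Cref{fig:rewriting} first.
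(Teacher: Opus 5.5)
The paper does not prove this proposition. It appears only in a draft block, and the $\arule$-fragment is never defined. The surrounding text simply treats the invariance as immediate, because the rules of \Cref{fig:rewriting} leave the boxes and the conclusions untouched; the box-erasing box/$\weak$ rule of \Cref{fig:w_red} is kept separate.

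Judged on its own, your argument is sound under the hypotheses you flag yourself. The nets must be atomic, since otherwise $\pol{\R}$ is undefined. The $\arule$-rules must only rewire $\ax$, $\cut$, $\cn$ and $\w$ nodes on atoms, so that they preserve atomicity, correctness, conclusions and boxes. There are two harmless slips:
\begin{enumerate}
\item By the paper's convention, $\R\Red[\arule]\R'$ makes $\R$ (not $\R'$) an expansion of $\R'$. This does not affect your symmetry argument.
\item The conclusion of a contraction is a negative edge, incoming to the node in $\pol{\cdot}$; it is not a ``positive input''.
\end{enumerate}

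For positive \Bpns, the setting in which $\bnet{\R}$ is a Bayesian network, the paper's own tools let you skip Steps 2--3 entirely.
\begin{itemize}
\item In an atomic net, the only edge leaving $\bbox^X$ in $\pol{\R}$ is $\X$, and every non-box node preserves the name along $\pol{\cdot}$. So a polarized path from $\bbox^X$ that meets no other box runs on $X$-edges, and can only end in a box having $\Xn$ among its conclusions.
\item Conversely, by \Cref{lem:jointree} the $X$-edges form a directed tree rooted at the main conclusion of $\bbox^X$. So every box with an $\Xn$ conclusion is reached from $\bbox^X$ by such a box-free path.
\end{itemize}
Hence the relation ``joined by a polarized path through no other box'', whose transitive closure is $<_\text{pol}$, is read off the types of the boxes alone. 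Invariance then reduces to your Step 1 plus preservation of conclusions, hence of positivity. This is presumably the ``immediate'' argument the authors had in mind.

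Your local-redex route costs a case analysis per rule, but it is more general:
\begin{itemize}
\item It uses neither positivity nor \Cref{lem:jointree}. It therefore also covers non-positive sub-nets, where an $\Xn$ conclusion of a box may hang from a pending conclusion and the order genuinely depends on the wiring.
\item Since no redex contains a box, your segment replacement also preserves box-free paths. It thus preserves the direct-parent relation, not just its transitive closure.
\end{itemize}
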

	}

\subsubsection{From Bayesian Networks to \Bpns.}
To every \BN $\bn$ is is straightforward to associated a \pn $\Net_\bn$, as sketched in \Cref{fig:BNtoPN}.
The only delicate point is to check that the typed graph produced by the translation is correct.
This is a consequence of \Cref{lem:pol_correct}, the fact that $\Net_\bn$ is atomic, and that $\pol{\Net_\bn}$ is a DAG, because $\bn$ is.
That $\Net_\bn$ is Bayesian is immediate.
		
\subsubsection{From \Bpns to Bayesian Networks.}
Remarkably, we can associate a \BN $\bnet{\R}$ to the normal form of any positive \BPN $\R$.
Let $\Net$ be its normal form; recalling \cref{fact:canonical_dec}, let $\Net^\at$ be the largest atomic sub-net of $\Net$.
Since each box has exactly one main name, there is an immediate correspondence between $(\Boxes{\Net}, <_\text{pol})$ and the DAG of a \BN over $\Main{\R}=\{X_1, \dots,X_n\}$.

	\condinc{}{	Summing up:
		\begin{proposition}
			\begin{enumerate}
				\item 	Let   $\R:\Gamma$ be a  Bayesian \pn in normal form,  and  $\Main{\R}=\{X_1, \dots,X_n\}$. Then	 
				$  (\Ax{(\R)}, <_\text{pol})$ is isomorphic to a  DAG  $\netG$ over   $X_1, \dots,X_n$. 
				
				\item Conversely, given a Bayesian Network   $(\netG, \Phi)$ over   $X_1, \dots,X_n$, there exists   an atomic 
				\BPN $\R$ such that   
				$  (\Ax{(\R)}, <_\text{pol})$ is isomorphic to $\netG$. 
			\end{enumerate}
		\end{proposition}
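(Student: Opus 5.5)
Both points reduce to \Cref{lem:pol_correct} plus bookkeeping on the boxes; the direction from \pns to DAGs is the main work.

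\emph{Point 1.} Let $\R:\Gamma$ be a normal \BPN with $\Main{\R}=\{X_1,\dots,X_n\}$. By condition (1) of Bayesian \pns, the map $\bbox^{X_i}\mapsto X_i$ is a bijection between $\Boxes{\R}$ and $\{X_1,\dots,X_n\}$. Transport $<_\text{pol}$ along this bijection to get a relation on names, and take $\netG$ to be its Hasse diagram (or simply the transported DAG $\Dag\R$). It remains to check that $<_\text{pol}$ on boxes is a genuine strict partial order, i.e.\ that $\pol{\cdot}$ has no cycle.

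By \Cref{fact:canonical_dec}, $\R$ decomposes into the atomic sub-net $\R^\at$ and the syntax trees of its conclusions. All boxes lie in $\R^\at$. The formula trees contain no boxes, and their edges only lead towards the conclusions, so they create no box-to-box paths. Since $\R^\at$ is a correct atomic typed graph, \Cref{lem:pol_correct} says that $\pol{\R^\at}$ is a DAG. Restricting its reachability relation to boxes therefore gives an acyclic partial order, and $\netG$ is a DAG over $X_1,\dots,X_n$ isomorphic to $(\Boxes\R,<_\text{pol})$.

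A small subtlety is that we need $\R$ atomic, or at least that the order is read only in $\R^\at$. I will state explicitly that paths are taken in $\pol{\R^\at}$.

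\emph{Point 2.} Given $(\netG,\Phi)$, build $\R=\Net_\bn$ as follows:
\begin{enumerate}
\item For each $X$, take a box $\bbox^X$ with conclusions $\Yn_1,\dots,\Yn_k,\X$, where $Y_1,\dots,Y_k$ are the parents of $X$. Since $\netG$ is acyclic, $X\neq Y_i$, so this is a legal box.
\item Let $c_X$ be the number of children of $X$. For each $X$, cut $\X$ against a negative edge $\Xn$. That edge is produced by a tree of $\cn$-nodes with $c_X$ premises, whose leaves are exactly the $\Xn$ premises of the children's boxes; when $c_X=0$, use a $\weak$-node instead.
\end{enumerate}
The resulting typed graph is atomic and closed, so condition (2) holds trivially. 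Condition (1) holds because each name is main for exactly one box.

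The main obstacle is correctness, again settled by \Cref{lem:pol_correct}. In $\pol{\R}$, the $\X$ edge points from $\bbox^X$ down into the cut. The $\Xn$ edges point upward, from the cut through the contraction tree to each child box. So every polarized path between boxes follows an edge $X\to Z$ of $\netG$, and a cycle in $\pol{\R}$ would project to a cycle in $\netG$. Contraction trees and cuts add no other paths, so $\pol{\R}$ is a DAG and $\R$ is correct.

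By the same path analysis, box-to-box reachability in $\pol{\R}$ coincides with reachability in $\netG$, which gives the claimed isomorphism. If an exact edge-level isomorphism (not just of orders) is wanted, note that immediate box-to-box polarized connections are precisely the parent edges of $\netG$.
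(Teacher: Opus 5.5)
Your proof is correct and follows essentially the paper's own argument, with more detail than the paper's sketch. In point 1 you read $<_\text{pol}$ on the atomic part $\R^\at$ from \Cref{fact:canonical_dec}, get acyclicity from \Cref{lem:pol_correct}, and use that main names label the boxes bijectively; in point 2 you build the standard translation $\Net_\bn$, whose correctness follows from \Cref{lem:pol_correct} because $\pol{\Net_\bn}$ inherits acyclicity from $\netG$. One remark: in point 1, take $\netG$ to be the graph of immediate box-to-box polarized connections (paths through no other box) rather than the Hasse diagram. Both give the same order, but the Hasse diagram drops an edge such as $Z\to X$ when also $Z\to Y\to X$, and the \CPT of $\bbox^X$ needs that edge.
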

	}
\begin{example}
Consider the \pn $\R$ in \Cref{fig:BNtoPN}
It is clear that the polarized order  exactly describes  the DAG of the BN in \Cref{fig:BNrain}.
\end{example}

Notice that the \BN  associated to a \BPN $\R$ is independent from the conclusions of $\R$.

\subsubsection{Queries.}
A fine point in the correspondence between BN's and \pns deserves discussion.
A positive \BPN $\R:\Gamma$ encodes both a \BN (the probabilistic model) and a \emph{query} to the model: the query is expressed by the conclusions $\Gamma$.
As will be formalized in \Cref{sec:sem}, $\R:\Gamma$ corresponds to a \BN $\bn_\R$ over $\Nm{\R}$.
While $\bn_\R$ defines a probability distribution $\Pr$ over $\Nm{\R}$, the semantics of $\R:\Gamma$ is the \emph{marginal distribution} of $\Pr$ over $\Nm{\Gamma} \subseteq \Nm{\R}$.


\section{Graphs Decomposition}
\label{sec:decomposition}

Trees (such as  proofs, terms, and  type derivations) decompose accordingly to their inductive structure. 
A \pn being the image of a proof-tree $\pi$, the decomposition reflecting $\pi$ is always possible. However, the fact that a \BPN 
is a graph, it offers much more flexibility in term of decompositions--- we are  more  free in the choice of the  subgraphs.

Still, there are some subtleties. When factorizing a \BPN $\N$, we wish to decompose it in  several sub-nets connected by (any number of) $\cut$-nodes. That may not be immediately possible for $\N$, but the issue is easily solved by  first performing an \emph{expansion} of $\Net$, as described below.

\begin{remark}[Cutting edges]
Observe that $\ax$-expansion (the reverse of $\ax$-reduction) introduces a $\cut$-node, which allows to \emph{split}  the edge in two parts.
\end{remark}

\begin{definition}[Splitting]\label{def:decomposition}
We write $\N= \Cuts{\R,\netS}$ if $\N$ can be decomposed in two sub-nets $\R$ and $\netS$ connected by (any number of) $\cut$-nodes.

Let $\N$ be a proof-net, and $\R \subseteq \N$ an atomic sub-net.
The \textbf{splitting of $\N$ induced by $\R$} is performed by replacing with its $\ax$-expansion each edge which is both a conclusion of $\R$ and a premise of a node outside $\R$.
We call the resulting \pn $\pulledout{\N}{\R}$.
It is an expansion of $\N$ which decomposes as $\pulledout{\N}{\R} = \Cuts{\R,\netS}$, where $\netS$ is the sub-net of $\pulledout{\N}{\R}$ obtained by removing $\R$ and the newly introduced $\cut$-nodes.
(See \cref{fig:pullingout} for an illustration.)
\end{definition}

\SLV{}{We can generalize this construction to the case where $\R$ is not atomic, by means of generalized $\ax$-nodes, that are derivable from the $\ax$-nodes we have.}


\begin{figure}
\begin{adjustbox}{max height=5em,center=\linewidth}
\begin{tikzpicture}
\node at (0,-.7) {$\R$};
\draw[dashed] (0,-.7) ellipse (10mm and 5mm);
\draw[dashed] (0,-2.3) ellipse (18mm and 6mm);
\begin{scope}[every path/.style={draw=black}]
	\path (.75,-.85) -- (1,-1.5);
	\path (.25,-1) -- (.25,-1.9);
	\path (-.25,-1) -- (-.25,-1.9);
\end{scope}
\end{tikzpicture}
\hspace*{1em}\vrule\hspace*{1em}
\begin{tikzpicture}
\node at (0,-.7) {$\R$};
\draw[dashed] (0,-.7) ellipse (10mm and 5mm);
\draw[dashed] (-5,-2.3) ellipse (18mm and 6mm);
\node at (-5.8,-.9) {$\color{red}\netS$};
\draw[dotted,red] (-4.7,-1.75) ellipse (25mm and 13mm);
\begin{scope}[every node/.style={rectangle,draw=none}, every path/.style={draw=black}]
	\path (.75,-.85) -- (1,-1.5);
	\node (ax1) at (-3.5,-.75) {$\ax$};
	\node (ax2) at (-3.5,-1.25) {$\ax$};
	\node (cut1) at (-1,-2) {$\cut$};
	\node (cut2) at (-1,-2.5) {$\cut$};
	\path (-.25,-1) |- (cut1);
	\path (cut1) -- ++ (-.5,0) |- (ax1);
	\path (ax1) -| (-5.25,-1.9);
	\path (.25,-1) |- (cut2);
	\path (cut2) -- ++ (-1,0) |- (ax2);
	\path (ax2) -| (-4.75,-1.9);
\end{scope}
\end{tikzpicture}
\end{adjustbox}
\caption{Illustration of \cref{def:decomposition} with $\N$ on the left and $\pulledout{\N}{\R}$ on the right}\label{fig:pullingout}
\end{figure}
\SLV{}{
\begin{example}\label{ex:pullingout}
Consider the \pn $\N$ on the left of \cref{fig:pullingout_ex}, with $\R$ made of $\boxn^Y$, $\boxn^Z$ and the rightmost $\cut$.
The resulting $\pulledout{\N}{\R}$ is depicted on the right of \cref{fig:pullingout_ex}.
\end{example}
	
\begin{figure}
	\begin{adjustbox}{}
		\begin{tikzpicture}
			\begin{scope}[every node/.style={rectangle,draw=none}, every path/.style={draw=black}]
				\node[draw=black] (bw) at (-1.6,-1) {$\boxn^W$};
				\node[draw=black] (bx) at (.5,-1) {$\boxn^X$};
				\node[draw=black] (by) at (3,0) {$\boxn^Y$};
				\node[draw=black] (bz) at (6,0) {$\boxn^Z$};
				
				\node (cw) at (-.6,-3.75) {$\cut$};
				\node (aw) at (.1,-3.25) {$\cn$};
				\node (cy) at (4.5,-2) {$\cut$};
				\path (-1.3,-1.3) |- node[named edge,pos=.2,left]{$W^+$} (cw) -| (aw);
				\path[out=90,in=-90] (aw) edge node[named edge]{$W^-$} (.1,-1.3);
				\path[out=75,in=-90] (aw) edge node[named edge,pos=.9,left]{$W^-$} (2.6,-.3);
				\path[out=60,in=-90] (aw) edge node[named edge,pos=.9]{$W^-$} (5.6,-.3);
				\path (.8,-1.3) -- node[named edge,pos=.75]{$\X$} ++(0,-2.2);
				\path (3.3,-.3) |- node[named edge,pos=.15]{$\Y$} (cy) -| node[named edge,pos=.25]{$\Yn$} (5.8,-.3);
				\path (6.3,-.3) -- node[named edge,pos=.75]{$\Z$} ++(0,-1.7);
			\end{scope}
			\draw[dashed,red] (4.5,-.8) ellipse (27mm and 17mm);
			\node at (4.5,.5) {\color{red}$\R$};
		\end{tikzpicture}
		\hspace*{1em}\vrule\hspace*{1em}
		\begin{tikzpicture}
			\begin{scope}[every node/.style={rectangle,draw=none}, every path/.style={draw=black}]
				\node[draw=black] (bw) at (-4.6,-1) {$\boxn^W$};
				\node[draw=black] (bx) at (-2.5,-1) {$\boxn^X$};
				\node[draw=black] (by) at (3,0) {$\boxn^Y$};
				\node[draw=black] (bz) at (6,0) {$\boxn^Z$};
				
				\node (cw) at (-3.6,-3.75) {$\cut$};
				\node (aw) at (-2.9,-3.25) {$\cn$};
				\node (cy) at (4.5,-2) {$\cut$};
				\node (a1) at (-.75,.1) {$\ax$};
				\node (c1) at (1.25,-.9) {$\cut$};
				\node (a2) at (-.75,-.4) {$\ax$};
				\node (c2) at (1.25,-1.4) {$\cut$};
				\path (-4.3,-1.3) |- node[named edge,pos=.2,left]{$W^+$} (cw) -| (aw);
				\path[out=90,in=-90] (aw) edge node[named edge]{$W^-$} (-2.9,-1.3);
				\path[out=45,in=-90] (aw) edge (-1.75,-.4);
				\path (-1.75,-.4) |- (a1);
				\path (a1) -| (.25,-.9) -- (c1);
				\path (c1) -| node[named edge,pos=.9]{$W^-$} (2.6,-.3);
				\path[out=25,in=-90] (aw) edge (-1.5,-.9);
				\path (-1.5,-.9) |- (a2);
				\path (a2) -| (0,-1.4) -- (c2);
				\path (c2) -| node[named edge,pos=.9]{$W^-$} (5.6,-.3);
				\path (-2.2,-1.3) -- node[named edge,pos=.75]{$\X$} ++(0,-2.2);
				\path (3.3,-.3) |- node[named edge,pos=.15]{$\Y$} (cy) -| node[named edge,pos=.25]{$\Yn$} (5.8,-.3);
				\path (6.3,-.3) -- node[named edge,pos=.75]{$\Z$} ++(0,-1.7);
			\end{scope}
			\draw[dashed,red] (4.5,-.8) ellipse (27mm and 17mm);
			\node at (4.5,.5) {\color{red}$\R$};
		\end{tikzpicture}
	\end{adjustbox}
\caption{Example of splitting with $\N$ on the left and $\pulledout{\N}{\R}$ on the right}\label{fig:pullingout_ex}
\end{figure}
}
Whereas such a decomposition is very natural from a graphical perspective, it has no direct correspondence in the inductive syntax of sequent calculus.
In particular, while $\pulledout{\N}{\R}$, $\R$ and $\netS$ all have corresponding \pts, there is no reason that the \pt\ associated to $\pulledout{\N}{\R}$ can be obtained from those of $\R$ and $\netS$.
Indeed, there is no rule in sequent calculus allowing to compose two given \pts\ by multiple cuts.
Thus, more decompositions are allowed in \pns\ than in \pts.

\section{Semantics and Compositionality}


\newcommand{\Cpts}[1]{\mathsf{Cpts}(#1)}
\newcommand{\dem}{:}
\newcommand{\set}[1]{\{#1\}}
\newcommand{\tm}{t}
\newcommand{\iL}{L}

Following \cite{popl24}, we adopt the same semantics as that of \BNs. After recalling the  main ingredients, we  extend the proof of compositionality of \cite{popl24}, by adopting  a notion  of ``component'' which is \emph{specific to graphs}, and hence does not appear in previous work.

\subsection{Factors, and the semantics of \BNs }

%
%
%
%
%
%
%
%

Inference algorithms  rely on basic operations on a class of functions 
known as \emph{factors}, which generalize    the notions of  probability distribution and of conditional distribution. Factors will be the key ingredients also in our semantics.

\begin{definition}[Factor]
	A \emph{factor} $\ft\phi{\bX}$  over a set of   \rvs $\bX$
	is a function $\ft\phi{\bX}:\Val{\bX}\to\mathbb{R}_{\geq 0}$
	mapping each  tuple  $\bx\in \bX$  to a non-negative real.
\end{definition}
 Letters $\phi,\psi$ range over factors.
When $\bX$ is clear from the context, we simply write  $\phi$ (omitting the superscript $\bX$); then $\Nm{\phi}$  denotes $\bX$.
 In the literature about BNs, $\phi(\bx)$ is often written $\phi_\bx$. We adopt this  convenient notation in explicit calculations.
\condinc{}{
\begin{example} Factors generalize familiar concepts from probability theory.
\begin{itemize}
	\item A \emph{joint probability distribution} over the set $\bX$ is a factor $\phi$ which maps each tuple  $\xs\in \Val \bX$ to
	a probability $\phi(\xs)$ such that $\sum_{\xs\in \Val{\bX}} \phi(\xs) =1$.
	\item A \emph{conditional probability table} (\cpt)  for $X$ given $\bY$  is a factor $\ft\phi {\{X\}\cup\bY}$ which maps each tuple $\x \ys\in\Val{\{X\}\cup\bY}$ to a probability $\phi(\x \ys)$ such that for each $\ys\in\bY$, $\sum_{\x\in \Val{X}} \phi(\x \ys) =1$.\footnote{Please notice that here, and in the following definition of sum out, we slightly abuse the notation. In fact, as standard, we consider the tuples as sequences indexed by the set of random variables. 
	Every time, we present the tuples ordered in the most convenient way for a compact  definition.}
\end{itemize}
\end{example}
}
Factors come with two important operations: sum (out) and product.  Product  of factors is defined in such a way  that only ``compatible`` instantiations are  multiplied.

	Given a subset $\bY\subseteq\bX$, we denote by $\Proj{\bx}{\bY}$ the restriction of $\bx$ to ${\bY}$ (so, $\Proj{\bx}{\bY} \in \Val \bY$ ). 
	Given two sets of names $\bX$ and $\bY$, the instantiations  $\bx\in\Val{\bX}$ and $\by\in\Val{\bY}$  \emph{are compatible } ($\bx\sim \by$ for short) whenever $\Proj{\bx}{\bX\cap\bY}=\Proj{\by}{\bX\cap\bY}$, \ie they agree on the common names.

%
%
\begin{itemize}
\item The \textbf{sum out} of $\bZ\subseteq\bX$ from  $\ft\phi{\bX}$ is a factor $\sum_\bZ \phi$ over  $ \bY \defeq \bX  - \bZ $, defined as: 
	\[\left(\sum_\bZ \phi\right)(\by)\defeq\sum\limits_{\bz\in \Val{\bZ}}\phi(\bz\, \by)\]

\item	The \textbf{product} of $\ft{\phi_1}{\bX}$ and $\ft{\phi_2}{\bY}$  is a factor $\phi_1\FProd \phi_2$ over $\bZ \defeq \bX\cup \bY$, defined as:
	\[(\phi_1\FProd \phi_2)(\bz) \defeq \phi_1(\bx) \phi_2(\by) \quad  \mbox{where $\bx= \Proj{\bz}{\bX}$ and $\by= \Proj{\bz}{\bY}$.}\]
\condinc{}{\RED{where $\bx\sim \bz$ and $\by\sim \bz$, \ie $\bx$ and $\by$ 
agree with $\bz$ on  the common variables (see  \Cref{notation_BN}).}}
\end{itemize}
We denote $n$-ary products by $\BigFProd_{n} \phi_n $. We denote by $\ftone_\bY\defeq\ft\ftone\bY$ the factor over the set of names $\bY$, sending every tuple of $\Val{\bY}$ to $1$. Observe that $\ft{\phi}{\bX} \odot \ft{\ftone}{\bY} = \ft{\phi}{\bX}$ if $\bY \subseteq \bX$.
Factors over an empty set of variables are allowed, and called trivial. In particular, we write $ \ftone_\emptyset\defeq\ft \ftone \emptyset$ for 
the trivial factor assigning $1$ to the empty tuple.
%
Product and summation  are both commutative, product is associative, and---crucially---they distribute \emph{under suitable conditions}: 
\begin{equation} \label{eq:sum_prod}
	\text{If }\bZ\cap \Nm{\phi_1}=\emptyset \text{ then } 
	\sum_\bZ ( \phi_1 \FProd \phi_2)  = \phi_1 \FProd \left( \sum_\bZ  \phi_2 \right)  
\end{equation}
This distributivity is  the key property  on which  exact inference algorithms rely.
CPT's being factors, they admit sum and product operations. Please notice that the result of such operations is not necessarily a \cpt, but, of course, it is a factor. 
\begin{remark}[Cost of  Operations on Factors]\label{rem:factors_cost}
	Summing out any number of variables from a factor $\phi$ demands $\BigO{2^{w}}$ time and space, with $w$ the number of variables over which $\phi$ is defined.
Multiplying $k$ factors requires 
$\BigO{k\cdot 2^{w}}$
time and space, where $w$ is the number of variables in the resulting factor.
\end{remark}

\subsubsection{The Semantics of Bayesian Networks.}\label{sec:BN}

%
%
%
The \cpt assigned to each node of a \BN is   a  factor $ \phi^X$ over variables $\{X\}\cup \Pa X$, where $\Pa X$  denotes the set of  parents of $X$ in $\mathcal G$. Independence assumptions on BNs yield:
\begin{theorem}[\cite{Pearl86}]\label{thm:pearl}
	A \BN $\bn$ over the set of \rvs $\bX$ defines a unique probability distribution over $\bX$ (its \emph{semantics}):
	$\sem{\bn} \defeq \BigFProd_{X\in \bX} \phi^X$.
\end{theorem}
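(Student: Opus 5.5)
We must show that $\sem{\bn}=\BigFProd_{X\in\bX}\phi^X$ is a probability distribution over $\bX$, i.e.\ it is non-negative and sums to $1$. Uniqueness is by definition: the product of factors is a well-defined function of the CPTs. Non-negativity is immediate, since each $\phi^X$ takes values in $\mathbb{R}_{\geq 0}$ and the factor product is a pointwise product of non-negative reals. The substantive point is normalization:
\[\sum_{\bX}\BigFProd_{X\in\bX}\phi^X=\ftone_\emptyset .\]

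The plan is induction on the number $n$ of nodes of the DAG $\mathcal G$, using the distributivity law \eqref{eq:sum_prod}. The key steps are as follows.

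\begin{enumerate}
\item Since $\mathcal G$ is acyclic and finite, it has a \emph{sink}: a node $X_n$ that is not a parent of any other node.
\item Consequently, $X_n$ does not occur among the names of any factor $\phi^X$ with $X\neq X_n$, because $\Nm{\phi^X}=\{X\}\cup\Pa X$.
\item Split the sum out of $\bX$ into first summing out $X_n$ and then summing out $\bX-\{X_n\}$. This is legitimate because sums over disjoint name sets commute and compose, directly from the definition of sum out.
\item Apply \eqref{eq:sum_prod} with $\phi_1=\BigFProd_{X\neq X_n}\phi^X$ and $\phi_2=\phi^{X_n}$. This is allowed since $\{X_n\}\cap\Nm{\phi_1}=\emptyset$, and it gives
\[\sum_{X_n}\BigFProd_{X}\phi^X=\Big(\BigFProd_{X\neq X_n}\phi^X\Big)\FProd\sum_{X_n}\phi^{X_n}.\]
\item By the CPT condition, $\sum_{X_n}\phi^{X_n}=\ftone_{\Pa{X_n}}$. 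Since $\Pa{X_n}\subseteq\bX-\{X_n\}$ is already covered by the names of $\phi_1$, the remark $\phi\FProd\ftone_\bY=\phi$ for $\bY\subseteq\Nm\phi$ removes this factor.
\item To justify that last use of the remark, check that every parent $Y$ of $X_n$ is itself a node, so $Y\in\Nm{\phi^Y}$.
\item What remains is the product of the CPTs of the BN obtained by deleting $X_n$. This is again a BN: the restricted graph is still a DAG, and each remaining node keeps its parents, since $X_n$ was nobody's parent.
\item The induction hypothesis then yields $\ftone_\emptyset$. The base case $n=0$ is the empty product $\ftone_\emptyset$, whose total is $1$.
\end{enumerate}

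The main obstacle is bookkeeping rather than any deep step. One must ensure the names of the intermediate factors match exactly, so that the product with $\ftone_{\Pa{X_n}}$ genuinely collapses; this is why step 6 is needed before step 5 can be applied.

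One could also read ``defines'' as asserting that this distribution satisfies the Markov independence assumptions, or is the unique distribution doing so. If so, I would add the standard converse: the chain rule along a topological order, combined with the local Markov property, forces every such distribution to factorize as this product. However, I would present the normalization argument above as the core of the proof.
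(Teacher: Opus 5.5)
The paper gives no proof of this statement. It is quoted from Pearl as background and used essentially as the definition of $\sem{\bn}$ (note the $\defeq$), so there is no argument of the paper's to compare yours against.

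Your sink-elimination induction is the standard proof, and it is correct as written. Step~2 is exactly the hypothesis \eqref{eq:sum_prod} requires, and step~6 is the side condition that makes absorbing $\ftone_{\Pa{X_n}}$ legitimate.

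Your inductive step is also the semantic content of the paper's own box/$\weak$ pruning. The proof of \cref{lem:weak_red_sem} rests on the same two facts: $\sum_X\phi^X=\ftone_{\{Y_1,\dots,Y_n\}}$, and absorption of that factor by the rest of the net. There the analogue of your step~6 is left implicit, guaranteed by \cref{lem:jointree}. So inside the paper your induction reads as pruning sink boxes one at a time with $\Red[\weak]$. Your direct version has the advantage of needing only the factor algebra, not compositionality.

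The one point to tighten is what ``unique'' means.

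\emph{Paper's reading.} The product simply \emph{is} the semantics. Then non-negativity plus normalization is all there is to prove, and you have it.

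\emph{Pearl's reading.} The product is the only distribution with conditionals $\Pr(X\mid\Pa X)=\phi^X$ that satisfies the local Markov property. Here your chain-rule sketch gives only ``at most one, and if it exists it is the product''. You must still show that the product itself has these conditionals and is Markov, which normalization does not give you.

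Your own induction supplies this once strengthened.
\begin{enumerate}
\item Let $\mathbf{A}\subseteq\bX$ be closed under parents, with $\bX-\mathbf{A}\neq\emptyset$. Then $\bX-\mathbf{A}$ contains a sink of the whole DAG: take a node with no children in $\bX-\mathbf{A}$; it has none in $\mathbf{A}$ either, since $\mathbf{A}$ is closed under parents.
\item Repeating your step therefore gives $\sum_{\bX-\mathbf{A}}\sem{\bn}=\BigFProd_{Y\in\mathbf{A}}\phi^Y$.
\item Apply this to $\mathbf{A}=\mathrm{ND}(X)\cup\{X\}$ and to $\mathbf{A}=\mathrm{ND}(X)$, the non-descendants of $X$; both are closed under parents.
\item Dividing the two marginals gives $\Pr(X\mid\mathrm{ND}(X))=\phi^X$ wherever the conditional is defined. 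Since $\phi^X$ depends only on $X$ and $\Pa X$, this yields both the conditional-probability constraint and the local Markov property.
\end{enumerate}
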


Notice that, 
 the  \emph{marginal distribution} of $\sem \bn$ over a subset $\bY \subseteq \bX$ is 
 $\sum_{\bX-\bY} \sem \bn$.

\subsection{Semantics of Bayesian \pns}\label{sec:sem}
The semantics  of a \BPN $\R:\Delta$ is given in a similar way to that of a \BN. 

\paragraph{Semantics of boxes.}
We have seen  that 
to each (name of an) atom we associate  a boolean \rv,  and  to each box $\bbox^X: \X,\Yn_1, \dots\Yn_k$  we associate  
$\Pr(X| Y_1, \dots, Y_k)$,  which is a factor, from now on simply noted $\phi^X$.  The semantics of $\bbox^X$ is hence clear:
\[\sem {\bbox^X} \defeq \phi^X\]

\paragraph{Semantics of a \BPN.}
A positive  \pn $\R:\Delta$ defines a marginal distribution over $\Nm{\Delta}$.
The product of the \cpts associated to the boxes  is the joint probability distribution of the  underlying model; 
finally,  the sum out  all the names not appearing in the conclusion provides the desired   marginal distribution.

%
%

\begin{definition}[Semantics of \BPN's]\label{def:sem}\label{thm:PCoh}
	Let $\R :  \Delta$ be a \BPN.   Its  semantics is 
	\[
	\sem{\R} \defeq \sum_{\bX-\Nm{\Delta}} \left(\bigodot_{\bbox\in \Boxes{\pi}} \sem{\bbox} \right)
	\]
	for  $\bX=\bigcup_{\bbox\in \Boxes{\pi}} \Nm{\bbox}$.
\end{definition}

\begin{remark}\label{ex:var_ax}
	If $\R$ is a \pn such that $\Boxes{\pi}=\emptyset$, then $\sem{\pi}={\ftone}_{\emptyset}$. 
\end{remark}

\subsection{Invariance and compositionality}
Such a  definition of the semantics has two  clear advantages: 
\begin{itemize}
	\item the  correspondence with \BNs semantics is immediate, yielding an efficient representation of the probabilistic model;
	\item the invariance of the semantics via reduction is immediate, because both the boxes and the conclusions are invariant. 
\end{itemize}
The crucial  question however  is \emph{compositionality}. 
It is common in denotational semantics to define the interpretation inductively; compositionality is then intrinsic, while the difficulty  is to prove that the semantics is invariant via reduction and expansion. 
By adopting a definition which is global, invariance is for free, but compositionality needs to be proved. 

Is such a  semantics  compatible with  a modular definition, in terms of  components? 
The answer is positive, and in a stronger sense then in previous work~\cite{popl24}, where compositionality follows the inductive, tree-like definition of the type derivation.

\paragraph{Proof-trees compositionality.} By \Cref{th:sequentialisation}, each \pn is the image of a sequent-calculus \pt, as summarized in \Cref{fig:sequentialization}. Adapting  \cite{popl24}, 
given a \pn $\pi$ obtained from the composition of $\pi_1,\ldots,\pi_n$, 
we  obtain $\sem{\pi}$ from $\sem{\pi_1},\ldots,\sem{\pi_n}$, as follows
\begin{equation}\label{eq:compositionality}
	\sem{\pi} = \sum_{\bZ}  \left( {\BigFProd_i \sem{\pi_i}}\right)\quad
	\text{ where } \bZ= \bigcup_i \Nm{\sem{\pi_i}} - \Nm{\Delta}
\end{equation}
Composition  is obtained by \emph{first} performing the product  $ \BigFProd_i\sem {\pi_i} $---which yields a factor  over the names $\bigcup_i \Nm{\sem {\pi_i}}$--- and \emph{then} marginalizing, by summing out the names which do not appear in the conclusion $\Delta$. Such a notion of composition can be thought of as 
	\emph{composition = parallel composition + hiding.}

Notice that---as stressed in~\cite{popl24}, equation~\ref{eq:compositionality} is not trivial,  because sum out and product do not distribute in general, but only under suitable conditions.

\paragraph{Graph compositionality.}
In this paper, we demand a notion of compositionality which is stronger than in  \cite{popl24}, because  we wish to adopt a more general notion of ``component'', 
which is (only) natural  in a graph-theoretical  setting. As seen above,  since a \pn $\N$ is the image of a \pt, it  can surely be inductively decomposed  following the the rules in table  \Cref{fig:sequentialization}. However, a \pn is first of all a graph, which we may want to manipulate  in terms of \emph{arbitrary} sub-graphs. 
We prove that the semantic of a \pn $\R$ can be defined compositionally whatever is the decomposition of the \pn in sub-nets. 
 In particular, as in \Cref{def:decomposition},  we may decompose a \pn $\R$ in two\emph{ components which are connected by an arbitrary number of cuts}.  
Such a decomposition has no correspondence in the inductive world of type derivations  (\cite{popl24}), or of  sequent calculus (where two \pt can be connected at most by a  \emph{single} cut) 

In \Cref{sec:typing_cut_net} we will prove that we can always type such a decomposition in such a way that the semantics is invariant, and the chosen decomposition can be inductively defined.

\subsection{Graphs compositionality}
The proof of graph compositionality relies on the fact that 
atomic \BPN's satisfy a crucial  property  which makes them akin to jointrees, the data structure underlying message passing, the most used 
 algorithm for exact inference
on \BNs. Namely,  in each  atomic \BPN 
the edges labelled by a  same atom $X$ form a tree, hence any two edges labelled by  the same atom $X$ are connected  by a path in which all  edges have label  $X$.
\begin{lemma}[Jointree-like property] \label{lem:jointree}
	Let $\R : \Lambda$ be a positive  \BPN. 
	\begin{itemize}
		\item The restriction of $\R$ to the  edges labelled by a  same atom  $X$ is a \emph{directed} tree, with root the main conclusion of the box  $\bbox^X$.
		\item Every $X\in\Nm {\R} $ is introduced by a box $\bbox^X$. 
	\end{itemize}
	
\end{lemma}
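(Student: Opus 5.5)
The plan is to argue inside the polarized graph $\pol{\R}$, which is a DAG by \Cref{lem:pol_correct}, and to show that the edges labelled $X$ form a sub-DAG in which every node has in-degree at most one and the only source is $\bbox^X$. A finite DAG with these properties is a directed tree rooted at that source. I will carry this out for an atomic \BPN, which is the situation in which the property is used (as the sentence preceding the statement indicates). For a general positive \BPN\ the reading is that the statement concerns its atomic part, and this is where I expect a subtlety (see the last paragraph).

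\emph{Step 1: local in/out analysis.} Fix a name $X$ and consider the subgraph $\R_X$ of edges labelled $X^+$ or $X^-$, oriented as in $\pol{\R}$: $X^+$-edges point downwards and $X^-$-edges point upwards. I go through the node types of \Cref{fig:pn} and record which $X$-edges enter and leave each node in $\R_X$.
\begin{itemize}
\item The box $\bbox^X$ has its main conclusion $X^+$ leaving it. It has no entering $X$-edge, since its negative conclusions are $Y_i^-$ with $Y_i \neq X$.
\item A box $\bbox^Y$ with $Y \neq X$ may have premises labelled $X^-$, which point into it. No $X$-edge leaves it, so it is a sink of $\R_X$.
\item An $\ax$-node with conclusions $X^-, X^+$ has exactly one edge in (the $X^-$) and one out (the $X^+$).
\item A $\cut$-node on $X^+, X^-$ has exactly one edge in (the $X^+$) and one out (the $X^-$).
\item A $\cn$-node has exactly one edge in (its conclusion $X^-$) and several out (its premises).
\item A $\weak$-node has one edge in and none out.
\end{itemize}
Since $\R$ is positive, a pending edge is an $X^+$-edge, and it simply ends in $\R_X$. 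Every edge of $\R_X$ has a source in $\pol{\R}$. For an $X^+$-edge this is the node of which it is a conclusion, using point~\ref{item:pn:2} of \Cref{def:pn}. For an $X^-$-edge it is the node of which it is a premise, which exists because negative edges cannot be conclusions of $\R$.

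\emph{Step 2: global argument.} From Step 1, every node of $\R_X$ has in-degree at most one, and any node of in-degree $0$ that carries an $X$-edge is a box with main name $X$. By condition~(1) of Bayesian proof-nets there is at most one such box. Now take any $X$-edge and walk backwards along incoming edges. The walk is deterministic because in-degrees are at most one, and it terminates because $\pol{\R}$ is acyclic. It therefore ends at a source, which must be $\bbox^X$.

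This gives both items at once. First, if $X \in \Nm{\R}$ then $\bbox^X$ exists, which is the second bullet. Second, $\R_X$ is connected and every node has a unique directed path from $\bbox^X$. Together with acyclicity and in-degree $\le 1$, this makes $\R_X$ a directed tree rooted at the main conclusion of $\bbox^X$.

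\emph{Main obstacle.} The delicate point is to be exhaustive in the case analysis of Step 1, in particular to make sure that no node lets two $X$-edges enter. The non-atomic setting is where this can fail. There, $\otimes$/$\parr$-nodes and cuts on compound formulas could separate $X$-edges: for instance, a cut between $A\otimes\X$ and $A\orth\parr\Xn$ would leave the $X^+$ and $X^-$ edges in different components. I would handle this by restricting to the atomic case as above, or, for the general case, by first passing to normal form and using \Cref{fact:canonical_dec} to apply the argument to $\Net^\at$.
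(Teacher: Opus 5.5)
Your proposal is correct and follows essentially the paper's argument. The paper also inspects the node grammar under the polarized orientation and observes that each $X$-edge has at most one parent with the same name. It notes that the only parentless $X$-edges are main conclusions of boxes or negative conclusions, the latter excluded by positivity. It then uses that a DAG with at most one parent per vertex is a forest, which pairwise distinct main names collapse into a single tree rooted at $\bbox^X$. Like you, the paper implicitly works in the atomic setting: the polarized orientation is only defined there, and the appendix restates the lemma for positive atomic nets.

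The one difference is cosmetic. The paper takes the edges themselves as vertices rather than the nodes. This spares it the in-degree-one claim at a box $\bbox^Y$, which in your node-based version silently assumes that the negative names of a box are pairwise distinct. Also, the $X^-$ edges entering $\bbox^Y$ are its negative conclusions, not premises.
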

The proof is in \cref{sec:proofs_sem}. As a  consequence, the following holds for \emph{every} \BPN.
\begin{corollary}[Internal names]\label{cor:internal} 
	Let $\N$ be a \BPN such that   $\N=\Cuts{\N_1, \N_2}$, with $\N_i:\Gamma_i $. Then 
	\[	 Z\in \Nm {\N_j} \tand Z\not \in \Nm {\Gamma_j}\qquad\Rightarrow \qquad Z\not \in \Nm{\N_i} \mbox{ for  } i\not = j.\]	
\end{corollary}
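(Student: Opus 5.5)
The plan is to derive the corollary directly from \Cref{lem:jointree}, applied to the \BPN $\N$ (or to a positive \BPN of which $\N$ is a sub-net, since being Bayesian is inherited by sub-nets by definition). I would first reduce to the atomic case: by \cref{fact:canonical_dec} and confluence/termination, or more simply by observing that the subgraph of $\N$ made of edges labelled by a given atom $Z$ (including occurrences inside compound formulas, tracked through their syntax trees) still forms a connected structure rooted at $\bbox^Z$. The cleanest route is to work with the set $E_Z$ of edges of $\N$ whose label contains an occurrence of $Z$, and to use \Cref{lem:jointree} (for the ambient positive \BPN, possibly after normalizing, noting that reduction preserves boxes and conclusions) to conclude that $E_Z$ is connected: any two edges carrying $Z$ are joined by a path along which every edge carries $Z$.

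Then I argue by contradiction. Suppose $Z \in \Nm{\N_j}$, $Z\notin \Nm{\Gamma_j}$, and $Z \in \Nm{\N_i}$ for some $i\neq j$. Pick an edge $e_j$ of $\N_j$ and an edge $e_i$ of $\N_i$ both labelled by $Z$. By connectedness, there is a path in $\N$ from $e_j$ to $e_i$ all of whose edges carry $Z$. Since $\N = \Cuts{\N_1,\N_2}$, the only way to leave $\N_j$ is through a conclusion of $\N_j$ entering one of the connecting $\cut$-nodes. Hence the path must traverse some conclusion edge of $\N_j$, i.e.\ some formula of $\Gamma_j$, labelled by (a formula containing) $Z$. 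This gives $Z \in \Nm{\Gamma_j}$, contradicting the hypothesis.

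The main obstacle is the first step: \Cref{lem:jointree} is stated for positive (hence, as used, atomic/normal-form) \BPNs, while $\N$ is an arbitrary \BPN, possibly with compound formulas and cuts between them, and the notion of ``path labelled by $Z$'' must be adapted. I would handle this by noting that a sub-net of a \BPN sits inside a positive \BPN $\N'$, and that connectivity of $Z$-occurrences is preserved under reverse reduction steps (each rule in \Cref{fig:rewriting} only rearranges edges locally, with $\otimes/\parr$ and $\ax$ steps keeping every atom occurrence linked to its dual or its syntax-tree parent); so it suffices to check connectivity in the normal form, where \cref{fact:canonical_dec} puts us in the atomic setting of \Cref{lem:jointree}, with the formula trees of the conclusions hanging below. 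If this invariance argument turns out delicate, the fallback is to re-run the proof of \Cref{lem:jointree} directly on $\N$ by induction on a sequentialization (\Cref{th:sequentialisation}).
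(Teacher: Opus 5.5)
Your argument is correct and is essentially the paper's. The paper gives no separate proof of this corollary: it presents it as an immediate consequence of the jointree-like property (\Cref{lem:jointree}). The intended reasoning is exactly yours: a path of $Z$-carrying edges from inside $\N_j$ to $\N_i$ must leave $\N_j$ through one of its conclusions.

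You add a reduction of the non-atomic and non-positive case to the atomic normal form of an ambient positive net, using the fact that expansions preserve connectivity of $Z$-carrying edges. The paper leaves this step implicit. One small point: the connecting path may run through that ambient net rather than through $\N$ itself. This is harmless, because any exit from the sub-net $\N_j$ still has to cross a conclusion of $\N_j$.
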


Using this fact, we are able to prove that the semantics can  indeed be computed compositionally, validating \Cref{eq:compositionality}, also for   a graph-theoretical notion of component.

\begin{theorem}[Graphs compositionality]
\label{th:compositional}
Let $\R=\Cuts{\R_1,\R_2}:\Delta$ be a \BPN. Then
$\sem {\R} = \sum_{\bZ}  \left(  \sem {\R_1}  \odot \sem {\R_2}\right )$
where $\bZ=  \big(\Nm{\sem{\R_1}} \cup \Nm{\sem{\R_2}}  \big) - \Nm{\Delta}$.
\end{theorem}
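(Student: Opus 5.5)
Since $\R = \Cuts{\R_1,\R_2}$, the boxes of $\R$ are exactly the disjoint union of the boxes of $\R_1$ and of $\R_2$: the cut-nodes joining them contain no boxes. So I would unfold \Cref{def:sem} on all three nets and reduce the statement to one application of the distributivity law \eqref{eq:sum_prod}. Write $\Phi_i \defeq \BigFProd_{\bbox\in\Boxes{\R_i}}\sem{\bbox}$, a factor over $\bX_i \defeq \bigcup_{\bbox\in\Boxes{\R_i}}\Nm{\bbox}$. Then $\sem{\R_i} = \sum_{\bX_i - \Nm{\Gamma_i}}\Phi_i$, where $\Gamma_i$ are the conclusions of $\R_i$. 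Also $\sem{\R} = \sum_{(\bX_1\cup\bX_2)-\Nm\Delta}(\Phi_1\FProd\Phi_2)$, by associativity and commutativity of $\FProd$. Finally, $\Nm{\sem{\R_i}} = \bX_i\cap\Nm{\Gamma_i}$.

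Put $\bW_i \defeq \bX_i - \Nm{\Gamma_i}$, the names internal to $\R_i$. The key step is to show that $\bW_1\cap\bX_2=\emptyset$ and $\bW_2\cap\bX_1=\emptyset$. This is \Cref{cor:internal}: a name $Z\in\bX_j\subseteq\Nm{\R_j}$ that is not in $\Nm{\Gamma_j}$ does not occur in $\R_i$, hence not in $\bX_i$. The corollary needs $\R$ to be a \BPN, which is given.

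The algebra then goes as follows. The sets $\bW_1,\bW_2$ are disjoint, so summing out $\bW_1\cup\bW_2$ splits as $\sum_{\bW_1}\sum_{\bW_2}$. Since $\bW_2\cap\bX_1=\emptyset$, \eqref{eq:sum_prod} gives $\sum_{\bW_2}(\Phi_1\FProd\Phi_2)=\Phi_1\FProd\sum_{\bW_2}\Phi_2$, and symmetrically for $\bW_1$. This yields
\[\sum_{\bW_1\cup\bW_2}(\Phi_1\FProd\Phi_2)=\sem{\R_1}\FProd\sem{\R_2}.\]
It remains to sum out $\bZ$ on both sides and compare index sets. I need to check that $(\bW_1\cup\bW_2)\cup\bZ = (\bX_1\cup\bX_2)-\Nm\Delta$, as a disjoint union.

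This bookkeeping is where I expect the real work. I need $\bW_1\cup\bW_2$ to be disjoint from $\Nm\Delta$, i.e.\ an internal name of $\R_i$ never reaches a conclusion of $\R$. The conclusions of $\R$ are conclusions of $\R_1$ or $\R_2$ not consumed by the connecting cuts. A name in $\bW_1$ is absent from $\R_2$ and from $\Gamma_1$, so it is not in $\Nm\Delta$. Conversely, a name in $(\bX_1\cup\bX_2)-\Nm\Delta$ outside $\bW_1\cup\bW_2$ lies in some $\bX_i\cap\Nm{\Gamma_i}=\Nm{\sem{\R_i}}$, hence in $\bZ$. The reverse inclusion $\bZ\subseteq(\bX_1\cup\bX_2)-\Nm\Delta$ holds by definition.

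One subtlety remains. A name appearing in $\Gamma_i$ but in no box of $\R_i$ (for instance an edge typed by a weakening, or an atom whose box lies on the other side) is not in $\bX_i$. By definition of $\Nm{\sem{\R_i}}$ it is harmless. Still, I would check it against \Cref{lem:jointree} on the positive \BPN containing $\R$, to make sure no name is counted inconsistently.
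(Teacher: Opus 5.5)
Your proposal is correct and follows essentially the same route as the paper. Both proofs unfold \Cref{def:sem} and use \Cref{cor:internal} to show that the internal names $\bW_i$ of one component do not occur in the other. Both then distribute via \Cref{eq:sum_prod} to get $\sem{\R_1}\FProd\sem{\R_2}=\sum_{\bW_1}\sum_{\bW_2}(\Phi_1\FProd\Phi_2)$, and finish by summing out $\bZ$ after checking $(\bX_1\cup\bX_2)-\Nm\Delta=\bZ\uplus\bW_1\uplus\bW_2$. Your argument that $\bW_i\cap\Nm\Delta=\emptyset$ (because $\Delta$ consists of uncut conclusions of $\R_1$ and $\R_2$) simply makes explicit a step the paper compresses into a citation of \Cref{cor:internal}.
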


The proof is in \cref{sec:proofs_sem}.

\section{Weakening:  pruning the graph}\label{sec:pruning}

Recall that, given a \BN $\bn$ defining a probability distribution $\Pr$ over $\bX$, we obtain 
the  \emph{marginal distribution} of $\Pr$ over the variables of interest $\bW \subseteq \bX$ 
by summing out the other variables. 
If we are interested only in the subset $\bW$, it is possible to work with a smaller $\bn$ without loosing information, by \emph{pruning} nodes corresponding to \rvs which will not be used: that is, restricting $\bn$ to ancestors of $\bW$.

On the logical side, it is well  understood (and we used extensively) that   marginalization is logically obtained by weakening.
For example in \cref{fig:ABCDE_D}, to marginalize  $E$ it suffices to cut the edge with a   $\weak$-node.
Pruning corresponds to \emph{actually performing} the standard box/$\weak$ reduction described on \cref{fig:w_red}.
This new rewriting system is well-behaved with respect to previous reductions as well as the semantics. 


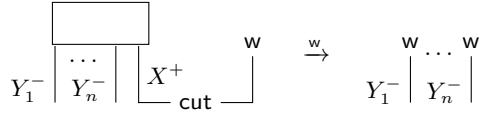
\begin{figure}[t]
	\begin{adjustbox}{}
		\begin{tikzpicture}[baseline=(b)]
			\coordinate (b) at (0,-.5);
			\begin{scope}[every node/.style={rectangle,draw=none}, every path/.style={draw=black}]
				\node[draw=black] (bx) at (0,0) {\phantom{$\boxn^Xpp$}};
				\node (cx) at (1.25,-1.05) {$\cut$};
				\node (w) at (2,-.25) {$\weak$};

				\path (-.6,-1.05) -- node[named edge,left,pos=.25]{$\Yn_1$} ++(0,.75);
				\node at (-.2,-.5) {$\dots$};
				\path (.2,-1.05) -- node[named edge,left,pos=.25]{$\Yn_n$} ++(0,.75);
				\path (.5,-.3) -- node[named edge]{$\X$} ++(0,-.75) -- (cx) -| (w);
			\end{scope}
		\end{tikzpicture}
		$\quad\Red[\weak]\quad$
		\begin{tikzpicture}[baseline=(b)]
			\coordinate (b) at (0,-.5);
			\begin{scope}[every node/.style={rectangle,draw=none}, every path/.style={draw=black}]
				\node (w1) at (-.6,-.25) {$\weak$};
				\node (wn) at (.2,-.25) {$\weak$};

				\path (-.6,-1.05) -- node[named edge,left,pos=.25]{$\Yn_1$} (w1);
				\node at (-.2,-.4) {$\dots$};
				\path (.2,-1.05) -- node[named edge,left,pos=.25]{$\Yn_n$} (wn);
			\end{scope}
		\end{tikzpicture}
	\end{adjustbox}
	\caption{box/$\weak$ reduction}
	\label{fig:w_red}
\end{figure}

\begin{lemma}\label{lem:weak_convergent}
Reduction $\Red[\weak]$ is terminating and confluent.
What is more, $\Red[\rules]\cup\Red[\weak]$ is terminating and confluent.
\end{lemma}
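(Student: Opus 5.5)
Termination of $\Red[\weak]$ follows from a size argument: each box/$\weak$ step deletes one box (and one $\cut$-node) and creates at most $k$ new $\weak$-nodes, but no new box. Measuring a net by its number of boxes, each step strictly decreases the measure, so $\Red[\weak]$ terminates. For confluence I will use Newman's lemma, so only local confluence is needed. Two distinct box/$\weak$ redexes involve distinct boxes. A box $\bbox^X$ has a single positive conclusion, which can be cut against at most one $\weak$-node. Hence two redexes share no node and can be fired in either order, closing the diagram in one step each. The only possible interaction is that firing one redex \emph{creates} a new one, for instance a fresh $\weak$ on some $\Yn_i$. That does not destroy the other redex, so disjointness gives the square.

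For $\Red[\rules]\cup\Red[\weak]$, termination goes through a lexicographic measure (number of boxes, $\mu$). Here $\mu$ is the measure witnessing termination of $\Red[\rules]$ from the earlier Proposition; I will take it to be the usual size of the net counting formula sizes and structural nodes. Steps in $\rules$ never create or erase boxes, since boxes are invariant under reduction, so they strictly decrease $\mu$ with the first component fixed. Steps in $\Red[\weak]$ strictly decrease the first component. Hence the union terminates.

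Confluence of the union again reduces, by Newman's lemma, to local confluence. The $\rules$/$\rules$ peaks are covered by the earlier Proposition and the $\weak$/$\weak$ peaks by the argument above. What remains are the critical pairs between a box/$\weak$ redex and an $\rules$ redex. I will enumerate how they can overlap: the $\rules$ step can only touch the $\weak$-node, the $\cut$, or the negative edges $\Yn_i$ of the box. The positive edge $\X$ is atomic and its $\cut$ with $\weak$ is not an $\rules$ redex, except possibly through a $\weak$/$\cut$ or $\ax$ structural rule of \Cref{fig:rewriting}.

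I expect the main obstacle to be the case where a $\Yn_i$ enters a $\cn$-node or another structural node that also has an $\rules$ redex. After firing box/$\weak$, a $\weak$ sits on a premise of that $\cn$. The $\cn$/$\weak$ structural rule then erases it, and I must check that the result coincides with the one obtained by first doing the $\rules$ step (e.g. a contraction duplication or an $\ax$-cut) and then the box/$\weak$ step followed by the same structural cleanup. I would handle each rule of \Cref{fig:rewriting} adjacent to a $\Yn_i$ in turn. The joins should use only $\weak$-propagation steps already in $\rules$, and these are available because $\weak$ is admissible on negative atoms (\Cref{rk:negatives}).
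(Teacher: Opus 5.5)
Your argument is correct. It is the natural unpacking of what the paper dismisses as immediate (the paper gives no further proof): box/$\weak$ steps strictly decrease the number of boxes, which $\Red[\rules]$ preserves, and box/$\weak$ redexes never overlap, so Newman's lemma applies.

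Note, though, that the mixed peaks you flag as the main obstacle are not genuine critical pairs. No $\rules$ pattern can contain the box, the $\cut$ or the $\weak$ of such a redex: the only $\rules$ cut rule that can involve an atomic $\weak$ requires an $\ax$, and boxes are invariant under $\Red[\rules]$. So an $\rules$ redex meets a box/$\weak$ redex at most along some edge $\Yn_i$, where the box is just an arbitrary source node. The two steps therefore commute directly, with no structural cleanup and no appeal to \Cref{rk:negatives}.
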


\begin{lemma}\label{lem:weak_red_sem}
Given a positive \BPN\ $\netR$, if $\netR\Red[\weak]\netS$ then $\sem{\netR}=\sem{\netS}$.
\end{lemma}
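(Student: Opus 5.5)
It suffices to treat a single step $\netR\Red[\weak]\netS$, a box $\bbox^X:\X,\Yn_1,\dots,\Yn_k$ whose main conclusion is cut against a $\weak$-node, which is replaced by $k$ $\weak$-nodes on the $\Yn_i$. The conclusions $\Delta$ are untouched, so by \Cref{def:sem} we must show
\[
\sum_{\bX-\Nm{\Delta}} \Big(\phi^X \FProd \BigFProd_{\bbox\neq\bbox^X}\sem{\bbox}\Big)
\;=\;
\sum_{\bX'-\Nm{\Delta}} \BigFProd_{\bbox\neq\bbox^X}\sem{\bbox},
\]
where $\bX'$ is the set of names of the remaining boxes. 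The idea is to sum out $X$ first, using commutativity of sum-out and the distributivity law \eqref{eq:sum_prod}, and then use that $\phi^X$ is a \cpt, so $\sum_X \phi^X = \ftone_{\{Y_1,\dots,Y_k\}}$.

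The key step, and the main obstacle, is to show that $X$ occurs in no other box and not in $\Delta$, so that \eqref{eq:sum_prod} applies with $\bZ=\{X\}$ and $X\in\bX-\Nm{\Delta}$. I would use \Cref{lem:jointree}. Since $\netR$ is positive, the edges labelled $X$ form a directed tree rooted at the main conclusion $\X$ of $\bbox^X$. That conclusion enters a $\cut$ whose other premise $\Xn$ is the conclusion of a $\weak$-node, which has no premises. Through the cut, the tree therefore consists only of these two edges. Hence no other box has a premise $\Xn$, and no conclusion in $\Delta$ mentions $X$. In particular $X\notin\Nm{\sem{\bbox}}$ for every $\bbox\neq\bbox^X$.

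With this in hand the computation is as follows. Split the sum as $\sum_{\bX-\Nm{\Delta}-\{X\}}\sum_{X}$. Push $\sum_X$ onto $\phi^X$ by \eqref{eq:sum_prod}, which yields the factor $\ftone_{\{Y_1,\dots,Y_k\}}$. It remains to compare the index sets. Each $Y_i$ is either still the name of some remaining box, by \Cref{lem:jointree} applied to $\netS$ (positive by \Cref{lem:weak_red_sem}'s hypothesis preserved under reduction, the $Y_i$ edges now ending in $\weak$-nodes), or no longer appears in $\netS$ at all. In the first case $\ftone_{\{Y_i\}}$ is absorbed by the observation $\phi\FProd\ftone_\bY=\phi$. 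In the second case $Y_i\notin\Nm{\Delta}$ (it is not a main name; if it were in $\Delta$ it would still occur in $\netS$), so $Y_i$ is summed out over a constant-$1$ factor. This multiplies by $|\Val{Y_i}|$, a discrepancy.

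To avoid this discrepancy I would observe that in a \BPN each $Y_i$ is the main name of some box $\bbox^{Y_i}$ by \Cref{lem:jointree}, and $\bbox^{Y_i}\neq\bbox^X$ since $X\neq Y_i$. So $Y_i\in\bX'$ always, the second case never happens, and $\bX-\{X\}=\bX'$. The remaining sums then coincide.
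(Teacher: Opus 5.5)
Your proof is correct, but it takes a different route from the paper's.

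The paper argues locally. It splits both nets along the redex (\Cref{def:decomposition}), with $\netN$ the box, cut and weakening and $\netM$ the $k$ new weakenings, so that $\pulledout{\netR}{\netN}=\Cuts{\netD,\netN}$ and $\pulledout{\netS}{\netM}=\Cuts{\netD,\netM}$ share the same $\netD$. It then applies graph compositionality (\Cref{th:compositional}) to both and compares only $\sem{\netN}=\sum_X\phi^X=\ftone_{\{Y_1,\dots,Y_k\}}$ with $\sem{\netM}=\ftone_\emptyset$.

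You instead stay with the global formula of \Cref{def:sem}. You use \Cref{lem:jointree} to show that $X$ is internal to the redex, push $\sum_X$ onto $\phi^X$ with \eqref{eq:sum_prod}, and normalize the CPT. You then absorb $\ftone_{\{Y_1,\dots,Y_k\}}$ because each $Y_i$ is the main name of a surviving box. In effect you inline the proof of \Cref{th:compositional}, which rests on the same jointree property via \Cref{cor:internal}, for this one decomposition.

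The paper's version buys locality and illustrates the point of compositionality: only the redex's own semantics matters. It does, however, leave the name bookkeeping implicit. It omits the outer sum of \Cref{th:compositional}, which is harmless since the same names are summed out on both sides. Its step $\sem{\netD}\FProd\ftone_{\{Y_1,\dots,Y_k\}}=\sem{\netD}$ also tacitly requires $Y_i\in\Nm{\sem{\netD}}$. That is exactly your final observation that $\bbox^{Y_i}$ exists and differs from $\bbox^X$. Your version needs no splitting and makes these points explicit.

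One cleanup: delete the case analysis in your third paragraph, including the appeal to the positivity of $\netS$ via the lemma being proved. Your last paragraph already shows, using only $\netR$, that the problematic case never occurs.
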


\Cref{lem:weak_convergent} is immediate, while the proof of \cref{lem:weak_red_sem} is in \cref{sec:proof_pruning}.

\newcommand*{\figR}{\resizebox*{!}{2em}{
\begin{tikzpicture}[baseline=(b)]
\coordinate (b) at (-4.25,-1);
\node at (-4.25,-.6) {$\R$};
\draw[dashed] (-4.25,-.6) ellipse (10mm and 4mm);
\begin{scope}[every node/.style={rectangle,draw=none}, every path/.style={draw=black}]
	\path (-3.5,-.75) -- ++(0,-.5);
	\node at (-3.5, -1.45) {$\Z_1$};
	\path (-4,-.75) -- ++(0,-.5);
	\node at (-4, -1.45) {$\Z_2$};
	\path (-4.5,-.75) - ++(0,-.5);
	\node at (-4.5, -1.45) {$\Z_2$};
	\path (-5,-.75) -- ++(0,-.5);
	\node at (-5, -1.45) {$\X$};
\end{scope}
\end{tikzpicture}}}
\newcommand*{\figRp}{\resizebox*{!}{2em}{
\begin{tikzpicture}[baseline=(b)]
\coordinate (b) at (-4.25,-1);
\node at (-4.25,-.6) {$\R$};
\draw[dashed] (-4.25,-.6) ellipse (10mm and 4mm);
\begin{scope}[every node/.style={rectangle,draw=none}, every path/.style={draw=black}]
	\path (-3.5,-.75) -- ++(0,-.5);
	\node at (-3.5, -1.45) {$\Z_1$};
	\path (-4,-.75) -- ++(0,-.5);
	\node at (-4, -1.45) {$\Z_2$};
	\path (-4.5,-.75) - ++(0,-.5);
	\node at (-4.5, -1.45) {$\Z_2$};
	\node (cutX1) at (-5.75,-1.25) {$\cut$};
	\node (wX1) at (-6.5,-.75) {$\weak$};
	\path (-5,-.75) |- node[named edge,above left]{$\X$} (cutX1);
	\path (wX1) |- node[named edge,above right]{$\Xn$} (cutX1);
\end{scope}
\end{tikzpicture}}}
\newcommand*{\figS}{\resizebox*{!}{2em}{
\begin{tikzpicture}[baseline=(b)]
\coordinate (b) at (.25,-1);
\node at (.25,-.6) {$\netS$};
\draw[dashed] (.25,-.6) ellipse (10mm and 4mm);
\begin{scope}[every node/.style={rectangle,draw=none}, every path/.style={draw=black}]
	\path (-.5,-.75) -- ++(0,-.5);
	\node at (-.5, -1.45) {$\Zn_1$};
	\path (0,-.75) -- ++(0,-.5);
	\node at (0, -1.45) {$\Zn_2$};
	\path (.5,-.75) -- ++(0,-.5);
	\node at (.5, -1.45) {$\Z_1$};
	\path (1,-.75) - ++(0,-.5);
	\node at (1, -1.45) {$\Y$};
\end{scope}
\end{tikzpicture}}}
\newcommand*{\figSp}{\resizebox*{!}{2em}{
\begin{tikzpicture}[baseline=(b)]
\coordinate (b) at (.25,-1);
\node at (.25,-.6) {$\netS$};
\draw[dashed] (.25,-.6) ellipse (10mm and 4mm);
\begin{scope}[every node/.style={rectangle,draw=none}, every path/.style={draw=black}]
	\path (-.5,-.75) -- ++(0,-.5);
	\node at (-.5, -1.45) {$\Zn_1$};
	\path (0,-.75) -- ++(0,-.5);
	\node at (0, -1.45) {$\Zn_2$};
	\path (.5,-.75) -- ++(0,-.5);
	\node at (.5, -1.45) {$\Z_1$};
	\node (cutY1) at (1.75,-1.25) {$\cut$};
	\node (wY1) at (2.5,-.75) {$\weak$};
	\path (1,-.75) |- node[named edge,above right]{$\Y$} (cutY1);
	\path (wY1) |- node[named edge,above left]{$\Yn$} (cutY1);
\end{scope}
\end{tikzpicture}}}

\section{Graphical reasoning on Conditional Independence}\label{sec:conditional_indepenence}

We demonstrate here how to graphically reason with \BPNs.
As for BNs, a simple and intuitive graphical criterion allows to establish conditional independence between sets of \rvs.
Our contribution is a \emph{direct} and immediate graphical proof of soundness.

\BNs allow to deduce conditional independences  by a graphical criterion known as \emph{d-separation}, first  introduced by Pearl~\cite{Pearl88}.
Here we adapt the  reformulation of d-separation in terms of pruning and disconnectedness given by Darwich in~\cite{DarwicheBook} (Thm.~4.1) as a test which can be decided in time and space that are linear in the size of the DAG. 
A similar graphical criterion is used in the string diagram setting~\cite{FritzK23}.

\subsubsection{Conditional Independence.}

Conditional independence is a fundamental property in probability theory.
It is extremely useful in Bayesian modeling: it \emph{simplifies reasoning} about groups of random variables, and is central to several inference algorithms.

\emph{Intuitively}, two random variables are independent if information about one gives no information about the other.
Independence may be \emph{conditioned} on information coming from a third variable.%
\SLV{}{ Two random variables $X$ and $Y$ are independent conditioned on a third variable $Z$ if for every fixed value of $Z$, information about one of $X$ and $Y$ gives no information about the other.}
\emph{Formally}, let $\Pr$ be a probability distribution over \rvs $\bW$ and $\bX,\bY,\bZ\subseteq \bW$ three disjoint sets of \rvs.
\begin{itemize}
	\item
	$\bX$ and $\bY$ are \textdef{independent}, written $\bX\bot\bY$, if $\Pr (\bX,\bY) = \Pr (\bX)\FProd \Pr (\bY)$.
	\item
	$\bX$ and $\bY$ are \textdef{conditionally independent} given $\bZ$, written $\bX\bot \bY \mid \bZ$, if ${\Pr(\bX,\bY\mid\bZ)=\Pr(\bX\mid\bZ) \FProd \Pr(\bY\mid\bZ)}$.
\end{itemize}

\subsubsection{d-Separation on \BPNs.}
Take $\M:\Gamma$ an arbitrary \emph{positive} \BPN in normal form with respect to both $\Red[\rules]$ and $\Red[\weak]$.
Recall it has an atomic sub-net $\M^\at$ with the same semantic (\cref{fact:canonical_dec}).
The following graphical criterion allows to establish conditional independence in the probability distribution $\sem \M = \sem{\M^\at}$ among \rvs $\Nm{\Gamma}$.

\begin{definition}[Disconnection]\label{def:separation}
For $\M:\Gamma$ a positive \BPN in $\Red[\rules,\weak]$-normal form, let $\{\bX,\bY,\bZ\}$ be a partition of $\Nm{\Gamma}$.
We say that $\bX$ and $\bY$ are \textbf{disconnected} by $\bZ$ if there is no path between $\bbox^X$ and $\bbox^Y$ (for any $X\in \bX$ and $Y\in \bY$) in the graph obtained from $\M^\at$ by removing all the edges labeled by $\bZ$.
\end{definition}

Disconnection implies that $\bX$ and $\bY$ are conditionally independent given $\bZ$.

\begin{theorem}[Soundness of the criterion]\label{prop:separation_independance}
With the same assumptions as in \Cref{def:separation}, if $\bX$ and $\bY$ are disconnected by $\bZ$ then $\bX\bot \bY\mid\bZ$ in the distribution $\sem{\M}$.
\end{theorem}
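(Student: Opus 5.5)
The plan is to turn disconnection into a \emph{graph decomposition} of $\M^\at$ and then apply \Cref{th:compositional}. Since $\M$ is $\Red[\rules,\weak]$-normal, its atomic part $\M^\at:\Gamma'$ (with $\Nm{\Gamma'}=\bX\cup\bY\cup\bZ$) has the same semantics, so it suffices to show that $\sem{\M^\at}$, a factor over $\bX\cup\bY\cup\bZ$, splits as $\psi_1\FProd\psi_2$ with $\Nm{\psi_1}\subseteq\bX\cup\bZ$ and $\Nm{\psi_2}\subseteq\bY\cup\bZ$. The standard factorization characterization then gives $\bX\bot\bY\mid\bZ$. Indeed, $\Pr(\bX,\bY,\bZ)=\psi_1\FProd\psi_2$ implies $\Pr(\bZ)=(\sum_\bX\psi_1)\FProd(\sum_\bY\psi_2)$, and dividing yields $\Pr(\bX,\bY\mid\bZ)=\Pr(\bX\mid\bZ)\FProd\Pr(\bY\mid\bZ)$, with the usual convention where $\Pr(\bZ)=0$.

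\textbf{Cutting the $\bZ$-edges.} Each edge of $\M^\at$ labelled by some $Z\in\bZ$ is replaced by its $\ax$-expansion, which gives an expansion $\M'$ with $\sem{\M'}=\sem{\M^\at}$. Erasing the resulting $\ax$/$\cut$ pairs yields exactly the graph of \Cref{def:separation}. Let $\R_1$ be the union of the connected components of that graph containing some $\bbox^X$ with $X\in\bX$, together with their incident $\ax$-nodes. Let $\R_2$ be everything else. By disconnection no $\bbox^Y$ with $Y\in\bY$ lies in $\R_1$, and $\M'=\Cuts{\R_1,\R_2}$.

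\textbf{Checking the components.} One has to verify that $\R_1$ and $\R_2$ are sub-nets. Every edge is the conclusion of a node, since all cut edges were given $\ax$-nodes. Correctness holds because a subgraph of a correct atomic net still has an acyclic polarized graph (\Cref{lem:pol_correct}). The components are Bayesian as sub-nets of a Bayesian net. Their conclusions consist of the original conclusions falling in each component, plus new conclusions labelled only by names in $\bZ$ coming from the cut edges.

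\textbf{Main obstacle: locating the original conclusions.} One must show that every conclusion $X^+$ with $X\in\bX$ is in $\R_1$ and every $Y^+$ with $Y\in\bY$ is in $\R_2$. By \Cref{lem:jointree}, the $X$-labelled edges form a tree rooted at $\bbox^X$, and no such edge was cut since $X\notin\bZ$. So the conclusion $X^+$ is connected to $\bbox^X$ and lies in $\R_1$; symmetrically $Y^+$ lies in $\R_2$. Hence $\Nm{\Gamma_1}\subseteq\bX\cup\bZ$ and $\Nm{\Gamma_2}\subseteq\bY\cup\bZ$, so $\Nm{\sem{\R_i}}\subseteq\Nm{\Gamma_i}$ has the required shape.

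\textbf{Applying compositionality.} \Cref{th:compositional} gives $\sem{\M'}=\sum_{\bW}(\sem{\R_1}\FProd\sem{\R_2})$ with $\bW=(\Nm{\sem{\R_1}}\cup\Nm{\sem{\R_2}})-\Nm{\Gamma'}$. But $\Nm{\sem{\R_i}}\subseteq\bX\cup\bY\cup\bZ=\Nm{\Gamma'}$, so $\bW=\emptyset$ and the sum is trivial. Thus $\psi_i=\sem{\R_i}$ gives the required factorization.
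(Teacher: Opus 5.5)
Your proof is correct and follows the same idea as the paper. You cut $\M^\at$ along its $\bZ$-labelled edges into a sub-net containing the boxes $\bbox^X$ and a sub-net containing the boxes $\bbox^Y$, use \Cref{lem:jointree} to see which names each side can carry, and then apply \Cref{th:compositional}. On the construction, the paper uses the splitting $\pulledout{\M^\at}{\R}$ induced by the component $\R$ of the $\bbox^X$, whereas you $\ax$-expand every $\bZ$-edge; this amounts to the same decomposition.

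The only difference is the last step. You apply compositionality once to get $\sem{\M}=\psi_1\FProd\psi_2$ and finish with the standard algebraic factorization criterion for conditional independence. The paper instead obtains $\Pr(\bX,\bZ)$, $\Pr(\bY,\bZ)$ and $\Pr(\bZ)$ graphically, by plugging $\weak$-nodes onto the $\bX$ and $\bY$ conclusions and applying compositionality to the four resulting nets. That is the same computation, presented diagrammatically.
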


We   give an immediate, direct proof, by using  a  graphical decomposition together with  compositionality of the semantics.

\subsubsection{Proof.}
Call $\Pr$ the probability distribution associated to $\sem{\M}$.
We want\\${\Pr(\bX,\bY\mid\bZ)=\Pr(\bX\mid\bZ) \Fprod \Pr(\bY\mid\bZ)}$.

\paragraph{Setting the ingredients.}
Let $\R$ be the sub-graph whose nodes are those connected to a box $\bbox^X$ (for $X\in\bX$) in the graph obtained from $\M^\at$ by removing all edges labelled by $\bZ$.  
It is easy to check that $\R$ is a sub-net.
We expand $\M^\at$ into $\pulledout{\M^\at}{\R} = \Cuts{\R,\netS}$ which has the same semantic as $\M^\at$, and so as $\M$.

Please notice that $\Nm{\R} \cap \bY = \emptyset$: otherwise, \cref{lem:jointree} yields a path labelled by $Y\in \bY$ between an edge of $\R$ and $\bbox^Y$ in $\netS$; contradiction as edges between $\R$ and $\netS$ are all labelled by $\bZ$.
Similarly, $\Nm{\netS} \cap \bX = \emptyset$.

\paragraph{Graphical proof.} 
We consider four \BPNs in this proof, see \cref{fig:bpns_dseparation} for an illustration: first $\Cuts{\R,\netS}$, then $\Cuts{\R',\netS}$ the weakening of $\Cuts{\R,\netS}$ by $\bX$, next $\Cuts{\R,\netS'}$ the weakening of $\Cuts{\R,\netS}$ by $\bY$, and finally $\Cuts{\R',\netS'}$ the weakening by both $\bX$ and $\bY$.
In the following, we denote $\R$, $\R'$, $\netS$ and $\netS'$ by the corresponding drawings from \cref{fig:bpns_dseparation} (\eg\ \resizebox*{!}{2em}{
\begin{tikzpicture}[baseline=(b)]
\coordinate (b) at (-4.25,-1);
\node at (-4.25,-.6) {$\R$};
\draw[dashed] (-4.25,-.6) ellipse (10mm and 4mm);
\begin{scope}[every node/.style={rectangle,draw=none}, every path/.style={draw=black}]
	\path (-3.5,-.75) -- ++(0,-.5);
	\node at (-3.5, -1.45) {$\Z_1$};
	\path (-4,-.75) -- ++(0,-.5);
	\node at (-4, -1.45) {$\Z_2$};
	\path (-4.5,-.75) - ++(0,-.5);
	\node at (-4.5, -1.45) {$\Z_2$};
	\node (cutX1) at (-5.75,-1.25) {$\cut$};
	\node (wX1) at (-6.5,-.75) {$\weak$};
	\path (-5,-.75) |- node[named edge,above left]{$\X$} (cutX1);
	\path (wX1) |- node[named edge,above right]{$\Xn$} (cutX1);
\end{scope}
\end{tikzpicture}}\ is $\R'$).


\begin{figure}
\begin{adjustbox}{center=\linewidth,max width=.5\linewidth}
\begin{tikzpicture}
\node at (-2,0) {$\Cuts{\R,\netS}$};
\node at (.25,-.6) {$\netS$};
\draw[dashed] (.25,-.6) ellipse (11mm and 6mm);
\node at (-4.25,-.6) {$\R$};
\draw[dashed] (-4.25,-.6) ellipse (11mm and 6mm);
\begin{scope}[every node/.style={rectangle,draw=none}, every path/.style={draw=black}]
	\node (cut1) at (-2,-1.5) {$\cut$};
	\node (cut2) at (-2,-2) {$\cut$};
	\path (-.5,-.9) |- node[named edge,above left]{$\Zn_1$} (cut1) -| node[named edge,above right]{$\Z_1$} (-3.5,-.9);
	\path (0,-.9) |- node[named edge,above left]{$\Zn_2$} (cut2) -| node[named edge,above right]{$\Z_2$} (-4,-.9);
	\path (.5,-.9) -- ++(0,-1.5);
	\node at (.5, -2.6) {$\Z_1$};
	\path (-4.5,-.9) - ++(0,-1.5);
	\node at (-4.5, -2.6) {$\Z_2$};
	\path (-5,-.9) -- ++(0,-1.5);
	\node at (-5, -2.6) {$\X$};
	\path (1,-.9) - ++(0,-1.5);
	\node at (1, -2.6) {$\Y$};
\end{scope}
\node at (0,-2.85) {\phantom{r}}; 
\end{tikzpicture}
\end{adjustbox}
\vrule
\begin{adjustbox}{center=\linewidth,max width=.5\linewidth}
\begin{tikzpicture}
\node at (-2,0) {$\Cuts{\R',\netS}$};
\node at (.25,-.6) {$\netS$};
\draw[dashed] (.25,-.6) ellipse (11mm and 6mm);
\node at (-4.25,-.6) {$\R$};
\draw[dashed] (-4.25,-.6) ellipse (11mm and 6mm);
\node at (-6,-.25) {$\color{red}\R'$};
\draw[dotted,red] (-4.8,-1.25) ellipse (22mm and 16mm);
\begin{scope}[every node/.style={rectangle,draw=none}, every path/.style={draw=black}]
	\node (cut1) at (-2,-1.5) {$\cut$};
	\node (cut2) at (-2,-2) {$\cut$};
	\path (-.5,-.9) |- node[named edge,above left]{$\Zn_1$} (cut1) -| node[named edge,above right]{$\Z_1$} (-3.5,-.9);
	\path (0,-.9) |- node[named edge,above left]{$\Zn_2$} (cut2) -| node[named edge,above right]{$\Z_2$} (-4,-.9);
	\path (.5,-.9) -- ++(0,-1.5);
	\node at (.5, -2.6) {$\Z_1$};
	\path (-4.5,-.9) - ++(0,-1.5);
	\node at (-4.5, -2.6) {$\Z_2$};
	\node (cutX1) at (-5.75,-2) {$\cut$};
	\node (wX1) at (-6.5,-1) {$\weak$};
	\path (-5,-.9) |- node[named edge,above left]{$\X$} (cutX1);
	\path (wX1) |- node[named edge,above right]{$\Xn$} (cutX1);
	\path (1,-.9) - ++(0,-1.5);
	\node at (1, -2.6) {$\Y$};
\end{scope}
\node at (0,-2.85) {\phantom{r}}; 
\end{tikzpicture}
\end{adjustbox}
\hrule
\begin{adjustbox}{center=\linewidth,max width=.5\linewidth}
\begin{tikzpicture}
\node at (-2,0) {$\Cuts{\R,\netS'}$};
\node at (.25,-.6) {$\netS$};
\draw[dashed] (.25,-.6) ellipse (11mm and 6mm);
\node at (-4.25,-.6) {$\R$};
\draw[dashed] (-4.25,-.6) ellipse (11mm and 6mm);
\node at (2,-.25) {$\color{red}\netS'$};
\draw[dotted,red] (.8,-1.25) ellipse (22mm and 16mm);
\begin{scope}[every node/.style={rectangle,draw=none}, every path/.style={draw=black}]
	\node (cut1) at (-2,-1.5) {$\cut$};
	\node (cut2) at (-2,-2) {$\cut$};
	\path (-.5,-.9) |- node[named edge,above left]{$\Zn_1$} (cut1) -| node[named edge,above right]{$\Z_1$} (-3.5,-.9);
	\path (0,-.9) |- node[named edge,above left]{$\Zn_2$} (cut2) -| node[named edge,above right]{$\Z_2$} (-4,-.9);
	\path (.5,-.9) -- ++(0,-1.5);
	\node at (.5, -2.6) {$\Z_1$};
	\path (-4.5,-.9) - ++(0,-1.5);
	\node at (-4.5, -2.6) {$\Z_2$};
	\path (-5,-.9) -- ++(0,-1.5);
	\node at (-5, -2.6) {$\X$};
	\node (cutY1) at (1.75,-2) {$\cut$};
	\node (wY1) at (2.5,-1) {$\weak$};
	\path (1,-.9) |- node[named edge,above right]{$\Y$} (cutY1);
	\path (wY1) |- node[named edge,above left]{$\Yn$} (cutY1);
\end{scope}
\node at (0,.3) {\phantom{r}}; 
\end{tikzpicture}
\end{adjustbox}
\vrule
\begin{adjustbox}{center=\linewidth,max width=.5\linewidth}
\begin{tikzpicture}
\node at (-2,0) {$\Cuts{\R',\netS'}$};
\node at (.25,-.6) {$\netS$};
\draw[dashed] (.25,-.6) ellipse (11mm and 6mm);
\node at (-4.25,-.6) {$\R$};
\draw[dashed] (-4.25,-.6) ellipse (11mm and 6mm);
\node at (2,-.25) {$\color{red}\netS'$};
\draw[dotted,red] (.8,-1.25) ellipse (22mm and 16mm);
\node at (-6,-.25) {$\color{red}\R'$};
\draw[dotted,red] (-4.8,-1.25) ellipse (22mm and 16mm);
\begin{scope}[every node/.style={rectangle,draw=none}, every path/.style={draw=black}]
	\node (cut1) at (-2,-1.5) {$\cut$};
	\node (cut2) at (-2,-2) {$\cut$};
	\path (-.5,-.9) |- node[named edge,above left]{$\Zn_1$} (cut1) -| node[named edge,above right]{$\Z_1$} (-3.5,-.9);
	\path (0,-.9) |- node[named edge,above left]{$\Zn_2$} (cut2) -| node[named edge,above right]{$\Z_2$} (-4,-.9);
	\path (.5,-.9) -- ++(0,-1.5);
	\node at (.5, -2.6) {$\Z_1$};
	\path (-4.5,-.9) - ++(0,-1.5);
	\node at (-4.5, -2.6) {$\Z_2$};
	\node (cutX1) at (-5.75,-2) {$\cut$};
	\node (wX1) at (-6.5,-1) {$\weak$};
	\path (-5,-.9) |- node[named edge,above left]{$\X$} (cutX1);
	\path (wX1) |- node[named edge,above right]{$\Xn$} (cutX1);
	\node (cutY1) at (1.75,-2) {$\cut$};
	\node (wY1) at (2.5,-1) {$\weak$};
	\path (1,-.9) |- node[named edge,above right]{$\Y$} (cutY1);
	\path (wY1) |- node[named edge,above left]{$\Yn$} (cutY1);
\end{scope}
\node at (0,.3) {\phantom{r}}; 
\end{tikzpicture}
\end{adjustbox}
\caption{\BPNs\ considered in the proof of \cref{prop:separation_independance}}
\label{fig:bpns_dseparation}
\end{figure}

By compositionality of the semantic (\cref{th:compositional}) on each of the four \BPNs:
\begin{flalign*}
\Pr(\bX,\bY,\bZ) &= \sem{\resizebox*{!}{2em}{
\begin{tikzpicture}[baseline=(b)]
\coordinate (b) at (-4.25,-1);
\node at (-4.25,-.6) {$\R$};
\draw[dashed] (-4.25,-.6) ellipse (10mm and 4mm);
\begin{scope}[every node/.style={rectangle,draw=none}, every path/.style={draw=black}]
	\path (-3.5,-.75) -- ++(0,-.5);
	\node at (-3.5, -1.45) {$\Z_1$};
	\path (-4,-.75) -- ++(0,-.5);
	\node at (-4, -1.45) {$\Z_2$};
	\path (-4.5,-.75) - ++(0,-.5);
	\node at (-4.5, -1.45) {$\Z_2$};
	\path (-5,-.75) -- ++(0,-.5);
	\node at (-5, -1.45) {$\X$};
\end{scope}
\end{tikzpicture}}} \FProd \sem{\resizebox*{!}{2em}{
\begin{tikzpicture}[baseline=(b)]
\coordinate (b) at (.25,-1);
\node at (.25,-.6) {$\netS$};
\draw[dashed] (.25,-.6) ellipse (10mm and 4mm);
\begin{scope}[every node/.style={rectangle,draw=none}, every path/.style={draw=black}]
	\path (-.5,-.75) -- ++(0,-.5);
	\node at (-.5, -1.45) {$\Zn_1$};
	\path (0,-.75) -- ++(0,-.5);
	\node at (0, -1.45) {$\Zn_2$};
	\path (.5,-.75) -- ++(0,-.5);
	\node at (.5, -1.45) {$\Z_1$};
	\path (1,-.75) - ++(0,-.5);
	\node at (1, -1.45) {$\Y$};
\end{scope}
\end{tikzpicture}}}
\\
\Pr(\bX,\bZ) &= \sum_\bY \Pr(\bX,\bY,\bZ) = \sem{\resizebox*{!}{2em}{
\begin{tikzpicture}[baseline=(b)]
\coordinate (b) at (-4.25,-1);
\node at (-4.25,-.6) {$\R$};
\draw[dashed] (-4.25,-.6) ellipse (10mm and 4mm);
\begin{scope}[every node/.style={rectangle,draw=none}, every path/.style={draw=black}]
	\path (-3.5,-.75) -- ++(0,-.5);
	\node at (-3.5, -1.45) {$\Z_1$};
	\path (-4,-.75) -- ++(0,-.5);
	\node at (-4, -1.45) {$\Z_2$};
	\path (-4.5,-.75) - ++(0,-.5);
	\node at (-4.5, -1.45) {$\Z_2$};
	\path (-5,-.75) -- ++(0,-.5);
	\node at (-5, -1.45) {$\X$};
\end{scope}
\end{tikzpicture}}} \FProd \sem{\resizebox*{!}{2em}{
\begin{tikzpicture}[baseline=(b)]
\coordinate (b) at (.25,-1);
\node at (.25,-.6) {$\netS$};
\draw[dashed] (.25,-.6) ellipse (10mm and 4mm);
\begin{scope}[every node/.style={rectangle,draw=none}, every path/.style={draw=black}]
	\path (-.5,-.75) -- ++(0,-.5);
	\node at (-.5, -1.45) {$\Zn_1$};
	\path (0,-.75) -- ++(0,-.5);
	\node at (0, -1.45) {$\Zn_2$};
	\path (.5,-.75) -- ++(0,-.5);
	\node at (.5, -1.45) {$\Z_1$};
	\node (cutY1) at (1.75,-1.25) {$\cut$};
	\node (wY1) at (2.5,-.75) {$\weak$};
	\path (1,-.75) |- node[named edge,above right]{$\Y$} (cutY1);
	\path (wY1) |- node[named edge,above left]{$\Yn$} (cutY1);
\end{scope}
\end{tikzpicture}}}
\\
\Pr(\bY,\bZ) &= \sum_\bX \Pr(\bX,\bY,\bZ) = \sem{\resizebox*{!}{2em}{
\begin{tikzpicture}[baseline=(b)]
\coordinate (b) at (-4.25,-1);
\node at (-4.25,-.6) {$\R$};
\draw[dashed] (-4.25,-.6) ellipse (10mm and 4mm);
\begin{scope}[every node/.style={rectangle,draw=none}, every path/.style={draw=black}]
	\path (-3.5,-.75) -- ++(0,-.5);
	\node at (-3.5, -1.45) {$\Z_1$};
	\path (-4,-.75) -- ++(0,-.5);
	\node at (-4, -1.45) {$\Z_2$};
	\path (-4.5,-.75) - ++(0,-.5);
	\node at (-4.5, -1.45) {$\Z_2$};
	\node (cutX1) at (-5.75,-1.25) {$\cut$};
	\node (wX1) at (-6.5,-.75) {$\weak$};
	\path (-5,-.75) |- node[named edge,above left]{$\X$} (cutX1);
	\path (wX1) |- node[named edge,above right]{$\Xn$} (cutX1);
\end{scope}
\end{tikzpicture}}} \FProd \sem{\resizebox*{!}{2em}{
\begin{tikzpicture}[baseline=(b)]
\coordinate (b) at (.25,-1);
\node at (.25,-.6) {$\netS$};
\draw[dashed] (.25,-.6) ellipse (10mm and 4mm);
\begin{scope}[every node/.style={rectangle,draw=none}, every path/.style={draw=black}]
	\path (-.5,-.75) -- ++(0,-.5);
	\node at (-.5, -1.45) {$\Zn_1$};
	\path (0,-.75) -- ++(0,-.5);
	\node at (0, -1.45) {$\Zn_2$};
	\path (.5,-.75) -- ++(0,-.5);
	\node at (.5, -1.45) {$\Z_1$};
	\path (1,-.75) - ++(0,-.5);
	\node at (1, -1.45) {$\Y$};
\end{scope}
\end{tikzpicture}}}
\\
\Pr(\bZ) &= \sum_{\bX,\bY} \Pr(\bX,\bY,\bZ) = \sem{\resizebox*{!}{2em}{
\begin{tikzpicture}[baseline=(b)]
\coordinate (b) at (-4.25,-1);
\node at (-4.25,-.6) {$\R$};
\draw[dashed] (-4.25,-.6) ellipse (10mm and 4mm);
\begin{scope}[every node/.style={rectangle,draw=none}, every path/.style={draw=black}]
	\path (-3.5,-.75) -- ++(0,-.5);
	\node at (-3.5, -1.45) {$\Z_1$};
	\path (-4,-.75) -- ++(0,-.5);
	\node at (-4, -1.45) {$\Z_2$};
	\path (-4.5,-.75) - ++(0,-.5);
	\node at (-4.5, -1.45) {$\Z_2$};
	\node (cutX1) at (-5.75,-1.25) {$\cut$};
	\node (wX1) at (-6.5,-.75) {$\weak$};
	\path (-5,-.75) |- node[named edge,above left]{$\X$} (cutX1);
	\path (wX1) |- node[named edge,above right]{$\Xn$} (cutX1);
\end{scope}
\end{tikzpicture}}} \FProd \sem{\resizebox*{!}{2em}{
\begin{tikzpicture}[baseline=(b)]
\coordinate (b) at (.25,-1);
\node at (.25,-.6) {$\netS$};
\draw[dashed] (.25,-.6) ellipse (10mm and 4mm);
\begin{scope}[every node/.style={rectangle,draw=none}, every path/.style={draw=black}]
	\path (-.5,-.75) -- ++(0,-.5);
	\node at (-.5, -1.45) {$\Zn_1$};
	\path (0,-.75) -- ++(0,-.5);
	\node at (0, -1.45) {$\Zn_2$};
	\path (.5,-.75) -- ++(0,-.5);
	\node at (.5, -1.45) {$\Z_1$};
	\node (cutY1) at (1.75,-1.25) {$\cut$};
	\node (wY1) at (2.5,-.75) {$\weak$};
	\path (1,-.75) |- node[named edge,above right]{$\Y$} (cutY1);
	\path (wY1) |- node[named edge,above left]{$\Yn$} (cutY1);
\end{scope}
\end{tikzpicture}}}
\end{flalign*}
(because summing out corresponds to weakening, this follows from compositionality).

Fix values $\bx$, $\by$ and $\bz$.
Call $\bz_r$ the (values of the) variables of $\bz$ appearing in the conclusions of $\R$, and $\bz_s$ those appearing in the conclusions of $\netS$.
Then:
{\allowdisplaybreaks
\begin{flalign*}
\Pr(\bx\mid\bz) \Pr(\by\mid\bz)
&=
\frac{\Pr(\bx, \bz)}{\Pr(\bz)} \frac{\Pr(\by, \bz)}{\Pr(\bz)}
\\
&=
\frac{\sem{\resizebox*{!}{2em}{
\begin{tikzpicture}[baseline=(b)]
\coordinate (b) at (-4.25,-1);
\node at (-4.25,-.6) {$\R$};
\draw[dashed] (-4.25,-.6) ellipse (10mm and 4mm);
\begin{scope}[every node/.style={rectangle,draw=none}, every path/.style={draw=black}]
	\path (-3.5,-.75) -- ++(0,-.5);
	\node at (-3.5, -1.45) {$\Z_1$};
	\path (-4,-.75) -- ++(0,-.5);
	\node at (-4, -1.45) {$\Z_2$};
	\path (-4.5,-.75) - ++(0,-.5);
	\node at (-4.5, -1.45) {$\Z_2$};
	\path (-5,-.75) -- ++(0,-.5);
	\node at (-5, -1.45) {$\X$};
\end{scope}
\end{tikzpicture}}}_{\bx,\bz_r} \sem{\resizebox*{!}{2em}{
\begin{tikzpicture}[baseline=(b)]
\coordinate (b) at (.25,-1);
\node at (.25,-.6) {$\netS$};
\draw[dashed] (.25,-.6) ellipse (10mm and 4mm);
\begin{scope}[every node/.style={rectangle,draw=none}, every path/.style={draw=black}]
	\path (-.5,-.75) -- ++(0,-.5);
	\node at (-.5, -1.45) {$\Zn_1$};
	\path (0,-.75) -- ++(0,-.5);
	\node at (0, -1.45) {$\Zn_2$};
	\path (.5,-.75) -- ++(0,-.5);
	\node at (.5, -1.45) {$\Z_1$};
	\node (cutY1) at (1.75,-1.25) {$\cut$};
	\node (wY1) at (2.5,-.75) {$\weak$};
	\path (1,-.75) |- node[named edge,above right]{$\Y$} (cutY1);
	\path (wY1) |- node[named edge,above left]{$\Yn$} (cutY1);
\end{scope}
\end{tikzpicture}}}_{\bz_s} \sem{\resizebox*{!}{2em}{
\begin{tikzpicture}[baseline=(b)]
\coordinate (b) at (-4.25,-1);
\node at (-4.25,-.6) {$\R$};
\draw[dashed] (-4.25,-.6) ellipse (10mm and 4mm);
\begin{scope}[every node/.style={rectangle,draw=none}, every path/.style={draw=black}]
	\path (-3.5,-.75) -- ++(0,-.5);
	\node at (-3.5, -1.45) {$\Z_1$};
	\path (-4,-.75) -- ++(0,-.5);
	\node at (-4, -1.45) {$\Z_2$};
	\path (-4.5,-.75) - ++(0,-.5);
	\node at (-4.5, -1.45) {$\Z_2$};
	\node (cutX1) at (-5.75,-1.25) {$\cut$};
	\node (wX1) at (-6.5,-.75) {$\weak$};
	\path (-5,-.75) |- node[named edge,above left]{$\X$} (cutX1);
	\path (wX1) |- node[named edge,above right]{$\Xn$} (cutX1);
\end{scope}
\end{tikzpicture}}}_{\bz_r} \sem{\resizebox*{!}{2em}{
\begin{tikzpicture}[baseline=(b)]
\coordinate (b) at (.25,-1);
\node at (.25,-.6) {$\netS$};
\draw[dashed] (.25,-.6) ellipse (10mm and 4mm);
\begin{scope}[every node/.style={rectangle,draw=none}, every path/.style={draw=black}]
	\path (-.5,-.75) -- ++(0,-.5);
	\node at (-.5, -1.45) {$\Zn_1$};
	\path (0,-.75) -- ++(0,-.5);
	\node at (0, -1.45) {$\Zn_2$};
	\path (.5,-.75) -- ++(0,-.5);
	\node at (.5, -1.45) {$\Z_1$};
	\path (1,-.75) - ++(0,-.5);
	\node at (1, -1.45) {$\Y$};
\end{scope}
\end{tikzpicture}}}_{\by,\bz_s}}{\sem{\resizebox*{!}{2em}{
\begin{tikzpicture}[baseline=(b)]
\coordinate (b) at (-4.25,-1);
\node at (-4.25,-.6) {$\R$};
\draw[dashed] (-4.25,-.6) ellipse (10mm and 4mm);
\begin{scope}[every node/.style={rectangle,draw=none}, every path/.style={draw=black}]
	\path (-3.5,-.75) -- ++(0,-.5);
	\node at (-3.5, -1.45) {$\Z_1$};
	\path (-4,-.75) -- ++(0,-.5);
	\node at (-4, -1.45) {$\Z_2$};
	\path (-4.5,-.75) - ++(0,-.5);
	\node at (-4.5, -1.45) {$\Z_2$};
	\node (cutX1) at (-5.75,-1.25) {$\cut$};
	\node (wX1) at (-6.5,-.75) {$\weak$};
	\path (-5,-.75) |- node[named edge,above left]{$\X$} (cutX1);
	\path (wX1) |- node[named edge,above right]{$\Xn$} (cutX1);
\end{scope}
\end{tikzpicture}}}_{\bz_r} \sem{\resizebox*{!}{2em}{
\begin{tikzpicture}[baseline=(b)]
\coordinate (b) at (.25,-1);
\node at (.25,-.6) {$\netS$};
\draw[dashed] (.25,-.6) ellipse (10mm and 4mm);
\begin{scope}[every node/.style={rectangle,draw=none}, every path/.style={draw=black}]
	\path (-.5,-.75) -- ++(0,-.5);
	\node at (-.5, -1.45) {$\Zn_1$};
	\path (0,-.75) -- ++(0,-.5);
	\node at (0, -1.45) {$\Zn_2$};
	\path (.5,-.75) -- ++(0,-.5);
	\node at (.5, -1.45) {$\Z_1$};
	\node (cutY1) at (1.75,-1.25) {$\cut$};
	\node (wY1) at (2.5,-.75) {$\weak$};
	\path (1,-.75) |- node[named edge,above right]{$\Y$} (cutY1);
	\path (wY1) |- node[named edge,above left]{$\Yn$} (cutY1);
\end{scope}
\end{tikzpicture}}}_{\bz_s} \sem{\resizebox*{!}{2em}{
\begin{tikzpicture}[baseline=(b)]
\coordinate (b) at (-4.25,-1);
\node at (-4.25,-.6) {$\R$};
\draw[dashed] (-4.25,-.6) ellipse (10mm and 4mm);
\begin{scope}[every node/.style={rectangle,draw=none}, every path/.style={draw=black}]
	\path (-3.5,-.75) -- ++(0,-.5);
	\node at (-3.5, -1.45) {$\Z_1$};
	\path (-4,-.75) -- ++(0,-.5);
	\node at (-4, -1.45) {$\Z_2$};
	\path (-4.5,-.75) - ++(0,-.5);
	\node at (-4.5, -1.45) {$\Z_2$};
	\node (cutX1) at (-5.75,-1.25) {$\cut$};
	\node (wX1) at (-6.5,-.75) {$\weak$};
	\path (-5,-.75) |- node[named edge,above left]{$\X$} (cutX1);
	\path (wX1) |- node[named edge,above right]{$\Xn$} (cutX1);
\end{scope}
\end{tikzpicture}}}_{\bz_r} \sem{\resizebox*{!}{2em}{
\begin{tikzpicture}[baseline=(b)]
\coordinate (b) at (.25,-1);
\node at (.25,-.6) {$\netS$};
\draw[dashed] (.25,-.6) ellipse (10mm and 4mm);
\begin{scope}[every node/.style={rectangle,draw=none}, every path/.style={draw=black}]
	\path (-.5,-.75) -- ++(0,-.5);
	\node at (-.5, -1.45) {$\Zn_1$};
	\path (0,-.75) -- ++(0,-.5);
	\node at (0, -1.45) {$\Zn_2$};
	\path (.5,-.75) -- ++(0,-.5);
	\node at (.5, -1.45) {$\Z_1$};
	\node (cutY1) at (1.75,-1.25) {$\cut$};
	\node (wY1) at (2.5,-.75) {$\weak$};
	\path (1,-.75) |- node[named edge,above right]{$\Y$} (cutY1);
	\path (wY1) |- node[named edge,above left]{$\Yn$} (cutY1);
\end{scope}
\end{tikzpicture}}}_{\bz_s}}
\\
&=
\frac{\sem{\resizebox*{!}{2em}{
\begin{tikzpicture}[baseline=(b)]
\coordinate (b) at (-4.25,-1);
\node at (-4.25,-.6) {$\R$};
\draw[dashed] (-4.25,-.6) ellipse (10mm and 4mm);
\begin{scope}[every node/.style={rectangle,draw=none}, every path/.style={draw=black}]
	\path (-3.5,-.75) -- ++(0,-.5);
	\node at (-3.5, -1.45) {$\Z_1$};
	\path (-4,-.75) -- ++(0,-.5);
	\node at (-4, -1.45) {$\Z_2$};
	\path (-4.5,-.75) - ++(0,-.5);
	\node at (-4.5, -1.45) {$\Z_2$};
	\path (-5,-.75) -- ++(0,-.5);
	\node at (-5, -1.45) {$\X$};
\end{scope}
\end{tikzpicture}}}_{\bx,\bz_r} \sem{\resizebox*{!}{2em}{
\begin{tikzpicture}[baseline=(b)]
\coordinate (b) at (.25,-1);
\node at (.25,-.6) {$\netS$};
\draw[dashed] (.25,-.6) ellipse (10mm and 4mm);
\begin{scope}[every node/.style={rectangle,draw=none}, every path/.style={draw=black}]
	\path (-.5,-.75) -- ++(0,-.5);
	\node at (-.5, -1.45) {$\Zn_1$};
	\path (0,-.75) -- ++(0,-.5);
	\node at (0, -1.45) {$\Zn_2$};
	\path (.5,-.75) -- ++(0,-.5);
	\node at (.5, -1.45) {$\Z_1$};
	\path (1,-.75) - ++(0,-.5);
	\node at (1, -1.45) {$\Y$};
\end{scope}
\end{tikzpicture}}}_{\by,\bz_s}}{\sem{\resizebox*{!}{2em}{
\begin{tikzpicture}[baseline=(b)]
\coordinate (b) at (-4.25,-1);
\node at (-4.25,-.6) {$\R$};
\draw[dashed] (-4.25,-.6) ellipse (10mm and 4mm);
\begin{scope}[every node/.style={rectangle,draw=none}, every path/.style={draw=black}]
	\path (-3.5,-.75) -- ++(0,-.5);
	\node at (-3.5, -1.45) {$\Z_1$};
	\path (-4,-.75) -- ++(0,-.5);
	\node at (-4, -1.45) {$\Z_2$};
	\path (-4.5,-.75) - ++(0,-.5);
	\node at (-4.5, -1.45) {$\Z_2$};
	\node (cutX1) at (-5.75,-1.25) {$\cut$};
	\node (wX1) at (-6.5,-.75) {$\weak$};
	\path (-5,-.75) |- node[named edge,above left]{$\X$} (cutX1);
	\path (wX1) |- node[named edge,above right]{$\Xn$} (cutX1);
\end{scope}
\end{tikzpicture}}}_{\bz_r} \sem{\resizebox*{!}{2em}{
\begin{tikzpicture}[baseline=(b)]
\coordinate (b) at (.25,-1);
\node at (.25,-.6) {$\netS$};
\draw[dashed] (.25,-.6) ellipse (10mm and 4mm);
\begin{scope}[every node/.style={rectangle,draw=none}, every path/.style={draw=black}]
	\path (-.5,-.75) -- ++(0,-.5);
	\node at (-.5, -1.45) {$\Zn_1$};
	\path (0,-.75) -- ++(0,-.5);
	\node at (0, -1.45) {$\Zn_2$};
	\path (.5,-.75) -- ++(0,-.5);
	\node at (.5, -1.45) {$\Z_1$};
	\node (cutY1) at (1.75,-1.25) {$\cut$};
	\node (wY1) at (2.5,-.75) {$\weak$};
	\path (1,-.75) |- node[named edge,above right]{$\Y$} (cutY1);
	\path (wY1) |- node[named edge,above left]{$\Yn$} (cutY1);
\end{scope}
\end{tikzpicture}}}_{\bz_s}}
\\
&=
\frac{\Pr(\bx,\by,\bz)}{\Pr(\bz)}
=
\Pr(\bx,\by\mid\bz)
\end{flalign*}
}

Hence $\Pr(\bX,\bY\mid\bZ)=\Pr(\bX\mid\bZ) \Fprod \Pr(\bY\mid\bZ)$ as wished.


%
%

\section{Computing the Semantics, efficiently}
\label{sec:computing_efficiently}




Let us consider the cost of computing the semantics of a \BPN\ $\R$ as defined in \Cref{def:sem}.
Denote by $m$ is the number of boxes in $\R$ and $n$ the cardinality of $\Nm{\Boxes{\R}}$.
Recalling \Cref{rem:factors_cost}, computing $\sem{\R}$ according to the formula of \Cref{def:sem} requires to compute and store $\BigO{m \cdot 2^{n}}$ values.
However, \emph{we can do better} if we are able to \emph{factorize} a \BPN in the composition of smaller nets, whose interpretations have a smaller cost.

In proof-theory, the natural way to factorize  a proof in smaller components is to factorize  it in sub-proofs which are  composed together via cuts. 
\begin{example}[Roadmap]\label{ex:roadmap}
Take the \BPN $\R_D$ in \Cref{fig:ABCDE3}.
To interpret it via \Cref{def:sem}, we need to compute $2^5$ values, and then to marginalize.
However, simply by $\ax$-expansion, we can rewrite $\R_D$ into $\R_D'$ of \Cref{fig:ABCDE4}, which (because of compositionality) we are able to  interpret at a smaller cost, as we show in this section. Notice that $\phisem{\R'_D} =\phisem{\R_D}$.
\end{example}

\SLV{}{\pink{
In \Cref{sec:MP} we will  show\todoc{If time allows} that there is a tight correspondence between the decomposition of  a proof-net into a cut-net (introduced below), 
and the  well-known transformation of Bayesian Networks into clique trees.
The correspondence turns out to be so tight, that the cost of computing the interpretation of a cut-net is the same as the cost of inference on  the corresponding clique tree.}}

\subsection{Cut-nets}\label{sec:factorization}
In the literature of linear logic, the decomposition of a proof-net in sub-nets linked together by $\cut$-nodes is called a \textit{cut-net}~\cite{locus}.

\begin{definition}[Cut-net]\label{def:cut-net}
	We write $\R = \CutS{\R_1, \dots, \R_n}$ to denote a \pn $\R$ for which is given a \emph{partition} into sub-nets $\R_1, \dots, \R_n$ and $\cut$-nodes which  \emph{separate} the sub-nets (the  edges of each such $\cut$ are conclusions of  distinct sub-nets).

	$\CutS{\R_1, \dots, \R_n}$ is a \textbf{\cutnet} if the skeleton  graph which has nodes $\{\R_1,...,\R_n\}$ and an edge $(\R_i,\R_j)$ exactly when there is at least one cut between $\R_i$ and $\R_j$ is a \emph{tree}.
\end{definition}

\begin{example}[Cut-net]\label{ex:cutnet}
The \BPN in \Cref{fig:ABCDE4} can be partitioned as a \cutnet with 3 components (and 4 cuts).
The \BPN $\R_D$ in \Cref{fig:ABCDE3} is a (trivial) cut-net which has a single component.
\end{example}

A cut-net is a way to organize a \pn into a \emph{tree-like} structure. Please notice that the choice of the sub-nets is fully arbitrary: each sub-net may itself be a cut-net.
When we need an explicit tree-structure, we use the following notation to indicate which sub-net is chosen as the root, and call the cut-net \emph{rooted}.
The choice of a root is arbitrary: every sub-net can play such a role.



\begin{notation}[Rooted cut-net]
We write $\N= \R(\M_1, \dots, \M_h)$ to denote a factorization of $\N$ as a cut-net where the sub-net $\R$, the root, is connected to each $\M_i$,  these $\M_i$ being themselves cut-nets (with a root).
\end{notation}

\begin{figure}[t]
\begin{minipage}[c]{0.48\linewidth}
	\includegraphics[page=3,width=\linewidth]{FIGS/Fig_Example}\vspace*{-5mm}
	\caption{$\R_D$}
	\label{fig:ABCDE3}
\end{minipage}
\hfill
\begin{minipage}[c]{0.48\linewidth}
	\includegraphics[page=4,width=\linewidth]{FIGS/Fig_Example}
	\caption{$\R_D'$. \quad Notice $\sem {\R_D}= \sem{\R_D'}$}
	\label{fig:ABCDE4}
\end{minipage}
\end{figure}

\begin{example}[Rooted cut-net]\label{ex:fact}
	Consider the  cut-net $\R_D' = \Cuts{\M_1,\M_2, \M_3}$ in 
	\Cref{fig:ABCDE4}. By choosing $\M_1$ (\resp\ $\M_2, \M_3$) as root, we can write this cut-net as
	\begin{center}
	$\R_D'=\M_1 (\M_2(\M_3))$ \quad or \quad $\R_D'=\M_2(\M_1,\M_3)$ \quad or \quad $\R_D'=\M_3(\M_2(\M_1))$
	\end{center}
\end{example}



\subsection{Inference by Interpretation, efficiently}\label{sec:turbo}

Thanks to compositionality, we can inductively interpret a rooted cut-net by following its tree-shape.

\begin{theorem}[Inductive  interpretation]\label{cor:turbo}\label{thm:turbo}
	Let $\R(\N_1, \dots, \N_n):\Delta$ be a rooted cut-net. Noting $\bX=\bigcup_{\bbox\in \Boxes{\pi}} \Nm{\bbox}$, we have:
	\[\phisem{\R(\N_1, \dots, \N_n):\Delta} = \sum_{\bX - \Nm{\Delta}} \left( \sem {\R} \odot \BigFProd_{i:1}^n\phisem{\N_i} \right)\]
\end{theorem}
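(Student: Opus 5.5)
We derive the statement from \Cref{th:compositional}, proceeding by induction on the number $n$ of children of the root $\R$ (equivalently, on the number of sub-nets in the cut-net). If $n=0$, $\N=\R$ and the formula is just \Cref{def:sem} once the box products are regrouped. For $n\geq 1$, the tree shape of the skeleton of \Cref{def:cut-net} shows that removing the edge between $\R$ and $\N_n$ disconnects the skeleton. Hence $\N$ splits as $\Cuts{\R(\N_1,\dots,\N_{n-1}),\N_n}$: the two parts are sub-nets connected only by the cuts between $\R$ and $\N_n$. We must check that $\R(\N_1,\dots,\N_{n-1})$ is indeed a proof-net. It is a sub-graph of a correct graph, so correctness (every cycle uses two premises of a same $\parr$ or $\cn$) is inherited, and the pending-edge condition holds because the removed cuts leave their premises as conclusions. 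It is also Bayesian, being a sub-net of a \BPN.

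Applying \Cref{th:compositional} gives $\sem{\N}=\sum_{\bZ}\left(\sem{\R(\N_1,\dots,\N_{n-1})}\odot\sem{\N_n}\right)$. Writing $\Delta'$ for the conclusions of $\R(\N_1,\dots,\N_{n-1})$, the induction hypothesis gives $\sem{\R(\N_1,\dots,\N_{n-1})}=\sum_{\bX'-\Nm{\Delta'}}\left(\sem{\R}\odot\BigFProd_{i<n}\sem{\N_i}\right)$. We then pull this inner sum out of the product with $\sem{\N_n}$ using the distributivity \Cref{eq:sum_prod}. This is allowed exactly when no name summed out here occurs in $\Nm{\sem{\N_n}}$. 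Such a name is internal to $\R(\N_1,\dots,\N_{n-1})$, i.e.\ it does not occur in $\Delta'$, so by \Cref{cor:internal} it does not occur in $\N_n$. Merging the two sums (sums over disjoint sets compose) and using associativity and commutativity of $\odot$ yields a single sum over the union of the summed names.

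The main obstacle is the bookkeeping showing that this union is exactly $\bX-\Nm{\Delta}$. One inclusion is clear: every summed name lies in $\bX$, and none occurs in $\Delta$, since names of $\Delta$ are conclusions of the relevant part (a name of $\Delta$ occurring in $\R(\dots)$ but not in $\Delta'$ would contradict \Cref{cor:internal}). For the converse, take $Z\in\bX-\Nm{\Delta}$. Either $Z$ is internal to one of the two parts, in which case it is summed there, or it appears on conclusions of the parts but not in $\Delta$, in which case it belongs to the outer $\bZ$. The factor names are the right ones because each $\sem{\N_i}$ lives on the conclusion names of $\N_i$. Finally, if one prefers to avoid names of $\sem{\N_i}$ that are not in $\bX$, note that by \Cref{lem:jointree} every name is introduced by a box, so $\bX$ covers all names involved.
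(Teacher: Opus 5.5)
Your proposal is correct and follows essentially the paper's route: the paper gives no separate proof and presents the result as a direct consequence of \Cref{th:compositional} applied along the tree shape of the cut-net. Your induction, which peels off one child $\N_n$ at a time and justifies \Cref{eq:sum_prod} through \Cref{cor:internal}, is exactly that argument written out; the one point to state more carefully is that $\sum_{\bX-\Nm{\Delta}}$ must be read as summing out only those names of the product factor not in $\Nm{\Delta}$ (names internal to some $\N_i$ do not occur in that factor because they are already summed inside $\sem{\N_i}$), which is what your ``summed there'' bookkeeping amounts to.
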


\begin{example}[Factorized Interpretation]\label{ex:fact_2}  
	Let us compute the interpretation of the cut-net $\sem{\R_D'}$ in \Cref{fig:ABCDE4} rooted on $\M_3$ as $\M_3(\M_2(\M_1))$.
	\begin{center}
		{\small 		$ 	\begin{array}{c|c|c|c}
				\text{cut-net} & \text{computations} &\text{resulting factor} &\Names  \\
				\hline
				\M_1	& \sum_A \phi^A\FProd \phi^B\FProd \phi^C & \phi^1(B,C)& ABC  \\
				\M_2(\M_1)	&  \sum_B \phi^D\FProd \phi^1(B,C) & \phi^2(C,D) & BCD\\
				\M_3(\M_2(\M_1))	& \sum_{C,E}\phi^E\FProd \phi^2(C,D) & \phi^3(D)& CDE
			\end{array} $}
	\end{center}
	Computing this way the semantics of $\R_D'$, the size of the largest intermediary factor that is computed is $2^3$.
	This is to be contrasted with the size of $2^5$ in \Cref{ex:roadmap}.
\end{example}

\subsubsection{Cost Analysis.}

The crucial parameter to determine the cost of the interpretation is the
number of names in each components, which we call width (in analogy with  similar notions in BN's.).
Given a cut-net $\N=\Cuts{\R_1, \dots, \R_n}$,
we define  $\Width{\N}$ as the largest cardinality $|\Nm{\R_i}|$  of any of its sub-nets, minus one.

\begin{example}The cut-net  in \Cref{fig:ABCDE3} has width $4$,  the cut-net in \Cref{fig:ABCDE4} has width $2$. 
\end{example}


\begin{theorem}[Cost of interpreting a cut-net]\label{prop:cost} Let $\N$ be a positive cut-net and $n = |\Nm{\N}|$, the number of names appearing in $\N$.
	The time and space cost of inductively computing $\phisem \N$ is $\BigO{n \cdot 2^ {\Width\N}}$.
\end{theorem}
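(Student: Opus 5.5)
The plan is to induct on the tree structure of the cut-net, computing $\phisem{\N}$ through the recursive formula of \Cref{thm:turbo}. Fix an arbitrary root, so that $\N=\R(\N_1,\dots,\N_h)$. The algorithm computes each $\phisem{\N_i}$ recursively, multiplies the results with $\sem{\R}=\sum(\bigodot_{\bbox\in\Boxes{\R}}\phi^{\bbox})$, and sums out the names that are not in the conclusions. The cost claim then reduces to two things: a bound on the size of every factor produced at a node of the tree, and a count of how many products and sum-outs are performed in total.

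\emph{Step 1 (names of intermediate factors).} I will show that every factor manipulated at the node $\R$ lives over a subset of $\Nm{\R}$. For the factor $\phisem{\N_i}$ passed up from a child, its names are $\Nm{\Gamma_i}$, where $\Gamma_i$ are the conclusions of $\N_i$. Those conclusions that are cut against $\R$ carry names occurring in $\R$. For the remaining names, I use \Cref{cor:internal} together with the jointree property \Cref{lem:jointree}. A name internal to the subtree $\N_i$ cannot reappear in $\R$, so it is summed out before the factor is passed up, and it never enlarges a factor at $\R$. Consequently, every product $\sem{\R}\odot\bigodot_i\phisem{\N_i}$, as well as every partial product formed while computing $\sem{\R}$ from the boxes of $\R$, is a factor over at most $|\Nm{\R}|\le \Width{\N}+1$ names.

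\emph{Step 2 (cost per node and summation).} By \Cref{rem:factors_cost}, multiplying $k$ factors whose result has $w$ variables costs $\BigO{k\cdot 2^{w}}$, and summing out costs $\BigO{2^{w}}$. At the node $\R$, the number $k$ of factors is the number of boxes of $\R$ plus the number of children of $\R$. To keep the total linear in $n$, I will perform the sum-outs eagerly, summing out each internal name as soon as no further factor mentions it. I will then charge each box to the unique component containing it. The number of boxes is at most $n$, since each box has a distinct main name by the Bayesian condition. I will charge each child-to-parent message to the corresponding edge of the tree. Each component contains at least one box or one name, so the number of tree edges is $\BigO{n}$. Summing over all nodes gives $\BigO{n\cdot 2^{\Width{\N}+1}}=\BigO{n\cdot 2^{\Width{\N}}}$ in both time and space.

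\emph{Main obstacle.} The delicate point is Step 1, specifically names of the global conclusions $\Delta$ that are conclusions of a component deep inside a subtree, rather than of the root of that subtree. Such names survive in the message $\phisem{\N_i}$ and could push the factor at $\R$ beyond $\Nm{\R}$. My first attempt will be to handle this through the jointree property. The edges labelled by a name $X$ form a directed tree rooted at $\bbox^X$. Such a name therefore either belongs to the components along the path of the tree it occupies, and so is counted in their widths, or it can be kept apart as a separate factor over $\{X\}$ that is combined only at the end. If this fails in general, I will restrict the statement to cut-nets whose conclusions in $\Delta$ are counted in the component they exit through, which is how the examples compute width.
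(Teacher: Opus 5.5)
Your plan follows the paper's own argument: induct on the rooted cut-net $\R(\N_1,\dots,\N_h)$, bound every factor handled at $\R$ by $2^{|\Nm{\R}|}$, and sum the costs over the skeleton tree. Your use of \cref{cor:internal} for names internal to a subtree is fine. However, the step you single out as the main obstacle is a genuine gap, and neither of your fixes can close it, because the statement fails exactly there.

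A name of $\Delta$ that exits through a non-root component survives in the message $\sem{\N_i}$, which is a factor over $\Nm{\Gamma_i}$, and it need not occur in the parent. More fundamentally, $\sem{\N}$ is a factor over $\Nm{\Delta}$, so just writing it down costs $2^{|\Nm{\Delta}|}$, and $|\Nm{\Delta}|$ is not controlled by $\Width{\N}$. Here is a counterexample:
\begin{itemize}
\item let $\R_0$ contain $\bbox^{Y}$ with no inputs, its output duplicated by $\ax$/$\cn$-nodes into $k$ conclusions $Y^+$;
\item cut each of these against the input $Y^-$ of a box $\bbox^{X_i}$ forming its own component $\R_i$;
\item let the main conclusion $X_i^+$ of each $\bbox^{X_i}$ be a conclusion of $\N$.
\end{itemize}
Every component has at most two names, so $\Width{\N}=1$ and $n=k+1$. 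Yet $\sem{\N}$ is a table over $X_1,\dots,X_k$ with $2^k$ entries. Moreover, the product formed at the root contains all of $\Nm{\Delta}$, whatever root you choose.

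Your idea of keeping $X$ ``apart as a separate factor over $\{X\}$'' cannot work: the $X_i$ are correlated through $Y$, and the final table must be formed anyway. Your fallback restriction, as phrased, adds nothing, since a conclusion is always counted in $\Nm{\cdot}$ of the component it exits through. What the induction actually needs is one of the following:
\begin{itemize}
\item every name of $\Delta$ occurs in every component on the skeleton path from its exit component to the root (for instance, all of $\Delta$ exits through the root);
\item a notion of width that also counts, for each component, the $\Delta$-names exiting in the subtree below it.
\end{itemize}
The paper's own inductive proof has exactly this hole. It asserts without argument that each child message lives over $\Nm{\Gamma_i}\subseteq\Nm{\D}$, where $\D$ is the root, which is the inclusion you rightly doubted.

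Step 2 has a second, independent gap. ``Each component contains at least one box or one name'' does not give $\BigO{n}$ skeleton edges, because distinct components may share names. For example, repeatedly $\ax$-expand the conclusion of a single box $\bbox^{X}$ and put each new $\ax$-node in its own component. This gives arbitrarily many components with $n=1$ and $\Width{\N}=0$, while the inductive computation performs one step per component. Once the width issue is repaired, your accounting proves a bound of the form $\BigO{(b+m)\cdot 2^{\Width{\N}}}$, where $b$ is the number of boxes and $m$ the number of skeleton edges. The paper's proof likewise ends with $\BigO{m\cdot 2^{\Width{\N}}}$. Replacing this by $n$ requires an extra hypothesis, for example that every component contains a box.
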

\SLV{}{
	\begin{proof}
		First, observe that  $Ax{(\R)}\cup\Wirings{\R}$ are the components of the cut-net $\R$, and so the nodes of the correction graph $\tree_\R$. So 
		$m_\R$  counts the number of edges in $\tree_{\R}$ (the number of nodes minus one).
		
		Let $ \Width\R=w $.
		Assume $\R=\D{(\N_1:\Gamma_1, \dots, \N_h:\Gamma_h)}$.	Each   $\phi_i =\sem {\N_i}$ is a factor over $\At{\Gamma_i}\subseteq \At{\D}$, so  that  $\size{\At{\Gamma_i}} \leq  \size{\At{\D}}\leq (w +1)$.   The cost of computing 
		$\BigFProd_{i:1}^h\phi_i$ is  $\BigO{h\cdot \Exp{w+1} }$. Since the cost of the projection is similar, the cost of computing $\sem{\D(\N_1, \dots, \N_h)}$ is  $\BigO{h \cdot \Exp{w}}$ \emph{plus} the cost  of  computing 
		$\phi_1=\sem{\N_1}, \dots \phi_h=\sem{\N_h}$.
		
		
		By \ih, for each $i$, the cost of computing   $\sem{\N_i}$ is $\BigO{m_{\N_i} \cdot \Exp {w}}$, so 
		$\sem{\D({\N_1}, \dots, {\N_h})} = \BigO{(h + \sum_{i:1}^h m_{\N_i}) \cdot \Exp{w} }  $ 
		
		Recall that   $m_{\R}$ is the number of edges in the tree  $\tree_{\R}$, which  has root $\D$, and a subtree $\tree_{\N_i}$ (with $m_{\N_i}$ edges) for each $\N_i$.
		So the number of edges in $\tree_{\R}$ is  $m_{\R} = h+  \sum_{i:1}^h m_{\N_i}$. We conclude that  
		the total cost for computing $\sem{\R}$ 	is	\begin{center}
			$ \BigO{m_{\R} \cdot \Exp {\Width\R}} $.
		\end{center}
	\end{proof}
}

\subsubsection{Factorizing a \BPN, and factorized algorithms.}

We can create a cut-net starting from any \pn by  repeated applications of  splitting  (\cref{def:decomposition}).
From there, we can mimic in \BPN  exact inference algorithms from \BNs, by building a cut-net whose interpretation using \cref{cor:turbo} behaves as the wished computation. 
For example, the \pn on \cref{fig:ABCDE3} can be expanded as the one on \cref{fig:ABCDE4}, that has a smaller cost, by doing the splitting induced by $\M_1$ and then the one induced by $\M_2$ -- and this corresponds to the \emph{variable elimination} algorithm~\cite{DarwicheBook}~(Chapter~6) with the elimination order $(A,B,C)$.
In \cref{sec:ve_in_bpn} is sketched how to mimic variable elimination on \BPNs.

\subsection{Cut-Nets and Typing}
\label{sec:typing_cut_net}

In this paper, we extensively used splitting (\Cref{def:decomposition}) to decompose a \BPN $\R$ into a cut-net $\CutS{\R_1, \dots, \R_n}$ made of several sub-nets $\R_i$ joined by (possibly many) $\cut$-nodes (recall \Cref{def:cut-net}).
This  graphical decomposition has \textit{a priori} no correspondance in \pts of sequent calculus.
Indeed, while each of the $\R_i$ is the image of a \pt $\pi_i$, there is no sequent calculus rule allowing to do multiple cuts at once in sequent calculus, whereas it is quite natural in \pns.
In other words, while $\R$ is for sure the image of a \pt $\pi$, the inductive shape of \pts bears no similarity with the decomposition: there may be no $\pi$ whose image is $\R$ and which contains all $\pi_i$ as sub-proofs.

We prove here that the decomposition of $\R$ as $\CutS{\R_1, \dots, \R_n}$ can be translated back in sequent calculus, \emph{up to some rewriting on $\R$}: $\R$ is the image of a \pt $\pi$ containing as sub-trees the $\pi_i$ whose images are the sub-nets $\R_i$.
This gives a proof-theoretic counterpart to the diagrammatic reasoning.
The crucial  technical result allowing this translation is the following: by $\otimes/\parr$-expansion (the reverse of the $\otimes/\parr$-rule in \cref{fig:rewriting}), we transform the  cut-net $\R$  in an equivalent  one, where there is at most \emph{one} $\cut$-node between any  two  sub-nets. This is obtained by a typing procedure which progressively  turns  $n$ $\cut$-nodes  on the formulas $A_1, \dots, A_n$  into a single $\cut$ on a large  formula built over 
$A_1, \dots, A_n$.
As an example, consider $\R_D'$ on \cref{fig:ABCDE4}: it has two cuts between $\M_1$ and $\M_2$, on formulas $B^+$ and $C^+$, and two cuts between $\M_2$ and $\M_3$, on formulas $C^+$ and $D^+$.
Performing two $\otimes/\parr$-expansions, we can obtain a \emph{correct} cut-net with one cut between $\M_1$ and $\M_2$, on the formula $B^+ \otimes C^+$, and one cut between $\M_2$ and $\M_3$, on $C^+ \parr D^+$.

Given a cut-net with at most one cut between any two sub-nets, its translation as a \pt is immediate, because the tree-structure of the cut-net is explicit: it suffices to take \pts corresponding to its sub-nets, and compose them by $\cut$-rules corresponding to each $\cut$-node.
Going back on our example, here is the start of a \pt corresponding to the expansion of $\R_D'$, with $\pi_i$ a \pt corresponding to $\M_i$ (with additional $\otimes$ and $\parr$ added during the expansion):
\begin{center}
{\footnotesize
$\infer[\cut(C^+ \parr D^+)]{\vdash D^+}{
 \infer[\cut(B^+\otimes C^+)]{\vdash C^+ \parr D^+}{{\infer{\vdash B^+\otimes C^+}{\pi_1}}&   \infer{\vdash B^-\parr C^-,C^+ \parr D^+}{\pi_2}}
 &   \infer{\vdash C^-\otimes D^-, D^+}{\pi_3}
 }$
 }
\end{center}

\begin{theorem}
\label{th:typingcutnet}
Given a cut-net $\R = \CutS{\R_1, \dots, \R_n}$, there is a sequence of $\otimes/\parr$-expansions (the reverse of the $\otimes/\parr$-rule in \cref{fig:rewriting}) on the cuts between the pairs $(\R_i,\R_j)$ such that the result is a cut-net $\R' = \CutS{\R_1', \dots, \R_n'}$ with \textbf{at most one cut} between each pair $(\R_i', \R_j')$.
\end{theorem}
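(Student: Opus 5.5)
The plan is to merge the cuts between each pair of adjacent sub-nets one at a time, using $\otimes/\parr$-expansions, and to check that correctness is preserved at every step. Fix a pair $(\R_i,\R_j)$ joined by $k\geq 2$ $\cut$-nodes, on formulas $A_1,A_1\orth$ and $A_2,A_2\orth$ (with $A_1,A_2$ conclusions of $\R_i$ and $A_1\orth,A_2\orth$ conclusions of $\R_j$). A single $\otimes/\parr$-expansion replaces these two cuts by a $\otimes$-node with premises $A_1,A_2$ (assigned to $\R_i$), a $\parr$-node with premises $A_1\orth,A_2\orth$ (assigned to $\R_j$), and one cut on $A_1\otimes A_2$ against $A_1\orth\parr A_2\orth$. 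This lowers the number of cuts between $\R_i$ and $\R_j$ by one. It leaves the partition, the separation of cuts and the skeleton tree unchanged, since no other pair is affected. Iterating over all pairs, $k-1$ times each, terminates with at most one cut per pair, because the total number of cuts strictly decreases.

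The only real issue is the choice of \emph{polarity}: which side receives the $\otimes$ and which the $\parr$. The typed graph after the expansion must again satisfy \Cref{def:pn}, point~\ref{item:pn:3}. The $\parr$ side is harmless, since a cycle through two premises of the new $\parr$ is allowed. A cycle through both premises of the new $\otimes$, however, corresponds in the old net to a path from $A_1$ to $A_2$ that avoids the two cuts. The $\otimes$ must therefore go on a side where no such ``bad'' path exists. The whole path has the form: from the cut on $A_1$, through $\R_j$, back through the cut on $A_2$, then through $\R_i$. Correctness of the original net says that any such cycle already uses two premises of a same $\parr$ or $\cn$, and that node lies in $\R_i$, in $\R_j$, or elsewhere.

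The argument I would make is this. Because the skeleton is a tree, any cycle passing through both cuts either stays inside $\R_i\cup\R_j$ plus the cuts between them, or leaves through another neighbour and must come back through the same neighbour subtree. So we can reason locally. Take the cycle $\gamma$ formed by the two cuts, a path $p_i$ in $\R_i$ linking $A_1$ to $A_2$, and a path $p_j$ in $\R_j$ linking $A_1\orth$ to $A_2\orth$, where such paths avoid passing through both premises of a $\parr$ or $\cn$ in their respective sub-nets. Both $p_i$ and $p_j$ cannot exist simultaneously, since their union would be a forbidden cycle. So at least one side, say $\R_i$, admits no path from $A_1$ to $A_2$ that is correct in this sense. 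Place the $\otimes$ on that side and the $\parr$ on the other.

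To make this rigorous, I would use the Danos--Regnier formulation: acyclicity and connectivity of all switchings, or just acyclicity here since mix-like components are allowed. I would then show that the new $\otimes$ creates no switching cycle, by the argument above applied to switchings. The $\parr$ switchings on the other side choose one premise, so they only cut paths and never create cycles.

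I expect this step to be the main obstacle: showing that one side always lacks a switching path between the two cut formulas, uniformly over all switchings of the rest of the net. Paths detouring through other sub-nets must be handled via the tree structure of the skeleton. I would first try a minimal-counterexample argument, taking two switching paths (one per side) and gluing them into a switching cycle, adjusting the switchings of nodes that the two paths share. Since $\R_i$ and $\R_j$ are disjoint and the other components hang off them as subtrees, the two switchings can be chosen compatibly.
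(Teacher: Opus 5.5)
Your one-pair-at-a-time strategy has a genuine gap in the correctness check. You only examine cycles through \emph{both} premises of the new $\otimes$, and you call the $\parr$ side harmless. A cycle can also cross the new $\cut$: it enters the new $\otimes$ through one premise and leaves the new $\parr$ through one premise. Such a cycle uses a single premise of each new node, so it is forbidden. In the old net it corresponds to a path from $A_2$ to $A_1\orth$ (or from $A_1$ to $A_2\orth$) that avoids $c_1,c_2$.

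For $k=2$ no such path exists, because the tree skeleton forces every path between the two sides through $c_1$ or $c_2$; in that case your argument is essentially correct. For $k\geq 3$ such a path can run through a third cut, and the chosen pair may then be unmergeable in \emph{both} orientations. Here is a concrete instance.
\begin{itemize}
\item Let $\R_1$ consist of an axiom with conclusions $\X,\Xn$ and a box $\bbox^Z$ with sole conclusion $\Z$.
\item Let $\R_2$ consist of a box $\bbox^X$ with conclusions $\Zn,\X$ and a $\weak$-node with conclusion $\Xn$.
\item Let $c_1$ cut the two $Z$-edges, $c_2$ cut the axiom's $\X$ against the $\weak$-node, and $c_3$ cut the axiom's $\Xn$ against the $\X$ of $\bbox^X$.
\end{itemize}
The whole net is a path, hence correct (it is even Bayesian), and its skeleton is a single edge. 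After removing $c_1,c_2$, both $\bbox^Z$ and the $\weak$-node are isolated, so your criterion allows either orientation for merging $c_1,c_2$. Yet both orientations fail:
\begin{itemize}
\item With $\Z\otimes\X$ in $\R_1$ and $\Zn\parr\Xn$ in $\R_2$, switching the $\parr$ on $\Zn$ gives the cycle $\otimes$--axiom--$c_3$--$\bbox^X$--$\parr$--new cut--$\otimes$.
\item With the orientation reversed, switching the new $\parr$ on the axiom's $\X$ gives the analogous cycle.
\end{itemize}
Merging $c_1$ with $c_3$ first, and then the result with $c_2$, does work.

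So ``fix any pair and put the $\otimes$ on the side without a bad path'' is false, and iterating in an arbitrary order is unjustified. You would need a lemma guaranteeing that, among $k\geq 2$ cuts between two sub-nets, \emph{some} pair with some orientation is mergeable uniformly over all switchings. That is the real content of the theorem, and your gluing argument cannot supply it. For $k\geq 3$ the two sides' switchings no longer decouple, since paths between the premises of $c_1$ and $c_2$ may themselves detour through the other sub-net via further cuts.

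The paper avoids this by induction on a sequentialization of $\R$. In the $\otimes$- and $\cut$-cases both halves are already proper by induction, so for each pair at most one cut from each half has to be merged. Because the halves meet only at the last node $v$, no crossing path exists. The orientation is then dictated by the skeleton: $\parr$ goes in the component nearer to $v$ and $\otimes$ in the farther one. When $v$ is itself a cut between the two components, two successive expansions are needed.
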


The  proof (in \cref{sec:proof_typing}) is constructive, and yields a linear algorithm for turning a cut-net into one with at most one cut between each pair of sub-nets.
Between each pair $(\R_i', \R_j')$, the type of the obtained unique cut (\ie\ the formula labelling it) is obtained from the types of the starting cuts between $R_i$ and $R_j$ by means of $\parr$ and $\otimes$.

\bibliographystyle{splncs04}
\bibliography{biblioB,biblioBN}

\newpage
\appendix


\section*{APPENDIX}

\section{Multiplicative Linear Logic with Boxes}
\label{app:sequent}

\condinc{}{
We point out some properties which we use in the proofs.
\begin{remark}\label{rk:subnet}
	Notice that a sub-graph of a corect typed graph is correct. 
\end{remark}

\begin{remark}\label{rk:normal_atomic}
	If $\ProofN$ is a normal \pn with atomic conclusions, then $\ProofN$ is atomic.
\end{remark}

\begin{property}[normal proof-nets]\label{lem:normal_net}
	Let $\R:\Delta$ be a  \emph{normal} \pn. Then 
	(1.) the premises of each $\cut$-node are atomic and
	(2.) the positive premise of each cut-node is conclusion of a $\Ax$-node. 
\end{property}

	\begin{lemma*}[\ref{lem:pol_correct} Polarized correctness] For $\M$  an atomic  \emph{raw} module, we denote by  $\pol{\M}$  
		the  graph which has the same nodes and edges  as $\M$,  but where the edges are directed 
		\SLV{}{according to their polarity:} downward if positive,  upwards if negative. 	
		The following are equivalent: (1.)
		$\M$ is acyclic correct, (2.)
		$\pol \M$   is a DAG.
	\end{lemma*}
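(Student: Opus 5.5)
The plan is to read off, node by node, the local orientation that $\pol{\M}$ induces. Then both directions of the equivalence reduce to a simple counting remark about "turning points" on a cycle. Since $\M$ is atomic, its only possible nodes are $\boxn$, $\ax$, $\cut$, $\weak$ and $\cn$; there are no $\otimes$, $\parr$, $\One$ or $\Bottom$ nodes. In particular the only nodes with "several premises" relevant to correctness are the $\cn$-nodes. Throughout, cycles are understood as simple cycles of the underlying undirected graph. Pending edges can never lie on a cycle, so they may be ignored.

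\textbf{Step 1 (local orientation).} Orient positive edges downward, i.e. from the node of which they are a conclusion to the node of which they are a premise, and negative edges upward. I would then check each node type.
\begin{itemize}
\item A $\boxn$-node has its negative conclusions $\Yn_i$ entering it and its single positive conclusion $\X$ leaving it, so it has exactly one outgoing edge.
\item An $\ax$-node with conclusions $\X,\Xn$ has one incoming edge ($\Xn$) and one outgoing edge ($\X$).
\item A $\cut$-node has its positive premise entering and its negative premise leaving, so again exactly one outgoing edge.
\item A $\weak$-node has its conclusion $\Xn$ entering and no outgoing edge.
\item A $\cn$-node on $\Xn$ has its conclusion entering and all of its premises leaving.
\end{itemize}
Hence the key observation: \emph{the only nodes with two or more outgoing edges in $\pol{\M}$ are $\cn$-nodes, and their outgoing edges are exactly their premises.}

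\textbf{Step 2 ($\pol{\M}$ a DAG $\Rightarrow$ correct).} Take any cycle $C$ in $\M$. Since $\pol{\M}$ is acyclic, $C$ is not a directed cycle, so its edges do not all go the same way around. It follows that some node of $C$ has both of its $C$-edges oriented outward; this holds because a cyclic sequence of orientations that is not constant must contain both a "source" turning point and a "sink" turning point. By Step 1 such a node is a $\cn$-node and the two edges are two of its premises, so $C$ satisfies the correctness condition.

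\textbf{Step 3 (correct $\Rightarrow$ $\pol{\M}$ a DAG).} Suppose $\pol{\M}$ has a directed cycle, and take a simple one, $C$. At every node of $C$ exactly one $C$-edge enters and one leaves. At a $\cn$-node the entering edge is its conclusion, so $C$ uses only one premise there. There are no $\parr$-nodes in an atomic module. Hence $C$ uses two premises of no $\parr$- or $\cn$-node, contradicting correctness.

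The only delicate step is Step 1: one has to get the orientation conventions exactly right, in particular that the $\cut$-node has exactly one outgoing edge and that the conclusion of a $\cn$-node is incoming. The remainder is the elementary turning-point argument.
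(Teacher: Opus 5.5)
Your proof is correct and is essentially the paper's argument. The paper also observes that, in the polarized orientation, only contraction (and $\parr$) nodes can have more than one outgoing edge. It treats $1\Rightarrow 2$ as immediate (your Step 3), and for $2\Rightarrow 1$ concludes that every switching graph is a DAG in which each node has at most one outgoing edge, hence a forest. Your turning-point argument is that same fact unfolded and applied directly to the cycles in the correctness condition, rather than to switching graphs.
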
		
	
	{\begin{proof}We only prove $2. \Rightarrow 1.$ (the other direction is immediate). Recall that a DAG where each node has at most an outcoming edge is a tree. By inspecting the grammar of nodes, we observe that, w.r.t. the \emph{polarized orientation},
			only $\parr$ and $@$ nodes may have more than one outcoming    edges. Hence each switching graph is a tree. 
	\end{proof}}
}

\Cref{fig:sequentialization} gives the rules to generate the sequent calculus \pts for \MLLAx. It also gives, for each \pt  
 $ \pi $ its  image $\N$ as a proof-net. Reading the figure right to left, we have a   sequentialization of the proof-net $\N$.
\begin{figure}
	\centering
	\includegraphics[page=5,width=1\linewidth]{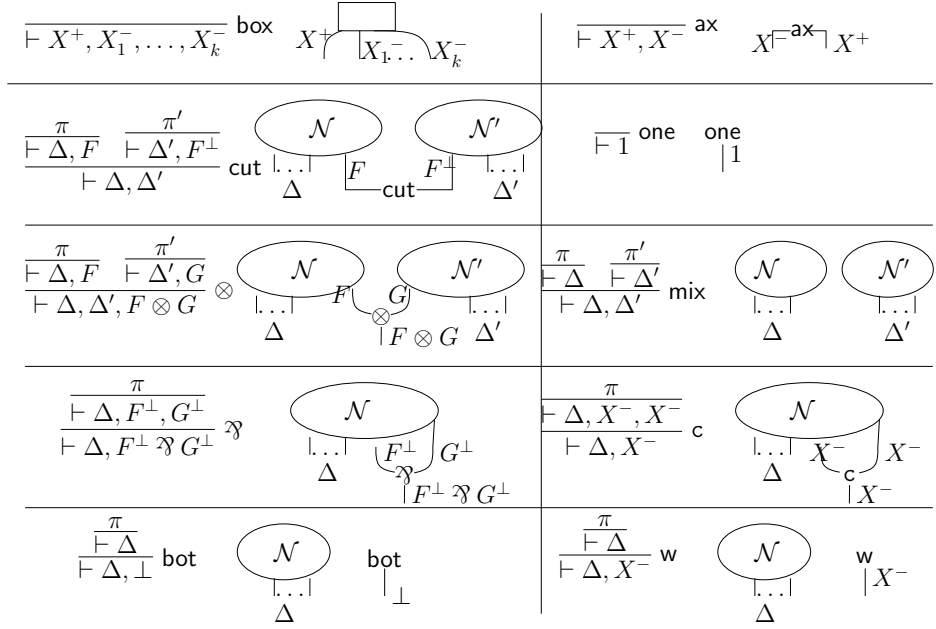}
	\caption{Sequent Calculus, Proof-Nets, and Sequentialization}
	\label{fig:sequentialization}
\end{figure}

\condinc{}{
$ \infer[\mathsf{box}]{\vdash\X,\Xn_1,\dots,\Xn_k}{} $

$ \infer[\ax]{\vdash \X,\Xn}{} $

$ \infer[\mathsf{one}]{\vdash \One}{} $

$\infer[\mathsf{cut}]{\vdash \Delta, \Delta'}{\infer{\vdash  \Delta, F}{\pi}& \infer{\vdash \Delta',F^\bot}{\pi'}} $

$\infer[\otimes]{\vdash \Delta, \Delta', F\otimes G}{\infer{\vdash  \Delta,F}{\pi}& \infer{\vdash \Delta',G}{\pi'}}  $

$\infer[\mathsf{bot}]{\vdash \Delta,\bot}{\infer{\vdash\Delta}{\pi}}$

$\infer[\w]{\vdash \Delta,\Xn}{\infer{\vdash\Delta}{\pi}}$

$\infer[\parr]{\vdash \Delta,F\parr G}{\infer{\vdash\Delta, F,G}{\pi}}$

$\infer[\cn]{\vdash \Delta,\Xn}{\infer{\vdash\Delta, \Xn , \Xn }{\pi}}$

$\infer[\mathsf{mix}]{\vdash \Delta, \Delta'}{\infer{\vdash  \Delta,}{\pi}& \infer{\vdash \Delta'\bot}{\pi'}} $

}

\section{Proofs of \Cref{sec:sem}}
\label{sec:proofs_sem}

	\begin{lemma}[\cref{lem:jointree}, Named paths] 
		Let $\pi : \Lambda$ be  a \pn  and {let $\X$  be the main name   of a box $\bbox^X$}. Then  in 
		$\pi$,  each edge with name    $X$ is  connected connected either to the  main conclusion of $\bbox^X$ or to a 
		negative conclusion $\Xn$   by a  directed path in which all  edges have  name $X$. There exists a unique such  path.
	\end{lemma}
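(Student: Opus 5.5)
We work in the polarized orientation $\pol{\pi}$, which by \Cref{lem:pol_correct} is a DAG once we restrict to the atomic part. The only fragment of $\pi$ that matters is the sub-graph $\pi_X$ made of the edges with name $X$ and the nodes incident to them. The plan has three steps: describe locally how $X$-labelled edges meet at each kind of node, then prove existence of the path by following edges backwards, then prove uniqueness from determinism and acyclicity.

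\textbf{Local analysis.} We go through the grammar of nodes in \Cref{fig:pn} and list how an $X$-labelled atomic edge can be incident to each node.
\begin{itemize}
\item An $\ax$ node has conclusions $\X,\Xn$. In $\pol{\pi}$ the $\Xn$ edge points up into the node and the $\X$ edge points down out of it, so the node passes the flow through.
\item A $\cut$ node has premises $\X,\Xn$. The flow enters along $\X$ and leaves upwards along $\Xn$.
\item A $\cn$ node has premises $\Xn,\Xn$ and conclusion $\Xn$. The flow enters from the conclusion and leaves along both premises.
\item A $\weak$ node has conclusion $\Xn$. The flow enters and stops there.
\item A box can carry $\X$ only as its main conclusion, since $X\neq Y_i$, and $\bbox^X$ is unique in a \BPN. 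It can carry $\Xn$ among its negative conclusions, where the flow stops.
\item Edges labelled by compound formulas ($\otimes,\parr$) contain $X$ only as a subformula. We handle them by treating each occurrence of $X$ inside a formula tree as a thread through the $\otimes/\parr$ node. The statement is then read for atomic edges, or for occurrences of atoms, following \cref{fact:canonical_dec}.
\end{itemize}
The key fact is: \emph{every node that has an $X$-edge entering it in $\pol{\pi}$, other than a $\weak$ or a box consuming $\Xn$, has exactly one predecessor along name $X$.} The sources of $X$-flow are therefore exactly the main conclusion of $\bbox^X$ and the pending negative conclusions $\Xn$ of $\pi$. A pending $\Xn$ is a source because its orientation points into $\pi$ from outside.

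\textbf{Existence.} Take an edge $e$ with name $X$ and walk backwards against $\pol{\pi}$, always choosing the unique incoming $X$-edge. We must check that each backward step stays on name $X$. Going back through $\ax$, $\cut$ and $\cn$ this holds, because these nodes only ever connect atoms with the same name. Since $\pol{\pi}$ is acyclic by \Cref{lem:pol_correct} and $\pi$ is finite, the walk terminates. It can only stop at a node with no incoming $X$-edge. By the local analysis that node is either $\bbox^X$ reached via its main conclusion, or the walk has run out through a pending $\Xn$ of $\pi$. Reversing the walk gives a directed path of $X$-edges from one of these sources to $e$.

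\textbf{Uniqueness.} Any directed $X$-path ending at $e$ is determined by backward steps. Each node has at most one incoming $X$-edge, because $\cn$ has a single conclusion, $\cut$ a single positive premise, and $\ax$ a single negative conclusion. So two such paths coincide step by step, and hence agree on their source.

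\textbf{Main obstacle.} The delicate step is making the local analysis exhaustive. This is easy if $\pi$ is atomic, but compound edges require threading atom occurrences through $\otimes/\parr$ and $\cut$ on compound formulas. I would first prove the statement for atomic nets. I would then extend it to arbitrary nets either via the syntax-tree threading, or via invariance of the named-edge structure under the reductions (which preserve both boxes and conclusions) together with \cref{fact:canonical_dec}.
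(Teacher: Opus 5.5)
Your argument is essentially the paper's. Both inspect the node grammar under the polarized orientation to show that every $X$-labelled edge has at most one $X$-labelled parent, with the parentless ones being the main conclusion of $\bbox^X$ and the pending $\Xn$ conclusions, and both then conclude by acyclicity (the paper phrases your backward walk plus determinism as ``a DAG in which each node has at most one parent is a forest'').

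Like the paper's proof, this only covers atomic nets, and the literal statement fails otherwise: cutting $\X\otimes\One$ (built from $\bbox^X$) against $\Xn\parr\Bottom$ (built from an $\ax$-node) leaves the $\ax$'s conclusion $\X$ with no $X$-path back to $\bbox^X$. So of your two proposed extensions only occurrence-threading is viable, because the $X$-edge structure is not invariant under $\otimes/\parr$ cut reduction.
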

	
	\begin{proof}Let us start with two easy observations. 

		\begin{enumerate}
			\item 
			By inspecting  the grammar of proof-net \Cref{fig:pn} and taking into account the polarized  orientation of the edges , we realize  that in
			$\pi$ each edge has \emph{at most one parent with the same underlying name}. Observe that 	the  only edges which do not have any parent with  the same name (resp., the same atom),  are either  the main edge of a \pax, or the negative conclusions of $\pi$. 
			
			\item From (1), 
			by recalling that a DAG where each node has
			at most one parent is a tree, we have the following key observation:  if we partition  ${\pi}$ into maximal 
			connected subgraphs whose vertices are \emph{edges with the same name $X$},
			then each such subgraph  is a \emph{directed tree}, whose root corresponds  either 
			to the main type of a box, or to a edge  in the context $\Lambda$ of the conclusion.

		\end{enumerate}

		Now let $X$ be the main name of a  \pax. Since $\pi$ is positive, there is no negative conclusion labelled by $\Xn$. Since we require
		the main names of $\pi$  to be pairwise distinct, we conclude that
		in ${\pi}$ there is \emph{exactly one} connected component of 
		edges with  the same name $X$.
	\end{proof}
	
	An immediate consequence is that if a name $X$ is summed out, then necessarily it is a main name.\todoc{??}

\begin{theorem}[\cref{th:compositional}, Graphs compositionality]
Let $\R=\Cuts{\R_1,\R_2}:\Delta$ be a \BPN. Then
\[\sem {\R} = \sum_{\bZ}  \left(  \sem {\R_1}  \odot \sem {\R_2}\right )
\quad
\text{ where } \bZ=  \big(\Nm{\sem{\R_1}} \cup \Nm{\sem{\N_1}}  \big) - \Nm{\Delta} \]
\end{theorem}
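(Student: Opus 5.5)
The plan is to unfold \Cref{def:sem} on both sides and reduce the statement to the distributivity law \eqref{eq:sum_prod}, with \Cref{cor:internal} supplying the side condition. Write $\R_1:\Gamma_1$ and $\R_2:\Gamma_2$. Since $\R=\Cuts{\R_1,\R_2}$ is a partition of $\R$ into two sub-nets plus $\cut$-nodes, and $\cut$-nodes carry no box, we have $\Boxes{\R}=\Boxes{\R_1}\uplus\Boxes{\R_2}$. Setting $\phi_i \defeq \BigFProd_{\bbox\in\Boxes{\R_i}}\sem{\bbox}$, over the names $\bX_i=\bigcup_{\bbox\in\Boxes{\R_i}}\Nm{\bbox}$, associativity and commutativity of $\FProd$ give
\[\sem{\R}=\sum_{(\bX_1\cup\bX_2)-\Nm{\Delta}}(\phi_1\FProd\phi_2),\qquad \sem{\R_i}=\sum_{\bW_i}\phi_i \ \text{ with } \bW_i\defeq\bX_i-\Nm{\Gamma_i}.\]

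Next I would use \Cref{cor:internal}, which applies to $\R$ since it is a \BPN. It says that every $Z\in\bW_1$ (a name occurring in $\R_1$ but not among its conclusions) does not occur in $\R_2$, so $Z\notin\bX_2$. Symmetrically, $\bW_2\cap\bX_1=\emptyset$. Two facts follow.
\begin{itemize}
\item Applying \eqref{eq:sum_prod} twice gives $\sum_{\bW_1\cup\bW_2}(\phi_1\FProd\phi_2)=\sem{\R_1}\FProd\sem{\R_2}$. First sum out $\bW_2$, which is disjoint from $\Nm{\phi_1}$; then sum out $\bW_1$, which is disjoint from $\bX_2\supseteq\Nm{\sem{\R_2}}$. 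Here $\bW_1\cap\bW_2=\emptyset$, so iterated sum-outs combine into one.
\item No name of $\bW_1\cup\bW_2$ lies in $\Nm{\Delta}$. The conclusions $\Delta$ of $\R$ are conclusions of $\R_1$ or of $\R_2$ not consumed by a cut, so $\Nm{\Delta}\subseteq\Nm{\Gamma_1}\cup\Nm{\Gamma_2}$. A name in $\bW_1$ is not in $\Nm{\Gamma_1}$. It is also not in $\Nm{\Gamma_2}$, since it does not occur in $\R_2$ at all.
\end{itemize}

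Now split the set of summed variables as $(\bX_1\cup\bX_2)-\Nm{\Delta}=(\bW_1\cup\bW_2)\uplus\bZ'$. Since $\Nm{\sem{\R_i}}=\bX_i-\bW_i$ (this uses that every conclusion name of $\R_i$ is introduced by a box, by \Cref{lem:jointree}), we get $\bZ'=(\Nm{\sem{\R_1}}\cup\Nm{\sem{\R_2}})-\Nm{\Delta}=\bZ$. Summing out in two stages, first $\bW_1\cup\bW_2$ and then $\bZ$, and using the first fact above, yields $\sem{\R}=\sum_{\bZ}(\sem{\R_1}\FProd\sem{\R_2})$.

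The main obstacle is the bookkeeping of names, namely checking $\Nm{\sem{\R_i}}=\Nm{\Gamma_i}$ and $\Nm{\Delta}\subseteq\Nm{\Gamma_1}\cup\Nm{\Gamma_2}$. This is delicate because the sub-nets $\R_i$ need not be positive: they may have negative conclusions $Y^-$ whose box $\bbox^Y$ lies in the other sub-net. I would handle this by applying \Cref{lem:jointree} and \Cref{cor:internal} only to the ambient \BPN $\R$, never to the $\R_i$ themselves. The one extra fact needed is that a name appearing in a conclusion of $\R_i$ belongs to $\bX_1\cup\bX_2$. This holds because in $\R$ every name is introduced by some box. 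Any such name not in $\bX_i$ then lies in $\Nm{\Gamma_i}$ but contributes no variable to $\phi_i$, so it causes no mismatch; I would state this convention explicitly, reading $\Nm{\sem{\R_i}}$ as the set of variables of the factor.
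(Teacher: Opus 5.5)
Your proposal is correct and follows essentially the same route as the paper's proof. You unfold \Cref{def:sem}, use \Cref{cor:internal} to show that the summed-out internal names $\bW_i$ of each component avoid the other component, distribute via \eqref{eq:sum_prod}, and then sum out $\bZ$. One small point: $\Nm{\sem{\R_i}}=\bX_i-\bW_i$ holds directly by the definition of sum-out, so the extra fact you derive from \Cref{lem:jointree} is unnecessary, and this also sidesteps the positivity of $\R$ that the lemma would require.
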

\begin{proof}
		Let us set $\Phi_{\M}\defeq\BigFProd_{\bbox\in \Boxes{\M}} \sem {\bbox}$ 
	for each  $\M$. We assume that every $\R_i$ has conclusions  $\Delta_i $.
	By  \Cref{def:sem}, we can write $\sem{\R_i}=\sum_{\bW_i} \Phi_{\R_i}$,
	where $\bW_i=\Nm{\Phi_{\R_i}} - \Nm{\Delta_i}$ ($i \in \{1,2\}$). 
	Crucially, by \Cref{cor:internal} , $\bW_1 \cap \Nm{{\R_2}} = \emptyset$ and $\bW_2 \cap \Nm{{\R_1}} = \emptyset$.
	Hence, by \Cref{eq:sum_prod}, sum and product distribute, and we have:
	\begin{align}\label{eq:binary_rule}
		\sem{\R_1} \Fprod \sem{\R_2} = & \sum_{\bW_1} \Phi_{\R_1} \FProd \sum_{\bW_2} \Phi_{\R_2}
		= \sum_{\bW_1} \sum_{\bW_2}  ( \Phi_{\R_1} \FProd  \Phi_{\R_2}) = \sum_{\bW_1} \sum_{\bW_2} \Phi_\R
	\end{align}
	where we used the fact that $\Boxes{\R} = \Boxes{\R_1} \cup \Boxes{\R_2}$.
	Now let $\bY_i\defeq\Nm{\sem{\R_i}}$; since $ \Nm{\Phi_{\R_i}} =  \bY_i \uplus \bW_i$ and,
	by \Cref{cor:internal}, $\bW_i \cap \Nm{\Phi_{\R_j}} =\emptyset$ for $i\not=j$, we have
	\begin{equation}\label{eq:names_prod}
		\Nm{\Phi_\R} = (\bY_1\cup \bY_2) \uplus (\bW_1\uplus \bW_2).
	\end{equation}
	Let $\bZ \defeq (\bY_1\cup\bY_2) - \Nm{\Delta}$; by \Cref{eq:names_prod} and \Cref{cor:internal}, 
	$\Nm{\Phi_\R} - \Nm{\Delta} = \bZ \uplus \bW_1 \uplus \bW_2 $. Therefore
	\[ \sum_\bZ	\big(\sem{\R_1} \Fprod \sem{\R_2} \big) = 
	\sum_\bZ \sum_{\bW_1} \sum_{\bW_2} \Phi_\R = \sum_{\Nm{\Phi_\R}-\Nm \Delta} \Phi_{\R} ~\defeq ~\sem{\R} \] 
	where we  sum out $\bZ$ from both sides of \Cref{eq:binary_rule}.
\end{proof}

\section{Proofs of Section \ref{sec:pruning}}\label{sec:proof_pruning}
\begin{lemma}[\cref{lem:weak_red_sem}]
Given a positive \BPN\ $\netR$, if $\netR\Red[\weak]\netS$ then $\sem{\netR}=\sem{\netS}$.
\end{lemma}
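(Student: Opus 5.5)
The plan is to compute both semantics directly from \Cref{def:sem} and compare them. The box/$\weak$ step of \Cref{fig:w_red} removes exactly one box $\bbox^X$, of conclusions $\X,\Yn_1,\dots,\Yn_n$. It also removes a $\cut$ together with the $\weak$-node on $\Xn$, and it adds $\weak$-nodes on the $\Yn_i$. Nothing else changes, so $\Boxes{\netS}=\Boxes{\netR}-\{\bbox^X\}$ and the conclusions $\Delta$ are the same. Writing $\Phi_\M$ for the product of the box factors, as in the proof of \Cref{th:compositional}, we get $\Phi_\netR=\phi^X\FProd\Phi_\netS$.

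The first step is to locate the name $X$. The main conclusion $\X$ of $\bbox^X$ is cut against a $\weak$-node, so by \Cref{lem:jointree} the tree of $X$-labelled edges rooted at $\bbox^X$ consists only of that edge, the cut and the $\Xn$ of the weakening. Hence $X$ occurs neither in $\Delta$ nor in any other box, that is $X\notin\Nm{\Phi_\netS}$ and $X\notin\Nm{\Delta}$. For each $Y_i$, \Cref{lem:jointree} shows it is introduced by some box $\bbox^{Y_i}$ of $\netR$, and $\bbox^{Y_i}\neq\bbox^X$ since $X\neq Y_i$. So $Y_i$ survives in $\netS$ as a name of $\Phi_\netS$. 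Consequently $\bX_\netR=\bX_\netS\uplus\{X\}$, and the names to be summed out are $(\bX_\netS-\Nm{\Delta})\uplus\{X\}$.

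The second step is the calculation. Using commutativity of sums and \Cref{eq:sum_prod} with $X\notin\Nm{\Phi_\netS}$, we have
\[\sem{\netR}=\sum_{\bX_\netS-\Nm{\Delta}}\sum_{X}\big(\phi^X\FProd\Phi_\netS\big)=\sum_{\bX_\netS-\Nm{\Delta}}\Big(\big(\textstyle\sum_X\phi^X\big)\FProd\Phi_\netS\Big).\]
Since $\phi^X$ is a \CPT for $X$ given $Y_1,\dots,Y_n$, the factor $\sum_X\phi^X$ equals $\ftone_{\{Y_1,\dots,Y_n\}}$. Because $\{Y_i\}\subseteq\Nm{\Phi_\netS}$, we have $\ftone_{\{Y_i\}}\FProd\Phi_\netS=\Phi_\netS$, and therefore $\sem{\netR}=\sem{\netS}$.

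The main obstacle is the bookkeeping on names, not the algebra. We must check that $X$ really disappears and that the $Y_i$ remain in $\Nm{\Phi_\netS}$, so that $\ftone$ is absorbed. If some $Y_i$ did not remain, the absorption would produce a spurious extra variable. The normalisation of the CPT would still make the final marginal agree, but the name sets would differ. I will therefore spell out the use of \Cref{lem:jointree}: positivity of $\netR$ guarantees that every name is rooted at a box. The $Y_i$ tree in $\netS$ is the $Y_i$ tree of $\netR$ with the edge to $\bbox^X$ replaced by a weakened edge, so it is still rooted at $\bbox^{Y_i}$. For the case where $\netR$ is only a sub-net of a \BPN, I will apply the same argument to the ambient net.
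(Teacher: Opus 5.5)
Your proof is correct, but it takes a different route from the paper's.

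\textbf{The paper's route.} The paper never unfolds the global product. It splits $\netR$ along the redex $\netN$ (box, $\cut$, $\weak$-node), and splits $\netS$ along the $\weak$-nodes $\netM$ that replace it. It notes that both splittings leave the same context $\netD$. It then invokes graph compositionality (\cref{th:compositional}), so that only two local facts have to be computed: $\sem{\netN}=\sum_X\phi^X=\ftone_{\{Y_1,\dots,Y_n\}}$ and $\sem{\netM}=\ftone_{\emptyset}$.

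\textbf{Your route.} You unfold \Cref{def:sem} directly. For this single decomposition, you carry out the distributivity argument that the proof of \cref{th:compositional} performs in general. \cref{lem:jointree} supplies the two name conditions you need: $X$ occurs in no other box and not in $\Delta$, and every $Y_i$ still occurs in $\Phi_\netS$.

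\textbf{What each buys.} Your version is more elementary and self-contained. It needs no $\ax$-expansion, no sub-nets with negative conclusions, and no appeal to the compositionality theorem. It also makes explicit some bookkeeping that the paper leaves implicit. In the paper, the outer $\sum_{\bZ}$ of \cref{th:compositional} is dropped. The absorption $\sem{\netD}\Fprod\ftone_{\{Y_1,\dots,Y_n\}}=\sem{\netD}$ silently uses $\{Y_1,\dots,Y_n\}\subseteq\Nm{\sem{\netD}}$, which is exactly your check that the boxes $\bbox^{Y_i}$ survive. The paper's version, in turn, illustrates its intended methodology: the context is treated as a black box, and the proof is a purely local computation on the redex.

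\textbf{Two minor points.}
\begin{enumerate}
\item Your aside that a vanished $Y_i$ would only change the name sets is not right in general. If $Y_i\notin\Nm{\Phi_\netS}\cup\Nm{\Delta}$, summing $Y_i$ out multiplies the result by $|\Val{Y_i}|$. So that check is needed for the equality of values, not just of name sets.
\item The sub-net case you plan to handle via an ambient net is vacuous. A positive \BPN\ itself satisfies both defining conditions, so \cref{lem:jointree} applies to it directly.
\end{enumerate}
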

	\begin{proof}
		By definition, $\netR$ has a sub-net $\netN$ of the shape depicted on the left of \cref{fig:w_red}, and $\netS$ is the same graph with $\netN$ replaced by $\netM$ the shape depicted on the right of \cref{fig:w_red}.
		We now use \cref{def:decomposition}: we have $\pulledout{\netR}{\netN} = \Cuts{\netD,\netN}$ and $\pulledout{\netS}{\netM} = \Cuts{\netD,\netM}$ for some \pn $\netD$, as these two graphs differ only on $\netN$ and $\netM$.
		By compositionality (\cref{th:compositional}), we obtain:
		\begin{gather*}
			\sem{\netR} = \sem{\pulledout{\netR}{\netN}} = \sem{\netD} \Fprod \sem{\netN}
			\\
			\sem{\netS} = \sem{\pulledout{\netS}{\netM}} = \sem{\netD} \Fprod \sem{\netM}
		\end{gather*}
		
		We compute by \cref{def:sem} $\sem{\netM} = {\ftone}_{\emptyset}$ and $\sem{\netN} = \sum_X \phi$ with $\phi$ the \cpt associated to the box of $\netN$.
		But, $\phi$ being a \cpt on $X$, $\sum_X \phi = {\ftone}_{\{Y_1;\dots;Y_n\}}$.
		Thus:
		\begin{equation*}
			\sem{\netR} = \sem{\netD} \Fprod {\ftone}_{\{Y_1;\dots;Y_n\}} = \sem{\netD} = \sem{\netD} \Fprod {\ftone}_{\emptyset} = \sem{\netS}
		\end{equation*}
	\end{proof}

\section{Variable Elimination in \BPNs}
\label{sec:ve_in_bpn}

We sketch here an analog in \BPNs\ of the well-known \textit{variable elimination} algorithm~\cite{DarwicheBook}~(Chapter~6) used to compute efficiently the semantic of a \BN.
The idea the following.
Assume given a positive and atomic \BPN\ $\netC:\Gamma$ and an elimination order $(X_1, X_2, \dots, X_n)$ of all the \rvs in $\Nm{\netC}-\Nm{\Gamma}$.
We wish to compute in a specific way the semantic of $\netC$ which is $\sem{\netC} = \sum_{X_1, X_2, \dots, X_n} \left(\bigodot_{\bbox\in \Boxes{\netD}} \sem{\bbox} \right)$.
In this computation, we want to sum out first $X_1$ by distibuting the sum out on $X_1$ exactly on factors containing it.
This means we first compute $\sum_{X_1} \left(\bigodot_{\bbox\in \Boxes{\netC}\textit{ s.t. }X_1\in\Nm{\bbox}} \sem{\bbox} \right)$, which gives us a new factor.
Please remark this is less costly than computing the sum out of $X_1$ over all boxes.
We then iterate, summing out $X_2$ by again distributing its sum out only on factors containing it (so possibly on the new factor created by the summing out of $X_1$), and so on and so forth until summing out the last variable $X_n$.
As an example, look at \cref{fig:ABCDE3,fig:ABCDE4}: $\R_D'$ is a factorization of $\R_D$ with the order $(A,B,C)$.

To this aim, we expand $\netC$ into a cut-net $\netC'$ such that computing the interpretation of $\netC'$ using \cref{cor:turbo} does exactly that.
Such a $\netC'$ is obtained simply by repeated splittings (\cref{def:decomposition}) on well-chosen sub-nets.

\begin{lemma}
\label{prop:ve}
Take $\netC:\Gamma$ a positive and atomic \BPN\ and an elimination order $(X_1, \dots, X_n)$ of all the \rvs in $\Nm{\netC}-\Nm{\Gamma}$.
There is an expansion $\netC'$ of $\netC$ such that computing $\sem{\netC'}$ with \cref{cor:turbo} does exactly variable elimination for the given order.
\end{lemma}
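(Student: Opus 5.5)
We proceed by induction on the length $n$ of the elimination order, building $\netC'$ as a rooted cut-net by repeated splittings (\Cref{def:decomposition}), one splitting per eliminated variable. At each stage we maintain a cut-net $\netC_k = \Cuts{\M_1,\dots,\M_k,\netD_k}$, an expansion of $\netC$. In it, each $\M_i$ is the \emph{bucket} of $X_i$: a sub-net whose semantics, by \Cref{cor:turbo}, is exactly the intermediate factor produced by variable elimination at step $i$. The remaining net $\netD_k$ is atomic and contains all boxes not yet consumed.

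\textbf{Step $k+1$.} In the atomic net $\netD_k$, consider the variable $X_{k+1}$. By \Cref{lem:jointree}, the edges labelled $X_{k+1}$ form a directed tree rooted at the main conclusion of $\bbox^{X_{k+1}}$ (or, if that box was already absorbed into some $\M_i$, at the cut through which $X_{k+1}$ enters $\netD_k$). Let $\netN_{k+1}$ be the sub-graph of $\netD_k$ consisting of:
\begin{itemize}
\item this tree;
\item every box having an edge labelled $X_{k+1}$;
\item the cut-nodes attaching to $\netD_k$ those previously built $\M_i$ whose conclusions mention $X_{k+1}$.
\end{itemize}
Concretely, $\netN_{k+1}$ is the union of all factors containing $X_{k+1}$, plus the wiring connecting them along $X_{k+1}$. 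We check that $\netN_{k+1}$ is a sub-net. Correctness is inherited, since subgraphs of correct atomic graphs stay acyclic in $\pol{\cdot}$ (\Cref{lem:pol_correct}). Every edge of $\netN_{k+1}$ is a conclusion of some node in it, possibly after an $\ax$-expansion on the edges we cut.

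We then perform the splitting $\pulledout{\netD_k}{\netN_{k+1}}$. We group $\netN_{k+1}$ together with the $\M_i$'s it absorbs into the new component $\M_{k+1}$, which keeps the skeleton a tree. Since all $X_{k+1}$-edges lie inside $\M_{k+1}$, $X_{k+1}$ is internal to $\M_{k+1}$: it appears neither in the conclusions of $\M_{k+1}$ nor in $\Gamma$. By \Cref{cor:internal} it then appears nowhere else.

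\textbf{Semantic check.} By \Cref{cor:turbo} applied to $\M_{k+1}$ rooted at $\netN_{k+1}$, together with \Cref{th:compositional}, we get
$$\sem{\M_{k+1}} = \sum_{X_{k+1}} \Big( \BigFProd_{\bbox\ni X_{k+1}} \sem{\bbox} \odot \BigFProd_{i \text{ absorbed}} \sem{\M_i} \Big).$$
This is precisely the VE step: multiply the factors mentioning $X_{k+1}$, then sum it out. The only subtlety is that no other variable is summed out at this node. This holds because the remaining names of $\M_{k+1}$ are shared with its conclusions, as each non-eliminated name still reaches $\netD_{k+1}$ or $\Gamma$. At the end, the root is $\netD_n$, holding the leftover boxes, whose product with the incoming factors gives $\sem{\netC}$. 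This matches the final product in VE and agrees with $\sem{\netC}$ by invariance of the semantics under expansion.

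\textbf{Main obstacle.} The delicate point is ensuring that $\netN_{k+1}$ (with its absorbed components) is genuinely a sub-net, and that the new skeleton remains a tree. Contraction nodes on $X_{k+1}$ may feed edges into boxes that are already inside some $\M_i$. These $\M_i$ must therefore be absorbed, and not merely cut, otherwise a cycle could appear in the skeleton. We handle this by always absorbing every component touched by the $X_{k+1}$-tree, which is exactly the VE rule that all factors containing $X_{k+1}$ are combined.
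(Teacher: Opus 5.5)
Your proposal is correct and follows essentially the paper's construction. At each step you split off the minimal sub-net of the current root containing all $X_{k+1}$-edges (the boxes mentioning $X_{k+1}$ plus the $\cut$/$\ax$/$\cn$ wiring on $X_{k+1}$), grouped with every previously built bucket whose conclusions mention $X_{k+1}$; this is exactly the paper's Claim with $\M' = \R_2(\M_1,\dots,\M_k)$ and $\netD' = \pulledout{\netD}{\M'}$, and your explicit check that no name other than $X_{k+1}$ becomes internal to the new bucket spells out what the paper leaves to ``by construction''.
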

\begin{proof}
The result follows by applying successively on $X_1$, \dots, $X_n$ the following claim, that corresponds to separating exactly the part of the net containing a given \rv\ $X$, while keeping track of all previous computations.
\begin{description}
\item[Claim]
Take a \rv\ $X$ and a positive and atomic
cut-net $\netD = \R(\M_1, \dots, \M_k, \N_1, \dots, \N_l)$ such that $X$ is in the conclusions
of all the $\M_i$ and none of the $\N_j$.
Then, there is a cut-net $\netD' = \R_1(\M', \N_1, \dots, \N_l)$ such that $\netD' \Reds[R] \netD$, $\M' = \R_2(\M_1, \dots, \M_k)$, $X \notin \Nm{\R_1}$ and for all $\boxn$-node $\bbox$ in $\R_2$, $X\in\Nm{\bbox}$.
\end{description}
Intuitively, the sub-nets $\M_1, \dots, \M_k, \N_1, \dots, \N_l$ correspond to already done computations, with the $\sem{\M_i}$ being those factors on (at least) $X$.
Meanwhile, the root $\R$ contains factors that we still have not used.
Our next computation is the summing out of $X$ in the product of all the factors $\sem{M_i}$ -- as well as the factors using $X$ among those of $\R$.

Thus, one can get the cut-net corresponding to the computation associated to variable elimination by starting from the full \BPN\ $\netC$ seen as a cut-net with a single sub-net (\ie\ all factors have not been used yet, everything is in the root), and apply the above result successively on $X_1$, then $X_2$, and so on until $X_n$.

We now prove our Claim.
Set $\R_2$ the minimal sub-net of $\R$ containing all its edges labelled by $X$: it is made of all $\cut$, $\ax$ and $\cn$ on $X$, and all boxes with input or output $X$.
Define $\M' \defeq \R_2(\M_1, \dots \M_k)$ (with the involved $\cut$-nodes of $\netD$ in-between).
We set $\D'$ as the cut-net $\D' \defeq \pulledout{\netD}{\M'} = \R_1(\M', \N_1, \dots, \N_l)$ whose root $\R_1$ is made of the nodes of $\R$ not in $\R_2$ (along the $\ax$-nodes added during the splitting).
By construction, we have all the required properties.
%
\end{proof}



%

\section{Proofs of \cref{sec:typing_cut_net}}
\label{sec:proof_typing}

\begin{theorem}[\cref{th:typingcutnet}]
Given a cut-net $\R = \CutS{\R_1, \dots, \R_n}$, there is a sequence of $\otimes/\parr$-expansions (the reverse of the $\otimes/\parr$-rule in \cref{fig:rewriting}) on the cuts between the pairs $(\R_i,\R_j)$ such that the result is a cut-net $\R' = \CutS{\R_1', \dots, \R_n'}$ with \textbf{at most one cut} between each pair $(\R_i', \R_j')$.
\end{theorem}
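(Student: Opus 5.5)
The plan is to proceed pair by pair. Since the skeleton of the cut-net is a tree, cuts between a pair $(\R_i,\R_j)$ do not interact with the other pairs. We reduce the number of cuts between one fixed pair by one at a time, using a single $\otimes/\parr$-expansion. Iterating gives a terminating procedure: each step decreases the total number of cuts, and each step costs constant work, which yields the linear algorithm announced after the statement.

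\textbf{The elementary step.} Fix a pair $(\R_i,\R_j)$ with at least two cuts between them. Take two such cuts, on $A$ and $B$, where $A,B$ are conclusions of $\R_i$ and $A\orth,B\orth$ are conclusions of $\R_j$. The $\otimes/\parr$-expansion replaces these two cuts by a single cut on $A\otimes B$ versus $A\orth\parr B\orth$. It adds a $\otimes$-node with premises $A,B$, which we attach to $\R_i$, and a $\parr$-node with premises $A\orth,B\orth$, which we attach to $\R_j$. The resulting graph reduces back to the original by one $\otimes/\parr$-step. Its partition is $\R_i'=\R_i+\otimes$, $\R_j'=\R_j+\parr$, with all other components unchanged, so the skeleton is the same tree. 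What remains is to check the conditions of \Cref{def:cut-net}.

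\textbf{Main obstacle: correctness.} We must show that the expanded graph, and the enlarged sub-net $\R_i'$, are correct (\Cref{def:pn}, point \ref{item:pn:3}). The $\parr$ side is harmless. A cycle through the new $\parr$ node uses both of its premises, and the two premises of a $\parr$ are allowed in a cycle. Moreover, $\R_j'$ is correct because adding a $\parr$ on two conclusions of a correct net preserves correctness. The $\otimes$ side is the critical one. Adding a $\otimes$ between $A$ and $B$ creates a forbidden cycle exactly when $A$ and $B$ are already connected inside $\R_i$ by a path that uses no two premises of the same $\parr$/$\cn$ node. The choice of orientation is therefore essential.

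The idea for handling this is to use the tree structure. Any path in $\R_i$ from $A$ to $B$ (in the correction-graph sense), composed with the two cuts, yields a cycle through $\R_j$ or through the rest of the net. So it suffices that at least one of $\R_i$, $\R_j$ does \emph{not} connect $A$ and $B$ by such a path. Indeed, if both did, then concatenating the two paths via the two cuts would give a cycle in the original net $\R$ using no pair of premises of a same $\parr$/$\cn$, contradicting correctness of $\R$. Hence we put the $\otimes$ on the side where $A,B$ are disconnected and the $\parr$ on the other side. To make this precise, I would first verify that paths inside one sub-net remain admissible when concatenated: the nodes of $\R_i$ and $\R_j$ are disjoint, and cut nodes have no restriction. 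For the cycle to be forbidden, the two paths must together avoid both premises of a same node. This holds because each path avoids such pairs internally, and the two paths share no nodes.

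\textbf{Conclusion.} After the step, the new graph is correct. Cycles through the new $\otimes$ would correspond to a connection of $A,B$ in the $\otimes$-side sub-net, or to one going around through other components. The latter is excluded because the skeleton is a tree, so any path leaving $\R_i'$ and returning must come back through $\R_j'$ or through the same component pair. The new sub-nets are correct, so we again have a cut-net. We repeat until each pair has at most one cut, and the cut formulas obtained are built from the original ones by $\otimes$ and $\parr$, as claimed.
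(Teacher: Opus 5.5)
Your correctness argument has a genuine gap, and choosing the orientation more carefully does not close it. In general, two parallel cuts between $\R_i$ and $\R_j$ cannot be merged at all. You only examine cycles that use both premises of the new $\otimes$-node $t$ and are closed by a path \emph{inside} $\R_i$. Two kinds of cycles are missed.

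(a) Cycles through the new $\cut$-node $c$. Such a cycle enters the new $\parr$-node $p$ by its conclusion and leaves it by a \emph{single} premise. So your claim that ``a cycle through the new $\parr$ node uses both of its premises'' is false. The cycle closes as soon as a switching path joins the node of $B\orth$ to the node of $A$ while avoiding the two merged cuts. This is possible through a third cut between the same pair.

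(b) Cycles through both premises of $t$ that are closed by a path leaving $\R_i$ through one cut and returning through another cut of the same pair, or of some pair $(\R_i,\R_k)$. The skeleton is a tree only after parallel cuts are collapsed, so it rules out neither case. Contrary to your first sentence, different pairs do interact.

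Here is a concrete failure.
\begin{itemize}
\item Let $\R_1$ consist of two axioms $a_1,a_2$ and $\R_2$ of two axioms $a_3,a_4$, each with conclusions $X^+,X^-$.
\item The cuts are: $c_A$ between $X^+$ of $a_1$ and $X^-$ of $a_4$; $c_C$ between $X^-$ of $a_1$ and $X^+$ of $a_3$; $c_B$ between $X^+$ of $a_2$ and $X^-$ of $a_3$.
\item The underlying graph is the path $a_4,a_1,a_3,a_2$, so this is a correct cut-net with two components.
\end{itemize}
Now merge $c_A$ and $c_B$. Inside $\R_1$, $a_1$ and $a_2$ are disconnected, and inside $\R_2$, $a_4$ and $a_3$ are disconnected, so your rule allows either orientation. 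With $t$ in $\R_1$ you get the cycle $a_1\to t\to c\to p\to a_3\to c_C\to a_1$. With $t$ in $\R_2$ you get $a_3\to t\to c\to p\to a_1\to c_C\to a_3$. Each cycle uses only one premise of $p$, so both expansions violate correctness.

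Hence the order of the merges matters. In the example, first merge $c_C$ with $c_B$, putting $t$ on the $\R_1$ side. Then merge the resulting cut with $c_A$, now putting the $\otimes$ on the $\R_2$ side. This does work. What your proposal lacks is a global invariant saying which pair of parallel cuts is safe to merge, and in which orientation.

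The paper obtains this invariant by induction on a sequentialization of $\R$. When the last rule is a $\otimes$ or a $\cut$ with node $v$, the two premises yield proper cut-nets that are connected only through $v$. It then merges only a cut from the left part with a cut from the right part, placing the $\parr$ on the component nearer to $v$. When $v$ is itself a cut between $\R_1$ and $\R_2$, it merges $v$ first with $u_r$ and then with $u_l$. Every path between the left and right parts must go through $v$, so no switching cycle can arise. To repair your approach you would need a lemma of comparable strength, for instance one extracted from a splitting $\otimes$ or $\cut$. It would have to guarantee at each step that some pair of parallel cuts can be merged safely.
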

\begin{proof}
We will proceed by induction on a sequentialization $\pi$ of $\R$.
Call \textbf{proper} a cut-net with at most one $\cut$-node between each of its sub-nets.
We illustrate with a drawing each case, corresponding to what kind the last rule of $\pi$ is.
On these drawings, ellipses represent sub-graphs such as $\R_1$, \dots, $\R_n$ while bold edges between those represent $\cut$-nodes.
Up to commuting rules, we can assume $\pi$ has above a $mix$-rule only $\ax$-, $\One$-, $\boxn$- or $mix$-rules, so that the last rule of $\pi$ -- provided it is not already proper -- is either a $\parr$-, $\bot$-, $\cn$-, $\weak$-, $\otimes$- or $\cut$-rule.

\noindent$\bullet$
Assume $\pi$ is a proof $\rho$ followed by a $\parr$-rule, which corresponds in $\R$ to a node $v$.
Without loss of generality, $v$ is in $\R_1$.
Call $\N_1$ the removal of $v$ in $\R_1$.
By induction hypothesis on the cut-net $\N = \CutS{\N_1, \R_2, \dots, \R_n}$ of sequentialization $\rho$, there is a proper cut-net $\N' = \CutS{\N_1', \N_2', \dots, \N_n'}$ such that $\N' \Reds \N$.
Set $\R' \defeq \CutS{\N_1'\cup\{v\}, \N_2', \dots, \N_n'}$ with $\N_1'\cup\{v\}$ the graph obtained by addding $v$ to $\N_1'$ with the same edges it had in $\R_1$.
Then $\R'$ is a proper cut-net and $\R' \Reds \R$ using the same elimination steps as in $\N' \Reds \N$.

\begin{adjustbox}{}
\begin{tikzpicture}
	\node at (-5,2) {$\R$};
	\node at (0,2.5) {\color{red}$\N_1$};
	\draw[red, dashed] (0,2.5) ellipse (8mm and 4mm);
	\node at (-.75,1.5) {\color{lasallegreen}$\R_1$};
	\draw[lasallegreen, dashed] (0,1.9) ellipse (12mm and 11mm);
	\node at (2.5,2.5) {\color{lasallegreen}$\R_2$};
	\draw[lasallegreen, dashed] (2.5,2.5) ellipse (7mm and 6mm);
	\node at (-3,2.5) {\color{lasallegreen}$\R_3$};
	\draw[lasallegreen, dashed] (-3,2.5) ellipse (6mm and 5mm);
	\node at (4,3.5) {\color{lasallegreen}$\R_4$};
	\draw[lasallegreen, dashed] (4,3.5) ellipse (5mm and 6mm);
\begin{scope}[every node/.style={circle,minimum size=6mm,inner sep=0,draw=black,line width=0.8pt}, every edge/.style={draw=black,-,line width=0.8pt}]
	\node (v) at (0,1.5) {$\parr$};
	\path (-.5,2.25) edge node[named edge,left] {$A$} (v);
	\path (.5,2.25) edge node[named edge,right] {$B$} (v);
	\path (v) edge node[named edge] {$A\parr B$} (0,.75);
	\path (.6,2.65) edge (2,2.65);
	\path (.6,2.5) edge (2,2.5);
	\path (.6,2.35) edge (2,2.35);
	\path (-.5,2.65) edge (-2.75,2.5);
	\path (-.5,2.5) edge (-2.75,2.35);
	\path (2.5,3) edge (3.65,3.5);
	\path (2.65,2.9) edge (3.8,3.4);
\end{scope}

	\node at (0,0) {$\FromReds$};

	\node at (-5,-2) {$\R'$};
	\node at (0,-1.5) {\color{blue}$\N_1'$};
	\draw[blue, dashed] (0,-1.5) ellipse (8mm and 4mm);
	\node at (2.5,-1.5) {\color{blue}$\N_2'$};
	\draw[blue, dashed] (2.5,-1.5) ellipse (7mm and 6mm);
	\node at (-3,-1.5) {\color{blue}$\N_3'$};
	\draw[blue, dashed] (-3,-1.5) ellipse (6mm and 5mm);
	\node at (4,-.5) {\color{blue}$\N_4'$};
	\draw[blue, dashed] (4,-.5) ellipse (5mm and 6mm);
\begin{scope}[every node/.style={circle,minimum size=6mm,inner sep=0,draw=black,line width=0.8pt}, every edge/.style={draw=black,-,line width=0.8pt}]
	\node (v') at (0,-2.5) {$\parr$};
	\path (-.5,-1.75) edge node[named edge,left] {$A$} (v');
	\path (.5,-1.75) edge node[named edge,right] {$B$} (v');
	\path (v') edge node[named edge] {$A\parr B$} (0,-3.25);
	\path (.6,-1.5) edge (2,-1.5);
	\path (-.5,-1.5) edge (-2.75,-1.5);
	\path (2.5,-1) edge (3.65,-.5);
\end{scope}
\end{tikzpicture}
\end{adjustbox}

\noindent$\bullet$
The cases where the last rule of $\pi$ is a $\bot$-, $\cn$-, or $\weak$-rule are similar to the previous one.

\noindent$\bullet$
Suppose now the last rule of $\pi$ is a $\otimes$-rule between proofs $\rho^l$ and $\rho^r$, whose corresponding node in $\R$ we call $v$.
By symmetry, assume $v$ belongs to $R_1$.
Dividing $\R$ according to $\rho^l$ and $\rho^r$, we have $\R = \CutS{\N_1^l\cup\N_1^r\cup\{v\}, \N_2^l \cup \N_2^r, \dots, \N_n^l \cup \N_n^r}$. 
By induction hypothesis on the cut-net $\N^l \defeq \CutS{\N_1^l, \N_2^l, \dots, \N_n^l}$ of sequentialization $\rho^l$, there is a proper cut-net ${\N^l}' = \CutS{{\N_1^l}', {\N_2^l}', \dots, {\N_n^l}'}$ such that ${\N^l}' \Reds \N^l$; and similarly with $\rho^r$, one gets ${\N^r}' = \CutS{{\N_1^r}', {\N_2^r}', \dots, {\N_n^r}'} \Reds \N^r \defeq \CutS{\N_1^r, \N_2^r, \dots, \N_n^r}$.

The wished cut-net $\R'$ is obtained from ${\N^l}'$ and ${\N^r}'$ as follows.
Consider $\D \defeq \CutS{{\N_1^l}'\cup{\N_1^r}'\cup\{v\}, {\N_2^l}'\cup{\N_2^r}', \dots, {\N_n^l}'\cup{\N_n^r}'}$.
While $\D$ is a cut-net such that $\D \Reds \R$ (\eg\ by doing the steps in ${\N^l}' \Reds \N^l$ first then the ones in ${\N^r}' \Reds \N^r$), it may not be proper.
Indeed, there may be two $\cut$ between some ${\N_i^l}'\cup{\N_i^r}'$ and ${\N_j^l}'\cup{\N_j^r}'$, with one $\cut$ coming from ${\N^l}'$ and the other from ${\N^r}'$.
As the skeleton graph of $\D$ is a tree, say by symmetry that the component $i$ is closer to the first component -- the one containing $v$ -- than $j$.
For each such pair, we apply a $\parr-\otimes$ $\cut$-expansion, putting a $\parr$-node in the component $i$ and a $\otimes$-node in the component $j$.
The resulting components are correct since in $\D$ the only way to go from ${\N_j^l}'$ to ${\N_j^r}'$ passes through $v$, so through the two identified $\cut$.
Doing so for all possible $i$ and $j$ yields the wished $\R'$, with $\R' \Reds \D \Reds \R$.

\begin{adjustbox}{}
\begin{tikzpicture}
	\node at (-5,2) {$\R$};
	\node at (-.5,2.75) {\color{red}$\N_1^l$};
	\draw[red, dashed] (-.5,2.75) ellipse (8mm and 4mm);
	\node at (.75,2.2) {\color{red}$\N_1^r$};
	\draw[red, dashed] (.75,2.2) ellipse (7mm and 3mm);
	\node at (-1,1.5) {\color{lasallegreen}$\R_1$};
	\draw[lasallegreen, dashed] (-.25,1.9) ellipse (18mm and 13mm);
	\node at (4,2.9) {\color{red}$\N_2^l$};
	\draw[red, dashed] (4,2.9) ellipse (5mm and 3mm);
	\node at (4,2.1) {\color{red}$\N_2^r$};
	\draw[red, dashed] (4,2.1) ellipse (5mm and 3mm);
	\node at (4.6,2.5) {\color{lasallegreen}$\R_2$};
	\draw[lasallegreen, dashed] (4,2.5) ellipse (10mm and 7mm);
	\node at (-3.1,2.9) {\color{red}$\N_3^l$};
	\draw[red, dashed] (-3.1,2.9) ellipse (5mm and 3mm);
	\node at (-3.1,2.1) {\color{red}$\N_3^r$};
	\draw[red, dashed] (-3.1,2.1) ellipse (5mm and 3mm);
	\node at (-3.7,2.5) {\color{lasallegreen}$\R_3$};
	\draw[lasallegreen, dashed] (-3.1,2.5) ellipse (10mm and 7mm);
	\node at (7,3.9) {\color{red}$\N_4^l$};
	\draw[red, dashed] (7,3.9) ellipse (5mm and 3mm);
	\node at (7,3.1) {\color{red}$\N_4^r$};
	\draw[red, dashed] (7,3.1) ellipse (5mm and 3mm);
	\node at (7.6,3.5) {\color{lasallegreen}$\R_4$};
	\draw[lasallegreen, dashed] (7,3.5) ellipse (10mm and 7mm);
\begin{scope}[every node/.style={circle,minimum size=6mm,inner sep=0,draw=black,line width=0.8pt}, every edge/.style={draw=black,-,line width=0.8pt}]
	\node (v) at (0,1.5) {$\otimes$};
	\path (-.5,2.5) edge node[named edge,left] {$A$} (v);
	\path (.5,2.1) edge node[named edge,right] {$B$} (v);
	\path (v) edge node[named edge] {$A\otimes B$} (0,.75);
	\path (.1,2.85) edge (3.7,3);
	\path (.1,2.7) edge (3.7,2.85);
	\path (1.25,2.2) edge (3.7,2);
	\path (-1.1,2.85) edge (-2.75,3);
	\path (-1.1,2.7) edge (-2.75,2.85);
	\path (4.25,3) edge (6.8,3.8);
	\path (4.35,2) edge (7.1,2.9);
\end{scope}

	\node at (0,0) {$\FromReds$};

	\node at (-5,-2.4) {$\D$};
	\node at (-.5,-1.65) {\color{blue}${\N_1^l}'$};
	\draw[blue, dashed] (-.5,-1.65) ellipse (8mm and 4mm);
	\node at (.75,-2.2) {\color{blue}${\N_1^r}'$};
	\draw[blue, dashed] (.75,-2.2) ellipse (7mm and 3mm);
	\node at (4,-1.5) {\color{blue}${\N_2^l}'$};
	\draw[blue, dashed] (4,-1.5) ellipse (5mm and 3mm);
	\node at (4,-2.3) {\color{blue}${\N_2^r}'$};
	\draw[blue, dashed] (4,-2.3) ellipse (5mm and 3mm);
	\node at (-3.1,-1.5) {\color{blue}${\N_3^l}'$};
	\draw[blue, dashed] (-3.1,-1.5) ellipse (5mm and 3mm);
	\node at (-3.1,-2.3) {\color{blue}${\N_3^r}'$};
	\draw[blue, dashed] (-3.1,-2.3) ellipse (5mm and 3mm);
	\node at (7,-.5) {\color{blue}${\N_4^l}'$};
	\draw[blue, dashed] (7,-.5) ellipse (5mm and 3mm);
	\node at (7,-1.3) {\color{blue}${\N_4^r}'$};
	\draw[blue, dashed] (7,-1.3) ellipse (5mm and 3mm);
\begin{scope}[every node/.style={circle,minimum size=6mm,inner sep=0,draw=black,line width=0.8pt}, every edge/.style={draw=black,-,line width=0.8pt}]
	\node (v') at (0,-2.9) {$\otimes$};
	\path (-.5,-1.9) edge node[named edge,left] {$A$} (v');
	\path (.5,-2.3) edge node[named edge,right] {$B$} (v');
	\path (v') edge node[named edge] {$A\otimes B$} (0,-3.65);
	\coordinate (c13l) at (.1,-1.7);
	\coordinate (c13r) at (1.25,-2.2);
	\coordinate (c31l) at (3.7,-1.4);
	\coordinate (c31r) at (3.7,-2.4);
	\path (c13l) edge (c31l);
	\path (c13r) edge (c31r);
	\path (-1.1,-1.7) edge (-2.75,-1.4);
	\coordinate (c34l) at (4.25,-1.4);
	\coordinate (c34r) at (4.35,-2.4);
	\coordinate (c43l) at (6.8,-.6);
	\coordinate (c43r) at (7,-1.5);
	\path (c34l) edge (c43l);
	\path (c34r) edge (c43r);
\end{scope}

	\node at (0,-4) {$\FromReds$};

	\node at (-5,-5.4) {$\R'$};
	\node at (-.5,-4.65) {\color{blue}${\N_1^l}'$};
	\draw[blue, dashed] (-.5,-4.65) ellipse (8mm and 4mm);
	\node at (.75,-5.2) {\color{blue}${\N_1^r}'$};
	\draw[blue, dashed] (.75,-5.2) ellipse (7mm and 3mm);
	\node at (4,-4.5) {\color{blue}${\N_2^l}'$};
	\draw[blue, dashed] (4,-4.5) ellipse (5mm and 3mm);
	\node at (4,-5.3) {\color{blue}${\N_2^r}'$};
	\draw[blue, dashed] (4,-5.3) ellipse (5mm and 3mm);
	\node at (-3.1,-4.5) {\color{blue}${\N_3^l}'$};
	\draw[blue, dashed] (-3.1,-4.5) ellipse (5mm and 3mm);
	\node at (-3.1,-5.3) {\color{blue}${\N_3^r}'$};
	\draw[blue, dashed] (-3.1,-5.3) ellipse (5mm and 3mm);
	\node at (7,-3.5) {\color{blue}${\N_4^l}'$};
	\draw[blue, dashed] (7,-3.5) ellipse (5mm and 3mm);
	\node at (7,-4.3) {\color{blue}${\N_4^r}'$};
	\draw[blue, dashed] (7,-4.3) ellipse (5mm and 3mm);
\begin{scope}[every node/.style={circle,minimum size=6mm,inner sep=0,draw=black,line width=0.8pt}, every edge/.style={draw=black,-,line width=0.8pt}]
	\node (v') at (0,-5.9) {$\otimes$};
	\path (-.5,-4.9) edge node[named edge,left] {$A$} (v');
	\path (.5,-5.3) edge node[named edge,right] {$B$} (v');
	\path (v') edge node[named edge] {$A\otimes B$} (0,-6.65);
	\coordinate (c13l) at (.1,-4.7);
	\coordinate (c13r) at (1.25,-5.2);
	\coordinate (c31l) at (3.7,-4.4);
	\coordinate (c31r) at (3.7,-5.4);
	\node[draw=red] (t13) at (2,-4.9) {\color{red}$\parr$};
	\node[draw=red] (t31) at (3,-4.9) {\color{red}$\otimes$};
	\path (c13l) edge[draw=red] (t13);
	\path (c13r) edge[draw=red] (t13);
	\path (t13) edge[draw=red] (t31);
	\path (c31l) edge[draw=red] (t31);
	\path (c31r) edge[draw=red] (t31);
	\path (-1.1,-4.7) edge (-2.75,-4.4);
	\coordinate (c34l) at (4.25,-4.4);
	\coordinate (c34r) at (4.35,-5.4);
	\coordinate (c43l) at (6.8,-3.6);
	\coordinate (c43r) at (7,-4.5);
	\node[draw=red] (t34) at (5,-4.7) {\color{red}$\parr$};
	\node[draw=red] (t43) at (5.9,-4.4) {\color{red}$\otimes$};
	\path (c34l) edge[draw=red] (t34);
	\path (c34r) edge[draw=red] (t34);
	\path (t34) edge[draw=red] (t43);
	\path (c43l) edge[draw=red] (t43);
	\path (c43r) edge[draw=red] (t43);
\end{scope}
\end{tikzpicture}
\end{adjustbox}

\noindent$\bullet$
The case where the last rule of $\pi$ is a $\cut$-rule \emph{whose corresponding node is not a $\cut$ between some $\R_i$ and $\R_j$} is similar to the $\otimes$ case.

\noindent$\bullet$
The last case is the one where the last rule of $\pi$ is a $\cut$-rule between proofs $\rho^l$ and $\rho^r$, with the node $v$ corresponding to this $\cut$ being, by symmetry, between $\R_1$ and $\R_2$.
Our reasonning is similar to the $\otimes$ case, with an additional step at the end.
Dividing $\R$ according to $\rho^l$ and $\rho^r$, we have $\R = \CutS{\N_1^l\cup\N_1^r, \N_2^l \cup \N_2^r, \dots, \N_n^l \cup \N_n^r}$.
In particular, $v$ is a $\cut$ with a premise in $\N_1^l$ and the other in $\N_2^r$.
By induction hypothesis on the cut-net $\N^l \defeq \CutS{\N_1^l, \N_2^l, \dots, \N_n^l}$ of sequentialization $\rho^l$, there is a proper cut-net ${\N^l}' = \CutS{{\N_1^l}', {\N_2^l}', \dots, {\N_n^l}'}$ such that ${\N^l}' \Reds \N^l$; and similarly one gets ${\N^r}' = \CutS{{\N_1^r}', {\N_2^r}', \dots, {\N_n^r}'} \Reds \N^r \defeq \CutS{\N_1^r, \N_2^r, \dots, \N_n^r}$.

The wished cut-net $\R'$ is obtained from ${\N^l}'$ and ${\N^r}'$ as follows.
Consider $\D \defeq \CutS{{\N_1^l}'\cup{\N_1^r}', {\N_2^l}'\cup{\N_2^r}', \dots, {\N_n^l}'\cup{\N_n^r}'}$.
This cut-net indeed reduces to $\R$, but it may not be proper: not only may there be two $\cut$ between some ${\N_i^l}'\cup{\N_i^r}'$ and ${\N_j^l}'\cup{\N_j^r}'$ with one $\cut$ coming from ${\N^l}'$ and the other from ${\N^r}'$, but the $\cut$-node $v$ may introduce a second or third $\cut$ between ${\N_1^l}'\cup{\N_1^r}'$ and ${\N_2^l}'\cup{\N_2^r}'$.
The first case, which may happen for a pair $(i,j) \neq (1,2)$, is solved exactly as in the $\otimes$ case above.
Thus, let us focus on the case where there already are $\cut$ between the first two components in ${\N^l}'$ and ${\N^r}'$.
We have two sub-cases: either only one of ${\N^l}'$ and ${\N^r}'$, say ${\N^l}'$, has such a $\cut$; or both have one.

If only ${\N^l}'$ has a $\cut$-node between ${\N_1^l}'$ and ${\N_2^l}'$, then the only way to go from ${\N_2^l}'$ to ${\N_2^r}'$ passes through $v$.
We can thus apply a $\parr-\otimes$ $\cut$-expansion, putting a $\parr$-node in ${\N_1^l}'$ and a $\otimes$-node in ${\N_2^l}' \cup {\N_2^r}'$, which yields the wished $\R'$.

Now, suppose there are $\cut$-nodes $u_l$ between ${\N_1^l}'$ and ${\N_2^l}'$ and a $u_r$ between ${\N_1^r}'$ and ${\N_2^r}'$.
In this case, we do the two following $\parr-\otimes$ $\cut$-expansion steps.
First, we expand the $\cut$ $v$ with $u_r$ by putting a $\otimes$-node in ${\N_1^l}' \cup {\N_1^r}'$ and a $\parr$-node in ${\N_2^r}'$, with a new $\cut$-node $v'$. The result is correct as paths from ${\N_1^l}'$ to ${\N_1^r}'$ must go through $v$.
Then, we expand $v'$ with $u_l$ by putting a $\parr$-node in ${\N_1^l}' \cup {\N_1^r}' \cup \{\otimes\}$ and a $\otimes$-node in ${\N_2^l}' \cup ({\N_2^r}' \cup \{\parr\})$. Again, the result is correct as paths from ${\N_2^l}'$ to ${\N_2^r}' \cup \{\parr\}$ must go through $v'$.
This gives us a proper cut-net $\R'$ such that $\R' \Reds \D$, and concludes the proof.\footnote{The apparent asymmetry between $u_l$ and $u_r$ is not relevant: it is also possible to expand $v$ first with $u_l$ and then with $u_r$, getting another proper cut-net reducing to $\R$.}

\begin{adjustbox}{}
\begin{tikzpicture}
	\node at (-5,2.5) {$\R$};
	\node at (-.5,2.9) {\color{red}$\N_1^l$};
	\draw[red, dashed] (-.5,2.9) ellipse (5mm and 3mm);
	\node at (-.5,2.1) {\color{red}$\N_1^r$};
	\draw[red, dashed] (-.5,2.1) ellipse (5mm and 3mm);
	\node at (-1.1,2.5) {\color{lasallegreen}$\R_1$};
	\draw[lasallegreen, dashed] (-.5,2.5) ellipse (10mm and 7mm);
	\node at (4,2.9) {\color{red}$\N_2^l$};
	\draw[red, dashed] (4,2.9) ellipse (5mm and 3mm);
	\node at (4,2.1) {\color{red}$\N_2^r$};
	\draw[red, dashed] (4,2.1) ellipse (5mm and 3mm);
	\node at (4.6,2.5) {\color{lasallegreen}$\R_2$};
	\draw[lasallegreen, dashed] (4,2.5) ellipse (10mm and 7mm);
	\node at (-3.1,2.9) {\color{red}$\N_3^l$};
	\draw[red, dashed] (-3.1,2.9) ellipse (5mm and 3mm);
	\node at (-3.1,2.1) {\color{red}$\N_3^r$};
	\draw[red, dashed] (-3.1,2.1) ellipse (5mm and 3mm);
	\node at (-3.7,2.5) {\color{lasallegreen}$\R_3$};
	\draw[lasallegreen, dashed] (-3.1,2.5) ellipse (10mm and 7mm);
	\node at (7,3.9) {\color{red}$\N_4^l$};
	\draw[red, dashed] (7,3.9) ellipse (5mm and 3mm);
	\node at (7,3.1) {\color{red}$\N_4^r$};
	\draw[red, dashed] (7,3.1) ellipse (5mm and 3mm);
	\node at (7.6,3.5) {\color{lasallegreen}$\R_4$};
	\draw[lasallegreen, dashed] (7,3.5) ellipse (10mm and 7mm);
\begin{scope}[every node/.style={circle,minimum size=6mm,inner sep=0,draw=black,line width=0.8pt}, every edge/.style={draw=black,-,line width=0.8pt}]
	\node (v) at (1.8,2.5) {$\cut$};
	\path (-.2,2.9) edge (v);
	\path (3.7,2.1) edge (v);
	\path (-.1,3.05) edge (3.7,3);
	\path (-.1,2.2) edge (3.7,2);
	\path (-.75,2.9) edge (-2.75,3);
	\path (-.75,2.75) edge (-2.75,2.85);
	\path (4.25,3) edge (6.8,3.8);
	\path (4.35,2) edge (7.1,2.9);
\end{scope}

	\node at (1.8,0) {$\FromReds$};

	\node at (-5,-1.9) {$\D$};
	\node at (-.5,-1.5) {\color{blue}${\N_1^l}'$};
	\draw[blue, dashed] (-.5,-1.5) ellipse (5mm and 3mm);
	\node at (-.5,-2.3) {\color{blue}${\N_1^r}'$};
	\draw[blue, dashed] (-.5,-2.3) ellipse (5mm and 3mm);
	\node at (4,-1.5) {\color{blue}${\N_2^l}'$};
	\draw[blue, dashed] (4,-1.5) ellipse (5mm and 3mm);
	\node at (4,-2.3) {\color{blue}${\N_2^r}'$};
	\draw[blue, dashed] (4,-2.3) ellipse (5mm and 3mm);
	\node at (-3.1,-1.5) {\color{blue}${\N_3^l}'$};
	\draw[blue, dashed] (-3.1,-1.5) ellipse (5mm and 3mm);
	\node at (-3.1,-2.3) {\color{blue}${\N_3^r}'$};
	\draw[blue, dashed] (-3.1,-2.3) ellipse (5mm and 3mm);
	\node at (7,-.5) {\color{blue}${\N_4^l}'$};
	\draw[blue, dashed] (7,-.5) ellipse (5mm and 3mm);
	\node at (7,-1.3) {\color{blue}${\N_4^r}'$};
	\draw[blue, dashed] (7,-1.3) ellipse (5mm and 3mm);
\begin{scope}[every node/.style={circle,minimum size=6mm,inner sep=0,draw=black,line width=0.8pt}, every edge/.style={draw=black,-,line width=0.8pt}]
	\coordinate (c12l) at (-.1,-1.4);
	\coordinate (c12r) at (-.1,-2.4);
	\coordinate (c21l) at (3.7,-1.4);
	\coordinate (c21r) at (3.7,-2.4);
	\coordinate (c1v) at (-.1,-1.6);
	\coordinate (c2v) at (3.7,-2.2);
	\path (c12l) edge (c21l);
	\path (c12r) edge (c21r);
	\node (v') at (1.8,-1.9) {$\cut$};
	\path (c1v) edge (v');
	\path (c2v) edge (v');
	\path (-.9,-1.4) edge (-2.75,-1.4);
	\coordinate (c34l) at (4.25,-1.4);
	\coordinate (c34r) at (4.35,-2.4);
	\coordinate (c43l) at (6.8,-.6);
	\coordinate (c43r) at (7,-1.5);
	\path (c34l) edge (c43l);
	\path (c34r) edge (c43r);
\end{scope}

	\node at (1.8,-3.5) {$\FromReds$};

	\node at (-5,-4.9) {$\R'$};
	\node at (-.5,-4.5) {\color{blue}${\N_1^l}'$};
	\draw[blue, dashed] (-.5,-4.5) ellipse (5mm and 3mm);
	\node at (-.5,-5.3) {\color{blue}${\N_1^r}'$};
	\draw[blue, dashed] (-.5,-5.3) ellipse (5mm and 3mm);
	\node at (4,-4.5) {\color{blue}${\N_2^l}'$};
	\draw[blue, dashed] (4,-4.5) ellipse (5mm and 3mm);
	\node at (4,-5.3) {\color{blue}${\N_2^r}'$};
	\draw[blue, dashed] (4,-5.3) ellipse (5mm and 3mm);
	\node at (-3.1,-4.5) {\color{blue}${\N_3^l}'$};
	\draw[blue, dashed] (-3.1,-4.5) ellipse (5mm and 3mm);
	\node at (-3.1,-5.3) {\color{blue}${\N_3^r}'$};
	\draw[blue, dashed] (-3.1,-5.3) ellipse (5mm and 3mm);
	\node at (7,-3.5) {\color{blue}${\N_4^l}'$};
	\draw[blue, dashed] (7,-3.5) ellipse (5mm and 3mm);
	\node at (7,-4.3) {\color{blue}${\N_4^r}'$};
	\draw[blue, dashed] (7,-4.3) ellipse (5mm and 3mm);
\begin{scope}[every node/.style={circle,minimum size=6mm,inner sep=0,draw=black,line width=0.8pt}, every edge/.style={draw=black,-,line width=0.8pt}]
	\coordinate (c12l) at (-.1,-4.4);
	\coordinate (c12r) at (-.1,-5.4);
	\coordinate (c21l) at (3.7,-4.4);
	\coordinate (c21r) at (3.7,-5.4);
	\coordinate (c1v) at (-.1,-4.6);
	\coordinate (c2v) at (3.7,-5.2);
	\node[draw=red] (t1) at (.7,-5.2) {\color{red}$\otimes$};
	\node[draw=red] (p1) at (1.3,-4.5) {\color{red}$\parr$};
	\node[draw=red] (p2) at (2.9,-5.2) {\color{red}$\parr$};
	\node[draw=red] (t2) at (2.3,-4.5) {\color{red}$\otimes$};
	\path (c12r) edge[draw=red] (t1);
	\path (c1v) edge[draw=red] (t1);
	\path (c12l) edge[draw=red] (p1);
	\path (t1) edge[draw=red] (p1);
	\path (c21r) edge[draw=red] (p2);
	\path (c2v) edge[draw=red] (p2);
	\path (c21l) edge[draw=red] (t2);
	\path (p2) edge[draw=red] (t2);
	\path (p1) edge[draw=red] (t2);
	\path (-.9,-4.4) edge (-2.75,-4.4);
	\coordinate (c34l) at (4.25,-4.4);
	\coordinate (c34r) at (4.35,-5.4);
	\coordinate (c43l) at (6.8,-3.6);
	\coordinate (c43r) at (7,-4.5);
	\node[draw=red] (t34) at (5,-4.7) {\color{red}$\parr$};
	\node[draw=red] (t43) at (5.9,-4.4) {\color{red}$\otimes$};
	\path (c34l) edge[draw=red] (t34);
	\path (c34r) edge[draw=red] (t34);
	\path (t34) edge[draw=red] (t43);
	\path (c43l) edge[draw=red] (t43);
	\path (c43r) edge[draw=red] (t43);
\end{scope}
\end{tikzpicture}
\end{adjustbox}
\end{proof}

\newpage


\newcommand{\art}[1]{\widehat{#1}}
\newcommand{\tpnormal}{$\otimes/\parr$-normal\xspace}

\section{BPN artifact criterion}

By \pn we always mean a \pn with probabilistic boxes which is  \emph{well-named}, \ie	\emph{the atoms labelling the positive conclusions of the boxes are \emph{pairwise distinct}}. \todoc{ TODO: change also  in the body}


We  recall \Cref{lem:jointree}

	%

\begin{lemma}[Jointree-like property] 
	Let $\R : \Lambda$ be a \emph{positive} atomic   \pn. The restriction of $\R$ to the  edges labelled by a  same atom  $X$ is a   \emph{directed} tree, where
	\begin{enumerate}
		\item the  root is  the main conclusion of the box  $\bbox^X$;
		\item the leaves are either positive conclusion of $\R$, or inputs of a box.
	\end{enumerate}
	
\end{lemma}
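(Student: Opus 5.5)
We argue by local inspection of the node grammar (\Cref{fig:pn}), orienting edges by the polarized orientation $\pol{\R}$. Since $\R$ is atomic, \Cref{lem:pol_correct} makes $\pol{\R}$ a DAG. Fix an atom $X$ and let $E_X$ be the set of edges labelled $X^+$ or $X^-$. Consider the graph $T_X$ whose vertices are the edges in $E_X$. Two of them are adjacent when they meet at a node and the polarized flow goes from one into the other.

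\textbf{Step 1: at most one incoming edge.} We check that every $e\in E_X$ has at most one $X$-labelled predecessor, node by node, in the atomic fragment. \begin{itemize}
\item At an $\ax$-node, $X^-$ flows into $X^+$.
\item At a $\cut$-node, $X^+$ flows into $X^-$.
\item At a $\cn$-node, the single $X^-$ below feeds the two $X^-$ premises above.
\item A $\weak$-node has only an incoming $X^-$.
\item A $\boxn$-node has $\X$ as a source and its negative premises $\Yn_i$ as sinks, with $Y_i\neq X$.
\end{itemize}
Hence the only edges of $E_X$ with no predecessor are the main conclusion of $\bbox^X$ and negative conclusions $\Xn$ of $\R$. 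Since $\R$ is positive, the latter do not exist. By well-namedness, at most one box has main conclusion $\X$.

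\textbf{Step 2: tree and root.} Each connected component of $T_X$ is acyclic, because $\pol{\R}$ is a DAG. Since in-degree is at most one, each component is a directed tree rooted at a source. Every source is the main edge of $\bbox^X$, so there is exactly one component, rooted at $\bbox^X$. This gives item 1. The same argument shows that every name occurring in $\R$ is introduced by some box.

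\textbf{Step 3: leaves.} A leaf is an edge of $E_X$ with no $X$-labelled successor. Again by inspection, a positive $\X$ edge that is not a conclusion of $\R$ is a premise of a $\cut$ and so has the $X^-$ successor. A negative $X^-$ edge pointing upwards ends at one of three places:
\begin{itemize}
\item an $\ax$, which has an $X^+$ successor;
\item a $\cn$, which has successors;
\item a $\weak$-node or a negative premise of a box.
\end{itemize}
So the leaves are positive conclusions of $\R$, box inputs, and possibly $\weak$ premises.

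The main obstacle is the $\weak$ case. Leaves at a weakening are not covered by the statement as worded, and they cannot be excluded in general, since $\weak$ on a negative $X^-$ is allowed. I would handle this either by reading ``inputs of a box'' as including weakening premises, or by assuming $\R$ is $\Red[\weak]$-normal as in \Cref{sec:conditional_indepenence}. In the latter case one would still have to argue away the residual weakenings cut against $\ax$/$\cn$, which may require normality with respect to $\Red[\rules]$ as well.
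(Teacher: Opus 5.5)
Your proof of the tree structure and of item 1 follows the paper's proof. Local inspection of the node grammar under the polarized orientation gives each $X$-edge at most one $X$-labelled parent, and the only parentless $X$-edges are the main conclusion of $\bbox^X$ or negative conclusions $\Xn$, which positivity excludes; acyclicity of $\pol{\R}$ and well-namedness then yield a single directed tree rooted at $\bbox^X$.

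Your caveat on item 2 is correct, and the paper's proof never examines the leaves: a $\weak$-node on $\Xn$ cut against the main conclusion of $\bbox^X$ (how marginalization is encoded, e.g.\ for $E$ in $\R_D$) produces a leaf that is neither a positive conclusion nor a box input. So item 2 holds only if weakening premises are added to the list of possible leaves, or under $\Red[\rules]\cup\Red[\weak]$-normality as you suggest.
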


\subsection{Properties of directed paths}
We will use some technical properties of directed paths that we collect below.

By Polarized Correctness (\Cref{lem:pol_correct}), the following holds \todocf{A direct proof is easy to give, and probably is better to do so}
\begin{proposition}[directed cycles]\label{lem:cycles} Let $\R$  be an atomic  
  net.
  $\R$ has a switching cycle iff $\R$  has a directed cycle (following the orientation of the edges).
\end{proposition}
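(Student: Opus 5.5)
The plan is to prove both directions by reducing the correctness condition (point~\ref{item:pn:3} of \Cref{def:pn}, i.e.\ acyclicity of every switching graph) to the orientation $\pol{\R}$. This is essentially \Cref{lem:pol_correct}, made explicit. Since $\R$ is atomic, every node is a $\boxn$, $\ax$, $\cut$, $\cn$ or $\weak$ node (there are no $\otimes$, $\parr$, $\One$, $\Bottom$ nodes on atomic edges), and every edge carries $X^+$ or $X^-$. The first step is to tabulate, for each node kind, its in- and out-edges in $\pol{\R}$:
\begin{itemize}
\item a $\boxn$ has all $Y_i^-$ premises and the $X^+$ conclusion outgoing-free except one: it has exactly one outgoing edge in the downward/upward convention (the $X^+$ conclusion goes down), and the negative ones enter it;
\item an $\ax$ has one incoming edge (its $X^-$) and one outgoing edge ($X^+$);
\item a $\cut$ has one incoming and one outgoing edge;
\item a $\weak$ has only an incoming edge;
\item a $\cn$ has one incoming edge (its $X^-$ conclusion, pointing up into it) and two outgoing edges (its premises, pointing up).
\end{itemize}
In particular $\cn$ is the only node with out-degree $>1$, and its out-edges are exactly its two premises.

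For ($\Leftarrow$), a directed cycle in $\pol{\R}$ uses, at each node it traverses, one in-edge and one out-edge. Hence at a $\cn$ node it uses one premise together with the conclusion, never two premises. So it is a cycle not using two premises of the same $\cn$ (and there are no $\parr$ nodes), violating correctness. Contrapositively, correct implies no directed cycle, i.e.\ $\pol{\R}$ is a DAG.

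For ($\Rightarrow$), suppose an undirected cycle $C$ in $\R$ never uses two premises of the same $\cn$. I will show $C$ is directed in $\pol{\R}$, so that a DAG admits none. At each node $v$ on $C$, the cycle uses two distinct edges of $v$. If $v$ has out-degree $\le 1$, at most one of them is outgoing, so $C$ enters $v$ at least once. If $v$ is a $\cn$, the two edges are not both premises, so one is the conclusion (incoming) and one a premise (outgoing). Now count: the cycle has $k$ edges and $k$ nodes, each edge contributing exactly one head. A sink-pattern (two heads at one node) would force a source-pattern (two tails) at another node. Such a source needs out-degree $\ge 2$, hence a $\cn$ with both premises used, which is excluded. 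So every node on $C$ has exactly one head and one tail, and $C$ is a directed cycle.

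The only real obstacle is getting the orientation table right, in particular checking that $\boxn$ nodes have a single out-edge (their unique positive conclusion) and that a $\cn$ on a negative atom points its two premises outward. Once the table is settled, the counting argument is immediate.
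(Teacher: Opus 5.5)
Your proof is correct and follows the paper's route. The paper derives the proposition immediately from \Cref{lem:pol_correct} (an atomic typed graph is correct, i.e.\ has no switching cycle, iff $\pol{\R}$ is a DAG), and your orientation table plus head/tail counting argument is a self-contained proof of that lemma for atomic nets; the only slip is in your $\boxn$ bullet, where the $Y_i^-$ edges are conclusions of the box rather than premises, though you orient them correctly (incoming).
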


	\begin{lemma}[directed paths]\label{lem:paths}  Let $\R$  be an atomic net (not necessarily correct). 
		\begin{itemize}
			\item An $X$-path with source  $\bbox^X$ cannot enter a cycle.
			\item Given an $X$-path, only its source and target nodes may  be boxes.
			\item   Any two consecutive edges $\xrightarrow{e_1} \xrightarrow{e_2}$ have different names iff the common vertex is a box. In this case, $e_2$ is the main conclusion  of the box. 
		\end{itemize}
	\end{lemma}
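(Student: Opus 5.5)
The plan is to prove the three items of \Cref{lem:paths} by a local inspection of the grammar of nodes (\Cref{fig:pn}) under the polarized orientation. By an $X$-path we mean a directed path all of whose edges are labelled by $X$. In an atomic net the only nodes are $\boxn$, $\ax$, $\cut$, $\cn$ and $\weak$; $\One$ and $\Bottom$ are excluded, since their edges are not atoms.

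We first tabulate, for each node, its incoming and outgoing edges in $\pol{\R}$, together with their names:
\begin{itemize}
\item $\ax$ on $X$: the conclusion $\Xn$ points into it and the conclusion $\X$ points out.
\item $\cut$ on $X$: the premise $\X$ points in and the premise $\Xn$ points out.
\item $\cn$ on $X$: the conclusion $\Xn$ points in and both premises $\Xn$ point out.
\item $\weak$: its conclusion $\Xn$ points in, and it has no outgoing edge, so it is a sink.
\item $\bbox^X$: the conclusions $\Yn_i$ point in and the main conclusion $\X$ points out, with $X\neq Y_i$.
\end{itemize}
The third item follows immediately from this table. At $\ax$, $\cut$ and $\cn$, every incident edge carries the same name. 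At a box, every incoming edge has a name $Y_i\neq X$ and the unique outgoing edge is the main conclusion $\X$. Hence two consecutive edges $\xrightarrow{e_1}\xrightarrow{e_2}$ change name exactly when the middle vertex is a box, and in that case $e_2$ is the main conclusion.

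The second item is a corollary of the third. An inner vertex of an $X$-path is entered and left by $X$-edges. If that vertex were a box, the edge leaving it would be its main conclusion, named $X$, while the edge entering it would be some $\Yn_i$ with $Y_i\neq X$. This contradicts the fact that all edges of the path are named $X$. So only the source and target can be boxes.

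For the first item, the key observation from the table is that every non-box node has at most one incoming edge. Note that this uses only the typing, not correctness, since the net need not be correct. Suppose an $X$-path starting at $\bbox^X$ enters a directed cycle $C$. Let $v$ be the first vertex of the path lying on $C$, and let $e$ be the edge of the path entering $v$. If $v$ were the source box, the path would not be entering the cycle. Otherwise $v$ has two incoming edges, $e$ and the edge of $C$ entering $v$. These must be the same edge, since a non-box node has at most one incoming edge. Then the predecessor of $v$ on the path lies on $C$, contradicting the minimality of $v$.

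The only remaining case is that $v$ is a box. By the second item, $v$ must then be the target of the path, and $e$ is some $\Xn$ entering $v$. We must show that $v$ cannot be $\bbox^X$ itself, which would close a cycle. This holds because $\bbox^X$ has no $\Xn$ input, since $X\neq Y_i$. If $v$ is another box, the path merely ends there, and we read ``enter a cycle'' as traversing cycle edges. This boundary case, pinning down exactly what ``entering'' means and dealing with the box at the endpoint, is the only delicate step. The rest is bookkeeping from the grammar.
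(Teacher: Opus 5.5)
Your proof is correct. The paper states this lemma without proof. Your key observation is that in $\pol{\R}$ every non-box node ($\ax$, $\cut$, $\cn$, $\weak$) has at most one incoming edge, which is a fact about typing alone. This is the same grammar-inspection argument the paper gives for \cref{lem:jointree} (``each edge has at most one parent with the same name''), and items 2 and 3 follow from your table as you say.

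You were right to be careful with item 1. The literal reading ``never meets a directed cycle'' is false for incorrect nets. Take $\bbox^X$ with input $\Yn$ and $\bbox^Y$ with input $\Xn$, and cut each box's main conclusion against the other's input; the $X$-path from $\bbox^X$ then lies on the resulting cycle. So the real content is the dichotomy you prove: the first vertex of the path on a cycle is either its source or its target box.

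It would be cleaner to state that dichotomy as the result than to fix a meaning of ``enter''. It immediately gives that any cycle containing an edge of the path contains $\bbox^X$, which is what the artifact-criterion proof needs.
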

	
	As a consequence: 
	\begin{corollary}[boxes cycle]\label{lem:boxes_cycle} Let  $\R$  be an atomic  
		net with a directed cycle $r$. If $r$ visits at least one  box, then $r$	necessarily consists in a concatenation of $k>1$ directed  paths $r^{X_1} r^{X_2} ... r^{X_k}$  where each  $r^{X_i}$  is an $X_i$-path with source $\bbox^{X_i}$ and target $\bbox^{X_{i+1}}$ (where $X_{k+1} =X_1$).  
	\end{corollary}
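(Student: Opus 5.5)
The statement to prove is \Cref{lem:boxes_cycle}. I will argue directly from \Cref{lem:paths}. Let $r$ be a directed cycle in the atomic net $\R$ that visits at least one box.

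\textbf{Cutting $r$ into maximal one-name segments.} Traverse $r$ starting just after a visited box, and cut it at every vertex where two consecutive edges carry different names. By the third item of \Cref{lem:paths}, every such cut vertex is a box, and the outgoing edge there is the main conclusion of that box. Conversely, if $r$ passes through a box $\bbox^{X}$, it enters through an input labelled $Y_i$ and leaves through the main conclusion labelled $X$. The side condition $X \neq Y_i$ on boxes then forces a name change at that vertex. So the cut points are exactly the boxes visited by $r$. Since at least one box is visited, this writes $r$ as a cyclic concatenation $r^{X_1}\cdots r^{X_k}$ with $k \geq 1$, where each $r^{X_i}$ is a single-name path.

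\textbf{Identifying sources and targets.} Each segment $r^{X_i}$ begins at a box vertex and starts with that box's main conclusion. Since all its edges are labelled $X_i$, that box is $\bbox^{X_i}$. The segment ends at the next cut vertex, which is the box where the next segment begins, namely $\bbox^{X_{i+1}}$ (indices taken mod $k$). The second item of \Cref{lem:paths} confirms that no box occurs strictly inside a segment. This matches the description in the claim.

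\textbf{Ruling out $k=1$.} This is the only delicate step. If $k=1$, then $r$ is a single $X_1$-path whose source is $\bbox^{X_1}$ and which returns to $\bbox^{X_1}$. That is impossible, and there are two ways to see it:
\begin{itemize}
\item the first item of \Cref{lem:paths} says an $X$-path with source $\bbox^{X}$ cannot enter a cycle;
\item more concretely, returning to $\bbox^{X_1}$ along the segment means entering it through an input edge labelled $X_1$. The box side condition forbids any input of $\bbox^{X_1}$ to be labelled $X_1$.
\end{itemize}
Hence $k>1$, which completes the argument.

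The remaining point to check is that cutting exactly at boxes is consistent with the cyclic indexing. This follows because each box vertex on $r$ has exactly one incoming and one outgoing edge of $r$, so $r$ alternates cleanly between boxes and single-name segments.
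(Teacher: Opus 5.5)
Your proof is correct and follows the paper's route. The paper states this corollary as an immediate consequence of \Cref{lem:paths} with no further detail, and your argument is that derivation: you cut $r$ at its name changes, which occur exactly at the visited boxes (entered through an input, left through the main conclusion), and you read off sources and targets. To rule out $k=1$, rely on your second argument via the side condition $X\neq Y_i$: the first item of \Cref{lem:paths} must be read as forbidding only cycles made of $X$-edges, since read literally it would contradict the corollary itself.
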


	
	\subsection{Artifact criterion for  atomic \BPNs}

	\begin{definition}[Bayesian proof-net (\BPN)]
	A \pn $\R:\Delta$    is 	 Bayesian if 
		\begin{enumerate}
			\item  all the atoms in the conclusion $\Delta$ are positive, or
			\item $\R$ is sub-net of a \pn which satisfies (1).
		\end{enumerate}
	\end{definition}

%
%
	
	\begin{remark}\label{rem:neg1}
	Observe that if no negative conclusions of $\R$ has a name which is also the main name of a box,  then $\R$  is  a \BPN. Indeed, we can always obtain a positive \pn of which  $\R$ is a sub-net. If all conclusions of $\R$ are atomic, we  perform the following: 
	\begin{itemize}
	
		\item  	collect  together all the negative conclusions $e_1:\Xn, \dots, e_k:\Xn$ labelled by the same $\Xn$ as premises of a contraction of single conclusion   $e:\Xn$.
		\item  add a box $\bbox^X$ of single conclusion $f:\X$,  and a cut between $f:\X$ and $e:\Xn$.
	\end{itemize}
If $\R$ has any non-atomic conclusion $F$, we first cut every such  $F$ with the expansion of $F^\bot$ , then proceed as above.
	\end{remark}

\todocf{It may be worth (or not) to define the atomization of a conclusion}
	
	\begin{definition}[Artifact]
	Given an  atomic  \pn   $\R$ of  conclusions $\Delta$,  its artifact closure $\art\R$ is defined by performing the following for  
	each name $X$ such that  $X^-\in {\Delta}$    and  $\bbox^X$ belongs to $\R$:
	
\begin{itemize}
	\item 	contract together all the negative conclusions $e_1:\Xn, \dots, e_k:\Xn$ labelled by $\Xn$.  Let  $e:\Xn$ be the    conclusion of the contraction.
	\item if there exists (at least) one  positive conclusion $f:\X\in \Delta$ which is connected to $\bbox^X$ by  an $X$-path, 
	add a cut between  $e:\Xn$ and $f:\X$. Otherwise,  $\art\R$ is not defined. 

\end{itemize}
$\R$	satisfies the artifact test if  $\art\R$ is defined, and it is a \pn.
	\end{definition}

\begin{proposition} An atomic \pn $\R$   satisfies the artifact test \textit{ if and only if } $\R$ is a \BPN. 
\end{proposition}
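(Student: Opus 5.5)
We prove the two implications separately. In both directions we use Polarized correctness (\Cref{lem:pol_correct}) and \Cref{lem:cycles}, so that correctness of an atomic net amounts to the absence of directed cycles in $\pol{\cdot}$.

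\emph{($\Leftarrow$) $\R$ is a \BPN, so $\art\R$ is defined and correct.} By definition $\R$ is a sub-net of a positive \pn $\N$. We may assume $\N$ is normal and atomic around $\R$, using \cref{fact:canonical_dec}.

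First we show $\art\R$ is defined. Let $\Xn$ be a negative conclusion of $\R$ with $\bbox^X\in\R$. By \cref{lem:jointree} applied to $\N$, the $X$-edges of $\N$ form a directed tree rooted at the main conclusion of $\bbox^X$. The edge labelled $\Xn$ lies in this tree, so there is a directed $X$-path from $\bbox^X$ to it. This path starts inside $\R$ and reaches the $\Xn$-conclusion of $\R$. At some point it must leave $\R$, and it does so through a conclusion of $\R$. Since the path follows the polarized orientation, the first exit point is a positive conclusion $f:\X$ of $\R$. The part of the path inside $\R$ is an $X$-path from $\bbox^X$ to $f$, as the definition of $\art\R$ requires.

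Next we show $\art\R$ is correct. Suppose $\pol{\art\R}$ had a directed cycle. This cycle must use some of the new cuts/contractions, since $\R$ itself is correct. Each new piece, going from $f:\X$ through the cut and contraction to the $\Xn$-conclusions, can be replaced by the corresponding directed path in $\N$ outside $\R$. That path goes from $f$ to the $\Xn$ conclusion $e_i$ actually traversed. This is where we need the tree structure: in $\N$ the tree rooted at $\bbox^X$ contains both $f$ and every $e_i$. So the $e_i$ are reached from $\bbox^X$, though possibly not from $f$ itself. If it is not reached from $f$, we reroute: we go back to $\bbox^X$ inside $\R$ and take the path in $\N$. \Cref{lem:boxes_cycle} shows that a cycle through boxes decomposes into $X_i$-paths between boxes. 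So it suffices to have some directed path in $\N$ from $\bbox^X$ to $e_i$, and this exists. The result is a directed cycle in $\pol{\N}$, contradicting the correctness of $\N$.

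\emph{($\Rightarrow$) $\art\R$ is a \pn, so $\R$ is a \BPN.} In $\art\R$, every negative conclusion whose name is the main name of a box of $\art\R$ has been cut away. All such names come from boxes of $\R$, since no boxes were added. The negative conclusions that remain have names that are not main names. By \cref{rem:neg1}, $\art\R$ is therefore a sub-net of a positive \pn $\N$.

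It remains to check that $\R$ is a sub-net of $\N$. $\R$ is a sub-graph of $\art\R$ by construction, and it is correct: any sub-graph of a correct graph is correct, since cycles are inherited. So $\R$ is a sub-net of $\N$, hence a \BPN.

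\emph{Main obstacle.} The delicate step is the correctness part of ($\Leftarrow$). There, we must argue that the added contraction and cut create no cycle, even though the chosen $f$ need not lie on the path to every $e_i$. I would handle this first through the rerouting via $\bbox^X$ and \cref{lem:boxes_cycle}. If rerouting is problematic, I would check whether choosing $f$ on the tree path towards the $e_i$ suffices.
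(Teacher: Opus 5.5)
Your proposal is correct and follows the paper's proof. The direction from the artifact test to being Bayesian is the same appeal to \Cref{rem:neg1}, and the correctness half of the converse is the same device: decompose a directed cycle of $\pol{\art{\R}}$ into $X_i$-paths between boxes (\Cref{lem:boxes_cycle}) and replace them by the paths that \Cref{lem:jointree} provides in $\N$.

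Your first-exit argument showing that $\art{\R}$ is actually defined is a correct addition; the paper's proof leaves it implicit. Two points should be tightened. First, for the rerouting, state explicitly, as the paper does, why the cycle passes through $\bbox^X$ just before reaching $f$: non-box nodes have at most one incoming edge in the polarized orientation, and $f$ is reached from $\bbox^X$ by an $X$-path. Second, ``assuming $\N$ normal'' is not innocent, since normalization can destroy $\R$ as a sub-net; the paper, however, makes the same tacit use of \Cref{lem:jointree} on $\netN$.
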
	
	
	\begin{proof}
\textit{Only if.} By hypothesis,  $\art{\R}$ is a \pn. For all its negative conclusions, we proceed as in \Cref{rem:neg1}, obtaining a positive \pn which includes $\R$.

\textit{If.} Assume that there exists a positive \pn $\netN$ of which $\R$ is a subnet. We prove that if $\art{\R}$ has a switching  cycle, so has $\netN$, contradicting the fact that it is a \pn.

By \Cref{lem:cycles},  $\art{\R}$ contains a  directed cycle $r$, which (since $\R$ is a \pn) necessarily uses one of the artifact cuts. Let $f:\X$ be the positive edge  of such a cut. By construction, there is an $X$-path from $\bbox^X$ to $f:\X$. Hence,  by  \Cref{lem:boxes_cycle}, the directed cycle visits $k>1$ boxes. Assume that the order of visit is $\mathcal C= \bbox^{X_1} \rightarrow \dots, \rightarrow \bbox^{X_k} \rightarrow\bbox^{X_1}$. Observe that each $\bbox$ in the cycle has a negative edge with the same name 
as the previously visited box. 

 In $\netN$, we have the same boxes.  Observe that  for every  pair in $\mathcal C$  of consecutive boxes  $\bbox^X \rightarrow \bbox^Y$, by the jointree property (\Cref{lem:jointree}) in $\netN$ there is an $X$-path connecting   the negative conclusion $e:\Xn$ of $\bbox^Y$ with   $\bbox^X$. The concatenation of all such paths gives a directed  cycle in $\netN$.
	\end{proof}

%
%




\end{document}